\newtheorem{theorem}{Theorem}[section]  
\newtheorem{remark}{Remark}[theorem]  
\newtheorem{definition}[theorem]{Definition}  
\newtheorem{claim}[theorem]{Claim}  
\newtheorem{lemma}[theorem]{Lemma}
\newtheorem{corollary}[theorem]{Corollary}
\newtheorem{problem}[theorem]{Problem}  
\newcommand{\Eq}[1]{Eq.~(\ref{#1})}
\newcommand{\Fig}[1]{Fig.~\ref{#1}}
\newcommand{\Def}[1]{Definition~\ref{#1}}
\newcommand{\Lem}[1]{Lemma~\ref{#1}}
\newcommand{\Cor}[1]{Corollary~\ref{#1}}
\newcommand{\Sec}[1]{Sec.~\ref{#1}}
\newcommand{\cRef}[1]{Ref.~\cite{#1}}
\newcommand{\cc}[1]{~\cite{#1}}
\newcommand{\Thm}[1]{Theorem~\ref{#1}}
\newcommand{\App}[1]{Appendix~\ref{#1}}
\newcommand{\qedsymb}{\hfill{\rule{2mm}{2mm}}}  
\newenvironment{proof}[1][]{\begin{trivlist}  
\item[\hspace{\labelsep}{\bf\noindent Proof#1:\/}] 
 }{\qedsymb\end{trivlist}}
\newcommand{\ignore}[1]{}
\newcommand{\norm}[1]{{\| #1 \|}}  
\newcommand{\ket}[1]{{ |{#1} \rangle }}  
\newcommand{\bra}[1]{{ \langle {#1} | }}
\newcommand{\av}[1]{{ \langle {#1} \rangle }}
\newcommand{\braket}[2]{{ \langle {#1} | {#2} \rangle}}
\newcommand{\ketbra}[2]{{ |{#1} \rangle\langle {#2} | }}
\newcommand{\suppress}[1]{}
\newcommand{\EqDef}{\stackrel{\mathrm{def}}{=}}
\newcommand{\Id}{\mathbbm{1}}
\DeclareMathOperator{\Prob}{Prob}
\DeclareMathOperator*{\Tr}{Tr}
\DeclareMathOperator{\poly}{poly}
\DeclareMathOperator*{\minarg}{minarg}
\newcommand{\ualpha}{{\underline{\alpha}}}
\newcommand{\ubeta}{{\underline{\beta}}}
\newcommand{\mcM}{{\mathcal{M}}}
\newcommand{\BBR}{{\mathbb{R}}}
\newcommand{\BBC}{{\mathbb{C}}}
\newcommand{\BBH}{\mathbbm{H}}
\newcommand{\Sqq}{{\mathbb{S}^{(k)}_{qq}}}
\newcommand{\SQ}{{\mathbb{S}_q}}
\newcommand{\eps}{\epsilon}
\newcommand{\gs}{\Omega}
\newcommand{\Xlow}{X_\mathrm{low}}
\DeclareMathOperator*{\supp}{supp}
\newcommand{\sym}{{\mathrm{sym}}}
\newcommand{\LH}{\mathsf{LH}}
\newcommand{\qqLH}{\mathsf{qqLH}}
\begin{document}

\title{Quasi-quantum states and the quasi-quantum PCP theorem}
\author[1]{Itai Arad\footnote{{Email:
 \texttt{arad.itai@fastmail.com}}}}
\author[1,2]{Miklos Santha}
\affil[1]{Centre for Quantum Technologies, Singapore}
\affil[2]{CNRS, IRIF, Universit\'e Paris Cit\'e, France}
\date{\today}
\maketitle

\begin{abstract}
  
  We introduce $k$-local quasi-quantum states: a superset of the
  regular quantum states, defined by relaxing the positivity
  constraint. We show that a $k$-local quasi-quantum state on $n$
  qubits can be 1-1 mapped to a \emph{distribution of assignments}
  over $n$ variables with an alphabet of size $4$, which is subject
  to non-linear constraints over its $k$-local marginals. Therefore,
  solving the $k$-local Hamiltonian over the quasi-quantum states is
  equivalent to optimizing a distribution of assignments over a
  classical $k$-local CSP.  We show that this optimization problem
  is essentially classical by proving it is NP-complete. Crucially,
  just as ordinary quantum states, these distributions lack a simple
  tensor-product structure and are therefore not determined
  straightforwardly by their local marginals. Consequently, our
  classical optimization problem shares some unique aspects of
  Hamiltonian complexity: it lacks an easy search-to-decision
  reduction, and it is not clear that its 1D version can be solved
  with dynamical programming (i.e., it could remain NP-hard).
  
  Our main result is a PCP theorem for the $k$-local Hamiltonian
  over the quasi-quantum states in the form of a
  hardness-of-approximation result. The proof suggests the existence
  of a subtle promise-gap amplification procedure in a model that
  shares many similarities with the quantum local Hamiltonian
  problem, thereby providing insights on the quantum PCP conjecture.
  
\end{abstract}

\section{Introduction}

\subsection{Quantum vs. classical constraint satisfaction problems}

The fields of classical constraint satisfaction problems (CSPs) and
Quantum Hamiltonian complexity share many similarities. In both
fields, a central problem is the minimization of a global cost
function that is the sum of local constraints. In the classical
world, this is the well-known MAX-$k$-SAT problem, whereas in
Hamiltonian complexity this is the $k$-local Hamiltonian ($k$-LH)
problem. As a decision problem, MAX-$k$-SAT is known to be
NP-complete\cc{ref:Garey1976-NP}, and similarly in the quantum world
the $k$-LH problem is QMA-complete\cc{ref:Kitaev1999-LH,
ref:Kitaev2002-QI}.  Furthermore, in both cases we can use
reductions to identify restricted instances of problems that are
still as hard as the general problem. For example, in both cases the
problem remains hard when $k=2$\cc{ref:Garey1976-NP,
ref:Oliveira2008-2D}. On the other hand, in both problems, one can
exploit the restricted structure of the $k=2$ case when we are asked
to decide whether all constraints can be satisfied or not and show
that in such case the the problem becomes easy. This is the 2-SAT
problem in the classical case, and the 2-local frustration-free
problem in the quantum case, both of which are in
P\cc{ref:Davis1962-2SAT, ref:Bravyi2006-2QSAT}.  Moreover, in both
cases this problem becomes hard for $k=3$: the classical $3$-SAT
problem is NP-hard \cc{ref:Karp72-NP}, the frustration-free $k=3$
local Hamiltonian problem is QMA-hard\footnote{To be more precise,
in such case the problem is known to be $\mathrm{QMA}_1$-hard, which
is a variant of QMA with 1-sided error --- see
\cc{ref:Bravyi2006-2QSAT} for details.}\cc{ref:Gosset2013-3QSAT}.

Nevertheless, there are some striking differences between the
quantum and the classical cases. Below, we list some of the central
ones.
\begin{description}
  \item [1. Circuit-to-CSP reduction:] 
    In the reduction that shows that the 3-SAT problem is NP-hard,
    one follows $T$ computational steps of a classical circuit over
    $n$ bits and turns them into a local CSP.  Every time-step $t$
    is identified with a fresh set of $n$ bits, and the CSP is made
    of local consistency conditions between the bits of step $t$ to
    the bits of step $t+1$. The entire CSP is therefore defined over
    $O(nT)$ bits.
    
    In the quantum case, it is not clear how to check \emph{local}
    consistency between two consecutive time steps $\ket{\psi_t}$
    and $\ket{\psi_{t+1}}$ due to the entanglement within each time
    step. Inspired by the pioneering work of
    Feynman\cc{ref:Feynman1986-QC}, Kitaev overcame this problem
    using the `Hamiltonian Clock' construction\cc{ref:Kitaev1999-LH,
    ref:Kitaev2002-QI} (but see also
    \cRef{ref:Anshu2023-Cir-to-H} for an alternative reduction).
    There, the entire history of the quantum circuit is stored as an
    entangled history state $\ket{\phi} =
    \frac{1}{\sqrt{T}}\sum_{t=1}^T \ket{\psi_t}\otimes\ket{t}$ over
    a \emph{single} set of $n$ qubits, together with $O(1)$ clock
    qubits. All together, the resultant local Hamiltonian contains
    $n + O(1)$ qubits.
    
  \item [2. CSPs in 1D:]
    Given a $k$-local CSP over bits that are arranged on a line, if
    the constraints involve only geometrically contiguous sets of
    bits, e.g., $i, i+1, \ldots, i+k-1$, then both the $k$-SAT and
    the MAX-$k$-SAT problems can be solved efficiently using
    dynamical programming. On the other hand, the equivalent quantum
    problem of solving the 1D $k$-local Hamiltonian problem is known
    to remain QMA-hard\cc{ref:Aharonov2009-1D, ref:Nagaj2008-PhD,
    ref:Hallgren2013-1D}.
    
  \item [3. Search-to-decision reduction:]
    In classical CSP problems, one can naturally reduce the search
    problem of finding an optimal assignment to the decision problem
    of whether or not a satisfying assignment exists. One simply
    builds the satisfying assignment bit-by-bit, by conditioning the
    CSP on the previously discovered bits. In the quantum case,
    there does not seem to be such a natural reduction that reduces
    the promise problem of a $k$-local Hamiltonian (i.e., whether
    the ground energy is smaller than some constant $a$ or is larger
    than some other constant $b$), and turns that into a low-energy
    state. In fact, it was shown in \cc{ref:Irani2021-S-to-C} that
    such reduction does not exist relative to a particular quantum
    oracle (but see \cc{ref:Weggemans2024-S-to-C} for a related
    problem for which such reduction exists).
    
  \item [4. The PCP theorem and the qPCP conjecture:]
    Whereas for the celebrated PCP theorem there exist several
    proofs\cc{ref:Arora1998-PCP1, ref:Arora1998-PCP2,
    ref:Dinur2007-PCP}, the analogous statement for quantum local
    Hamiltonian is only a conjecture\cc{ref:Aharonov2013-qPCP} (but
    see, for example, \cRef{ref:Buherman2024:qPCP} and references
    within for recent results about this conjecture). We do not know
    if deciding whether the ground energy of a local quantum
    Hamiltonian is below $a$ or above $b$, when $b-a$ is
    proportional to the size of the system, is still QMA hard.
\end{description}

Arguably, the source of these differences lies in the fact that
while a classical assignment has a simple tensor-product structure
and is straightforwardly determined by its local marginals, this is
not the case for quantum states. There can be two orthogonal quantum
states that share the same set of marginals (reduced density
matrices). Moreover, given a set of consistent local marginals of a
classical assignment which cover the entire set of bits, it is easy
to construct the full assignment.  We shall refer to this property
as the ``bottom-up'' property. Quantum states do not possess this
property. In the quantum world, it is even QMA-hard to decide if a
set of locally-consistent local-marginals is also consistent with a
global quantum state\cc{ref:Liu2006-RDMs, ref:Broadbent2022-RDMs}.
In that sense, a quantum state might be more like a
\emph{distribution} of classical assignments, which also lacks the
``bottom-up'' property.

Following the above discussion, in this work we introduce an
optimization problem of local constraints over a convex subset of
distributions of assignments. We show that this is an NP-complete
problem, which can therefore be viewed as a classical problem.
{Nonetheless and importantly}, the distributions we optimize over
lack the bottom-up property, which leads to many similarities with
the quantum $k$-local Hamiltonian problem. For example, there does
not seem to be a straightforward way to use dynamical programming to
solve its 1D version, and {the problem} lacks a simple
search-to-decision reduction. Nevertheless, we do manage to prove a
PCP theorem as a hardness-of-approximation statement. We show that
our optimization problem remains NP-hard also when asked to
approximate the minimal expected number of violations up to a
constant factor of the system size. Our proof may therefore give
valuable insights in the quantum PCP conjecture and the possibility
of promise gap amplification in local Hamiltonians.

\subsection{Quasi-quantum states}

While our problem can be viewed as an optimization problem over
distributions of classical assignments, it is formally defined as
the usual $k$-local Hamiltonian problem solved over a
\emph{superset} of the (mixed) quantum states. We call these
operators \emph{$k$-local quasi-quantum states (qq states)}, where
$k$ is some fixed locality parameter that will always be taken to be
identical to the locality of the Hamiltonian at hand.
Mathematically, $k$-local qq states are a convex set of multiqubit
operators, obtained by relaxing the positivity constraint of the
usual quantum states. The exact definition of these operators in
given in \Sec{sec:qq-states}. Here we provide an informal
definition.

Recall the definition of a general quantum state over $n$ qubits. It
is a Hermitian operator $\rho$ with unit trace and a global
positivity constraint $\rho\succeq 0$. The $k$-local qq states are
defined by relaxing the positivity constraint.  The definition uses
a fixed single-qubit symmetric, informationally complete POVM
(SIC-POVM) measurement, which we denote by $\{F_0, F_1, F_2, F_3\}$
(see \Sec{sec:SIC-POVM} for full discussion). Given a single-qubit
quantum state $\rho$, we can measure it using the SIC-POVM
measurement and obtain result $\alpha\in \{0,1,2,3\}$ with
probability $\Prob(\alpha) = \Tr(F_\alpha \rho)$. This establishes a
linear map from the space of single-qubit quantum states to a
probability distribution over an alphabet of size $4$. As our
measurement is informationally-complete, the map $\rho \mapsto
\Prob(\alpha)$ is invertible; all the information of the operator
$\rho$ can be recovered from the probability distribution
$\Prob(\alpha)$. 

Taking tensor products of the SIC-POVM, we can naturally extend our
SIC-POVM to a multiqubit setup, resulting in a SIC-POVM measurement
with $4^n$ elements of the form $F_\ualpha \EqDef
F_{\alpha_1}\otimes\ldots \otimes F_{\alpha_n}$, where
$\ualpha\EqDef (\alpha_1, \ldots, \alpha_n)$. As in the single-qubit
case, the measurement probability $\Prob(\ualpha) \EqDef \Tr(\rho
F_\ualpha)$ establishes an invertible map between the space of
$n$-qubits states and multivariate distributions $\mu(\ualpha)$ over
the alphabet $\{0,1,2,3\}^n$.
 
With a multiqubit SIC-POVM in our hands, we are now ready to define
the $k$-local qq states. We say that an $n$-qubits operator $X$ is a
$k$-local qq state if $\Tr(X)=1$, $X=X^\dagger$ and in addition it
satisfies the following positivity requirements: 
\begin{description}
  \item[1. Global positivity:] $\Tr(X F_\ualpha)\ge 0$ for 
    every SIC-POVM element $F_\ualpha =
    F_{\alpha_1}\otimes\ldots\otimes F_{\alpha_n}$
    
  \item[2. Local positivity:] For every subset of $k$ 
    qubits $I=(i_1, \ldots, i_k)$, the reduced operator $X^{(I)}
    \EqDef \Tr_{\setminus I}X$ (i.e., the operator obtained by
    tracing out all qubits except for those in $I$) satisfies
    $X^{(I)}\succeq 0$.
\end{description}
A rough physical interpretation of this definition is that we can
measure a qq state \emph{globally} using our fixed SIC-POVM
measurement, and that \emph{locally}, our state looks just like a
regular quantum state (i.e., it is non-negative). Mathematically,
the global positivity requirement means that there is a 1-to-1
mapping between $X$ and a probability distribution
$\mu(\ualpha)=\Tr(XF_\ualpha)$. The local positivity condition
restricts the subset of distributions; only distributions
$\mu(\ualpha)$ that give rise to operators $X$ such that
$X^{(I)}\succeq 0$ for every $k$-local subset $I=(i_1, \ldots, i_k)$
are valid. As we show in \Sec{sec:qq-states}, this condition can be
phrased as a non-linear condition on the \emph{marginal} of
$\mu(\ualpha)$ on the subset $I$.

With some abuse of notation, we can pictorially view the $k$-local
qq states as a convex superset of the set of quantum states, and as
a subset of the convex polytope of all possible probability
distributions over the alphabet $\{0,1,2,3\}^n$, as shown in
\Fig{fig:qq-states}.
\begin{figure}
\centering 
  \includegraphics[scale=0.2]{qq-states.png}  
  \caption{A schematic representation of the relations
    between quantum states, $k$-local qq states and general
    probability distributions. The $k$-local qq states are a
    superset of the quantum states, and are a convex subset of the
    polytope of all possible probability distributions of
      assignments over the alphabet $\{0,1,2,3\}^n$. }
  \label{fig:qq-states}
\end{figure}

The above picture motivates us to define our optimization problem as
the usual $k$-local Hamiltonian problem, but this time optimized
over the set of $k$-local qq states. For every subset of qubits
$I=(i_1, \ldots, i_k)$, the reduced operator $X^{(I)}=\Tr_{\setminus
I}X$ is determined by the marginal $\mu^{(I)}(\alpha_{i_1}, \ldots,
\alpha_{i_k})$ of the corresponding distribution. It follows that
our optimization problem is equivalent to minimizing the
\emph{expected} violation number of a set of $k$-local constraints
over the subset of allowed distributions. By construction, the
minimum is a lower bound to the actual quantum ground energy, which
is the minimal energy over all quantum states.

Note that the subset of distributions that corresponds to the set of
$k$-local qq states is a convex set that is \emph{strictly smaller}
than the set of all possible distributions. This is crucial for the
lack of the bottom-up property. Indeed, since we are looking for a
distribution that minimizes the \emph{expected} number of
violations, which is a linear function of the distribution, the
optimum will be achieved by an extreme point of the convex set of
distributions. Had we used the full probability polytope, its
extreme points would be pure assignments, i.e., sharp distributions,
and we will be back to the usual CSP problem. Our model is therefore
defined so that on one hand the extreme points of the subset of
distributions are \emph{not} sharp distributions, and on the other
hand, one can verify that a given distribution is indeed within our
subset by looking at its local marginals.

Let us now give an informal statement of the quasi-quantum PCP
theorem. We consider $k$-local Hamiltonians over $n$ qubits of the
from $H=\sum_i h_i$, where every $h_i$ is a Hermitian operator with
$O(\poly(n))$ norm, which is acting non-trivially on at most $k$
qubits. The quasi-quantum ground energy of $H$ is defined as
$\eps_0^{qq} \EqDef \min_{X\in \Sqq} \Tr(XH)$, where $\Sqq$ denotes
the set of all $k$-local qq operators. Clearly $\eps_0^{qq}$ is a
\emph{lower bound} to the actual ground energy of $H$, as the qq
states are superset of the quantum states. 

\begin{theorem}
\label{thm:qqpcp}
  Consider the following problem. We are given a $3$-local
  Hamiltonian $H=\sum_i h_i$ together with two numbers $a<b$ and a
  promise that either $\eps_0^{qq}\le a$ ({\rm YES} case) or
  $\eps_0^{qq}\ge b$ ({\rm NO} case), where $\eps_0^{qq}$ is
  the quasi-quantum ground energy over $3$-local quasi-quantum
  states. We are asked to decide between these cases. Then this
  problem is {\rm NP}-complete for a promise gap $b-a$ going from
  $b-a=e^{-O(n)}L_H$ up to $b-a=\Theta(L_H)$, where $L_H$ is the
  global energy scale $L_H\EqDef \sum_i \norm{h_i}$.
\end{theorem}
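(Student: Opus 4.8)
The plan is to prove the two halves of the NP-completeness separately. Membership in NP is the easy half; NP-hardness I would establish first at an inverse-polynomial promise gap via a clock/tableau reduction, and then amplify the gap up to $\Theta(L_H)$.

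\emph{Membership.} The objective $\Tr(XH)=\sum_i\Tr(X^{(I_i)}h_i)$ depends on the qq state $X$ only through its $3$-local reduced operators, which are linear images of the $3$-local marginals of the associated distribution $\mu(\ualpha)=\Tr(XF_\ualpha)$, and the space of all $3$-local marginals has dimension $\binom{n}{3}4^3=\poly(n)$. Hence, starting from any qq state $X^\star$ with $\Tr(X^\star H)\le a$ in a {\rm YES} instance, Carath\'eodory's theorem applied in this marginal space produces a distribution $\mu=\sum_{j=1}^{\poly(n)}\lambda_j\,\delta_{\ualpha_j}$ with exactly the same $3$-local marginals (hence the same energy $\le a$), and the operator it induces is again a qq state: global positivity holds because $\mu\ge 0$, and every $3$-local reduced operator equals the corresponding reduced operator of $X^\star$, so it is $\succeq 0$. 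The NP witness is therefore the list of $\poly(n)$ assignments $\ualpha_j\in\{0,1,2,3\}^n$ with weights $\lambda_j$; the verifier reconstructs the $3$-local reduced operators, checks the $\binom{n}{3}$ local-positivity conditions (tiny semidefinite feasibility tests on $8\times 8$ matrices) and checks $\Tr(XH)\le a$. Soundness of the protocol is immediate from $\eps_0^{qq}\ge b>a$ in the {\rm NO} case, and since the Carath\'eodory vertex has polynomially many bits the verification is exact whenever $a,b$ are specified with polynomially many bits, which covers the stated range of gaps.

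\emph{Base hardness.} I would reduce from the computation of a classical verifier circuit on an input together with a guessed witness, using a Feynman--Kitaev-style Hamiltonian: terms $h_i$ enforcing the input register, a reversible-propagation check linking consecutive configurations, and the accepting-output condition, all acting on $O(1)$ qubits. Completeness is straightforward: when an accepting witness exists, the honest history state $\frac{1}{\sqrt T}\sum_{t}\ket{\psi_t}\otimes\ket{t}$ is an ordinary \emph{quantum} state (a superposition of computational-basis configurations), hence automatically a qq state, and it achieves energy $0\le a$. The crux — and what I expect to be the main obstacle — is soundness: one must show $\eps_0^{qq}\ge 1/\poly(n)$ whenever no accepting witness exists. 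The usual soundness tools of quantum Hamiltonian complexity (Kitaev's geometric/projection lemma, perturbation theory, detectability-lemma arguments) all rely on the global positivity $X\succeq 0$ that a qq state does not have, so they do not transfer. Instead I would argue in the distribution picture: global positivity still yields a genuine distribution $\mu$ over assignments of the clock and computation registers; low energy forces the clock marginals to be supported, up to small error, on legal clock configurations; conditioning on the time step gives distributions over computation configurations whose consecutive $3$-local marginals are pinned into consistency by the propagation terms, so these conditionals track the circuit's evolution; and then the output term must be violated with probability $\Omega(1/\poly(n))$ in the absence of an accepting witness. The local-positivity constraints $X^{(I)}\succeq 0$ would be used to control how errors accumulate along this chain of implications.

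\emph{Gap amplification.} To boost the inverse-polynomial gap to $\Theta(L_H)$ I would treat the resulting Hamiltonian as a constraint system and apply a classical-style amplification (graph powering / parallel repetition in the spirit of Dinur and Raz) until the gap becomes a constant fraction of the number of constraints. The point where genuine care is needed — the subtle promise-gap amplification referred to in the introduction — is that the amplified instance must be analysed over qq states, and the lack of the bottom-up property means the optimal qq distribution of the amplified instance need not factor through those of the base instance; its qq ground energy must be lower-bounded directly from the global-positivity/distribution structure rather than by decoupling the new local marginals, which is precisely why one cannot simply invoke the classical PCP theorem on the constraint system and read off the answer. Once hardness holds at gap $\Theta(L_H)$, it implies hardness at every smaller gap (a larger promise gap is a stronger promise, hence an easier decision problem for the solver), so together with the membership argument above one obtains NP-completeness throughout $e^{-O(n)}L_H\le b-a\le\Theta(L_H)$.
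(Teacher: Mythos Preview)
Your membership argument is correct and is essentially the paper's proof of \Thm{thm:inside-NP}: Carath\'eodory in the $\poly(n)$-dimensional space of $k$-local marginals produces a sister state of polynomial support, and the verifier checks local positivity and energy.

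Your hardness strategy, however, goes in the opposite direction from the paper and contains a real gap at the point you yourself flag as ``the crux''. In a clock/tableau reduction the Hamiltonian is written in the computational basis, and while completeness is indeed free (the history state is a genuine quantum state, hence a qq state), soundness over qq states is \emph{not} something you can recover by ``arguing in the distribution picture''. The distribution $\mu(\ualpha)=\Tr(XF_\ualpha)$ is over SIC-POVM outcomes $\ualpha\in\{0,1,2,3\}^n$, not over computational basis configurations, so there is no direct reading of ``clock configurations'' or ``time-step conditionals'' from $\mu$; and the local-positivity constraints do not force $\mu$ to behave like a distribution over tableau rows. The paper makes this failure concrete: for the diagonal Hamiltonian $H_{\rm comp}=\sum_{\langle i,j\rangle}(\ketbra{00}{00}+\ketbra{11}{11})$ on a triangle, the CSP is unsatisfiable (so the quantum ground energy is $\ge 1$) yet there is an explicit $2$-local qq state $X$ with $\Tr(H_{\rm comp}X)=0$. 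This is exactly the soundness collapse your sketch would have to rule out, and nothing in your conditioning argument prevents it. The same obstruction undermines your amplification step: you already note that the amplified instance must be analysed directly over qq states, but you give no mechanism for doing so.

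The paper's route is essentially the mirror image. It writes the Hamiltonian in the SIC-POVM basis (the $4$-coloring / antiferromagnetic Heisenberg Hamiltonian $h_e=\sum_\alpha F_\alpha\otimes F_\alpha$), for which $\Tr(D_\ualpha H_G)$ is literally the number of coloring violations of $\ualpha$; soundness is then a one-line averaging argument, and the entire difficulty migrates to \emph{completeness}, because a single legal coloring $\ualpha^*$ gives $D_{\ualpha^*}$, which is not a qq state (its $2$-local marginals are not PSD). The paper solves this with the scapegoat mechanism: three extra qubits condition three edge-disjoint subgraphs, each admitting a ``$\lambda$-solution'' (a color-invariant qq state with strictly positive $2$-local marginals off the edges), and a careful convex combination of these with $X_\sym(\ualpha^*)$ yields a genuine $3$-local qq state of low energy. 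The constant promise gap is obtained not by amplifying the Hamiltonian afterwards but by \emph{starting} from a gap-amplified classical $4$-coloring instance (via the classical PCP theorem and a chain of gadget reductions) and showing the scapegoat reduction preserves a constant fraction of that gap.
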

The formal statement and proof of the above theorem is found in
Theorems~\ref{thm:inside-NP} and \ref{thm:main-NP-hard}.

\subsection{Overview of the proof}

To show that our problem is inside NP, we follow a construction by
I.~Pitowsky\cc{ref:Pitowsky1991-CorrPolytope}, which relies on the
Carath{\'e}odory's theorem\cc{ref:Caratheodory1911} to show that for
every multivariate distribution $\mu(\ualpha)$ there exists a
``sister distribution'' $\mu'(\ualpha)$ such that:
\begin{enumerate}
  \item $\mu$ and $\mu'$ share the same $k$-local marginals
  \item The support of $\mu'$ contains at most $O(n^k)$ assignments. 
\end{enumerate}
Since for every $k$-local set of qubits $I=(i_1, \ldots, i_k)$ the
reduced operators $X^{(I)}$ are determined by the corresponding
$k$-local marginal $\mu^{(I)}(\alpha_{i_1}, \ldots, \alpha_{i_k})$
of the associated probability distribution $\mu$, it follows that
for every $X$ there is a sister qq state $X'$ with a polynomial
support that shares the same $k$-local reduced operators. The prover
will send the description of the sister state to the verifier as a
witness, and the verifier will use it to verify that: (i) it is a
legal qq state, by verifying that the resultant $k$-local reduced
operators are PSD, and (ii) that the resultant energy (which only
depends on the $k$-local reduced operators) is lower than $a$. All
of this can be done to an exponential accuracy given a
$\poly(n)$-bits description of the sister state.

Going to the hardness part, the proof becomes much more involved,
even if we only want to prove hardness for a promise gap of $b-a =
L_H/\poly(n)$.  There are two natural approaches to show
NP-hardness, both of which start from an NP-hard local CSP. In the
first approach, we map the CSP to a classical Hamiltonian that is
diagonal in the computational basis, and in the second approach we
map to a Hamiltonian that is written in terms of the POVM basis
$\{F_\alpha\}$. Let us first consider the first approach. Here one
can take a 3-CNF and turn local clauses into 3-local Hamiltonian
terms in the computational basis. For example, $\ldots (x_1\vee
x_2\vee \neg x_3)\wedge(x_2\vee \neg x_3\vee\neg x_4)\ldots $ will
be mapped into into the $3$-local Hamiltonian $H = \ldots +
\ketbra{001}{001}_{123} + \ketbra{100}{100}_{234} + \ldots$. The
resulting $H$ is diagonal in the computational basis, and therefore
its \emph{quantum} ground energy is its minimal eigenvalue, which is
exactly the minimal number of possible violations. 

To show hardness we need to argue that if the 3-CNF formula is
satisfiable, $H$ has a quasi-quantum ground energy $\eps_0^{qq}\le
0$, whereas if it is not then $\eps_0^{qq}\ge 1/\poly(n)$.  In such
case, completeness is evident but soundness is not guaranteed.
Indeed, if there is a satisfying assignment the quantum ground
energy of $H$ is $\eps_0=0$, and as the qq ground energy is a lower
bound to the quantum ground energy, it follows that also
$\eps_0^{qq}\le 0$. However, when the CSP is not satisfiable, we run
into trouble. In such case the quantum ground energy is $\ge 1$, but
the qq ground energy can still be $0$. For example, consider the
system depicted in \Fig{fig:3qubits}, which consists of a $2$-local
CSP over 3 variables $x_1, x_2, x_3$, with 3 clauses $C_{12},
C_{23}, C_{13}$, where each clause $C_{ij}$ is satisfied iff
$x_i\neq x_j$. The computational Hamiltonian in this case is
\begin{align*}
  H_{\rm comp} = \sum_{\av{i,j}} \Pi_{ij}, \qquad 
    \Pi_{ij} = \ketbra{00}{00}_{ij}
    + \ketbra{11}{11}_{ij} .
\end{align*}
Clearly, the CSP is unsatisfiable (and therefore its proper quantum
ground energy $\eps_0\ge 1$), but its qq ground energy is
$\eps^{qq}_0=0$.  To see that, consider the $2$-local qq state
\begin{align*}
  X \EqDef \frac{1}{2}\big( \ketbra{110}{110} + \ketbra{011}{011}
    +\ketbra{101}{101} - \ketbra{111}{111}\big) .
\end{align*}
It is easy to see that $X$ is indeed a qq state: a direct
calculation shows that it satisfies the global positivity condition
with respect to the standard SIC-POVM basis for qubits (see
Definition~\ref{def:SIC-POVM} and \Eq{eq:POVM-d2} in
\Sec{sec:SIC-POVM} for the precise SIC-POVM definition), and it also
satisfies the local positivity condition since all its 2-local RDMs
are $X^{(ij)} = \frac{1}{2}\big(\ketbra{01}{01} +
\ketbra{10}{10}\big) \succeq 0$. Yet for every edge $\{i,j\}$, we
have $\Tr(\Pi_{ij} X^{(ij)})=0$, hence $\eps_0^{qq}\le \Tr(HX) = 0$.
\begin{figure}
  \begin{center}
    \includegraphics[scale=1]{3qubits.pdf}
  \end{center}
  \caption{An illustration of an unsatisfiable classical CSP, where
    every variable $x_i$ takes value in $\{0,1\}$, and on every edge
    we place a constraint that is satisfied only if its adjacent
    variables take different values.} 
\label{fig:3qubits}
\end{figure}

The second natural approach is to write $H$ in terms of the SIC-POVM
basis. Such case makes it easy to write a classical $k$-local CSP
over an alphabet $\{0,1,2,3\}$ as a $k$-local Hamiltonian. A simple
choice is to use the vertex $4$-coloring problem. Given a graph
$G=(V,E)$, we shall define a local Hamiltonian term for every edge
$e=\{i,j\}\in E$ by $h_e \EqDef \sum_{\alpha=0}^3 F_\alpha^{(i)} \otimes
F_\alpha^{(j)}$ and then the total Hamiltonian is $H_G = \sum_{e\in
E} h_e$, which is actually the anti-ferromagnetic Heisenberg
Hamiltonian (see \Sec{sec:4-coloring} for details). In such
case, every $2$-local qq state that is associated with a
distribution $\mu(\ualpha)$, can be viewed as a distribution over
different colorings of $G$. The qq energy $\Tr(H_G X)$ is then the
\emph{expected} number of violations over this distribution. To
prove hardness, we would like to show that if $G$ is $4$-colorable,
then $\eps_0^{qq}=0$, whereas if $G$ is not $4$-colorable, then
$\eps_0^{qq}\ge 1$ (or something similar for general $a<b$ numbers).
The second case is easy to show. If $G$ is not colorable and $X$ is
given by some distribution $\mu(\ualpha)$, then since for each
coloring $\ualpha$ in the support of $\mu$ the number of violations
is at least $1$, it follows that also the expected number of 
violations is $\ge 1$, and so $\Tr(XH_G)\ge 1$. The problem arises
when $G$ is colorable. Assume that in such case it has a valid
coloring $\ualpha^*=(\alpha_1^*, \alpha_2^*, \ldots, \alpha_n^*)$.
Had it been possible to construct a $2$-local qq state $X^*$ whose
support only contains $\ualpha^*$, we would have gotten $\Tr(H_G
X^*)=0$. However, there is no such qq state; by having only one
assignment in its support, the operator $X^*$ will necessarily
violate the local positivity condition. Its local reduced operator
will always contain negative eigenvalues. 

Intuitively, to tackle this problem, one has to choose an NP-hard
problem in which the YES instance has many solutions. Taking a
distribution made of all these assignments will possibly give rise
to an operator $X$ that satisfies local positivity. Our approach is
more general. Instead of asking for many perfect solutions in the
YES case, we construct a problem where in the YES case there are
many low energy assignments (equivalently, low number of
violations), while in the NO case all assignments have a larger
energy. From a many-body physics perspective, this means that in
both YES and NO cases the system is frustrated, only that in the NO
case, it is more frustrated. To construct such systems, we must be
able to have fine control on the amount of frustration in the
system. This is achieved using an idea that we call a `scapegoat'
(see \Sec{sec:bird-eye}). Essentially, we add few more qubits to the
system that we call scapegoat qubits, and we condition the original
terms on these qubits being in a certain `inactive state'. This
changes the $2$-local Hamiltonian to a $3$-local Hamiltonian. The
scapegoat qubits allows us to shift the energy penalties of the many
constraints to a single constraint on the scapegoat qubits, thereby
drastically decreasing the overall energy penalty. 

As we discuss in \Sec{sec:bird-eye}, a single scapegoat qubit is not
enough, and we have to use three such qubits, where each qubit is
conditioning a different part of the system. We must then make sure
that the partial set of the constraints associated with a scapegoat
qubit has many satisfying solutions, from which we can construct
what we call a $\lambda$-solution --- a $2$-local qq state that is
\emph{strictly} positive on its $2$-local marginals. 

To some extent, this approach can be compared to the quantum
circuit-to-Hamiltonian construction\cc{ref:Kitaev2002-QI}, where a
QMA verification circuit is mapped to a Hamiltonian of the form $H =
H_{\rm in} + H_{\rm prop} + H_{\rm out}$. There, each of the $3$
sub-Hamiltonians has a large number of zero energy solutions (i.e.,
each one of them is frustration-free with a large groundspace), but
taking the sum of them introduces some unavoidable frustration due
to incompatibilities between their groundspaces.

Utilizing expander graph techniques from Dinur's PCP
proof\cc{ref:Dinur2007-PCP}, together with standard results on graph
coloring problems, we show that the scapegoat framework is general
enough to be used in reductions from coloring problems that preserve
large promise gaps. We can therefore start from a $4$-coloring
problem with a large promise gap (i.e., a problem that is the result
of the classical PCP theorem), and reduce it to a $3$-local
Hamiltonians with a large promise gap, thereby proving the
quasi-quantum PCP theorem.

\subsection{Discussion}

The $k$-local qq states and their related $k$-local Hamiltonian 
optimization problem form a non-trivial test bed for understanding
the complexity of the $k$-local Hamiltonian problem. Varying the
locality parameter $k$, we gradually move from the classical CSP
problem ($k=0$), which is NP-complete, to solving the problem over
the $k$-local qq states for $k=O(1)$, which is still NP-complete, to
the full quantum Hamiltonian problem at $k=n$, which is
QMA-complete:
\begin{center}
  \includegraphics[scale=0.2]{complexity-arc.png}
\end{center}
Our quasi-quantum PCP theorem shows that at least in the first two
steps above, the complexity of the problem remains the same when we
are asked to approximate the ground energy up to an error that
scales with the system size. 

What does the existence of quasi-quantum PCP theorem tells us about
the quantum PCP conjecture? The first point to consider is the
existence of frustration. Frustration measures the minimal expected
number of violations. Our final Hamiltonian has a large degree of
frustration both in the YES and NO cases. This is in stark contrast
with classical case, where the YES solution has no violations, which
translates to perfect completeness. This suggests that if we are
trying to prove the quantum PCP conjecture, we might want to
consider imperfect completeness. In terms of local Hamiltonians,
this would lead to frustration, which we might be able to control
using a mechanism similar to the scapegoat mechanism we used here.

The second, and more important, point is the question of existence
of a gap-amplification procedure for local Hamiltonians.  In Dinur's
PCP proof, the promise gap is iteratively amplified using $O(\log
n)$ amplification cycles. An amplification cycle consists of 3
steps: 1) degree reduction, 2) gap amplification, 3) alphabet
reduction. It takes us back to a local Hamiltonian with the same
locality and local degree we started with, but with a larger promise
gap.

Our PCP proof starts from a gap-amplified classical CSP (a
$4$-coloring problem), which is then reduced to a quasi-quantum
local Hamiltonian problem by a map that preserves the large
promise gap. Therefore, we do not explicitly apply the three steps
of amplification cycle to quasi-quantum states. Just as in the
quantum case, it is not clear that applying such steps on qq states
is even possible. Consider, for example, the degree-reduction step.
In the classical case, one needs to add more variables to the system
and introduce identity constraints $C_{=}(x_i, x_j) = \delta_{x_i,
x_j}$ between them:
\begin{center}
  \includegraphics[scale=0.2]{ID-constraint.png}
\end{center}
This can be easily done when the underlying solution is an
assignment. For example, in the CSP above, if $(x_0, x_1, x_2, x_3,
x_4)$ is solution of the original CSP, then $(x_0, x_0, x_1, x_2,
x_3, x_4)$ is the solution of the new CSP. But in the qq states
case, such mapping is impossible. Imposing an identity constraint
between $x_0$ and $x_0'$ implies that the marginal of the
distribution of that corresponds to the qq state of the new CSP 
will have to be supported by the assignments $(0,0), (1,1), (2,2),
(3,3)$. But as we show in \Lem{lem:q-support}, such distribution
violates the local positivity condition; the support of any PSD on 2
qubits must contain at least $9$ assignments. This behavior is
similar to the quantum case, where one cannot impose local equality
constraints between two qubits\cc{ref:Aharonov2013-qPCP}. 

It seems even harder to `quasi-quantize' the gap amplification step.
That step takes a constraint graph system to another constraint
graph system with a much larger alphabet and local degree, while
also amplifying the promise gap. In the quasi-quantum case, as the
locality of the qq states follows the locality of the Hamiltonian,
this would force us to increase the non-locality parameter of the qq
states (i.e., the size of the marginals that must be non-negative).
This changes the underlying configuration space on which the problem
is defined, and seems unlikely such mapping exists.

In the quantum case, the question of existence of the
gap-amplification step is intriguing. As shown by Brand\~ao and
Harrow\cc{ref:Brandao2013-prod-states}, if such procedure existed
for $2$-body quantum Hamiltonians, it would actually
\emph{disprove} of the quantum PCP conjecture: it would imply the
existence of a low-energy product state for the amplified local
Hamiltonian, which could serve as a classical witness, thereby
placing the problem inside NP. This `Catch-22'-like conflict is
arguably our strongest evidence against the correctness of the
quantum PCP conjecture.  

The quasi-quantum PCP theorem proof suggests two possible detours
around this obstacle. First, the Brand\~ao and Harrow\ no-go theorem
assumes a $2$-body Hamiltonian (i.e., local Hamiltonians in which
each local term involves only $2$ qudits). As explicitly shown in
\cRef{ref:Anshu:no-nogo} using an elegant argument, the theorem does
not hold for $4$-body Hamiltonians, and therefore it might also not
hold for our Hamiltonian, which is a $3$-body Hamiltonian. More
generally, the obstacle might be bypassed as illustrated in the
following figure:
\begin{center}
  \includegraphics[scale=0.2]{amp-cycle.png}
\end{center}
Starting from a 3-local Hamiltonian, we can apply the inverse of the
reduction we defined in the paper to take us to a 4-coloring
problem. We then apply the full amplification cycle of Dinur, and
then apply our reduction, which takes us back to a gap-amplified
3-local Hamiltonian. This way, we effectively obtain an
amplification cycle for $3$-local Hamiltonians in the quasi-quantum
setup. This observation raises the possibility that something
similar might be possible also in the quantum case. While it might
be impossible to directly quantize the \emph{single} steps in
Dinur's amplification cycle, it may still be possible to quantize
the \emph{full} cycle. In particular, such quantization might
involve a procedure that takes us out of the proper quantum states
space to different objects (for example, operators that are not
necessarily positive), and then back to proper quantum states. 

{~}

We conclude this section by remarking that the qq-states framework
might also be useful as a practical optimization framework for
quantum Hamiltonians. Given a $k$-local quantum Hamiltonian $H$, we
can use powerful heuristic methods to find its quasi-quantum ground
state and quasi-quantum ground energy, since this is essentially a
classical problem. For example, if $M$ is the maximal size of the
support needed by Carath{\'e}odory's theorem to represent a general
sister state, then we may start by randomly picking $M'=M+\Delta$
assignments, where $\Delta=\poly(n)$ is some free parameter we fix.
We then find the minimal energy over this set (which is a SDP
problem that can be solved efficiently), and use the
Carath{\'e}odory's theorem to find the sister state of our optimal
solution. This will reduce the size of the support to $M$. We can
then enlarge the support by adding $\Delta$ new assignments to it,
and re-run the optimization. This will necessarily take us to a
lower energy. Such process is repeated until we hit a local minima.
If such heuristic manages to find (or at least get close) to the
actual qq ground energy, it ought to give us stronger results than
popular relaxation methods that only try to find a set of reduced
density matrices that are consistent with among themselves (see for
example \cc{ref:Lin2022-SDP} and refs within). Unlike these methods,
a qq state is a global object, and therefore the relaxation it
offers is tighter than local relaxation methods.

\section{Background}
\label{sec:background}

\subsection{Quantum states and SIC-POVM measurements}
\label{sec:SIC-POVM}

Given a Hilbert space $\BBH$, we denote the space of linear
operators on it by $L(\BBH)$.  The set of quantum states $\SQ
\subset L(\BBH)$ is a convex subset of $L(\BBH)$ of operators $\rho$
such that (i) $\Tr(\rho)=1$, (ii) $\rho=\rho^\dagger$ and (iii)
$\rho\succeq 0$, i.e., it is a positive semi-definite (PSD)
operator.

A qudit is a quantum register that is described by a $d$-dimensional
Hilbert space $\BBH= \BBC^d$. The Hilbert space of a system of $n$
qudits, is described by the tensor product of the Hilbert spaces of
the individual qudits: $\BBH= (\BBC^d)^{\otimes n}$. Given a
one-qudit operator $Y$ in the multi-qudit setting, we use the
notation $Y^{(i)}$, for $i \in \{1, \ldots, n\}$, when it is
applied on the $i$th qudit of the system. This notation can be
naturally generalized to a constant number of qudits.

Positive Operator Valued Measurements (POVM) model the most general
physical process in which a quantum system is being measured and
then discarded. Mathematically, they are described by a set of
positive semi definite (PSD) operators $\{F_\alpha\}$, where the
index $\alpha$ denotes a possible measurement outcome and in addition
we require the \emph{completeness property}, $\sum_\alpha F_\alpha =
\Id$. The result of a POVM measurement of a quantum state $\rho$ is
an index $\alpha$ from the set of possible results, which in this
work we shall assume to be the set $\{0,1,2,\ldots\}$. The value
$\alpha$ is obtained with probability $\mu(\alpha) = \Tr(F_\alpha
\rho)$. A POVM measurement can therefore be seen as a linear map
that takes a quantum state to a probability distribution
$\mu(\alpha)$.

An informationally complete measurement is a POVM of special type in
which the resultant probability distribution $\mu(\alpha)$ contains
the full information of the measured quantum state $\rho$. This
means that the map $\rho \mapsto \mu(\alpha)$ is invertible and so
$\rho$ can be restored from $\mu(\alpha)$. A necessary and
sufficient condition for a POVM to be informationally complete is
that the POVM operators $\{F_\alpha\}$ span the space of operators
$L(\BBH)$. Therefore, on a quantum system with $\dim(\BBH)=d$, the
minimal number of elements in an informationally complete POVM is
$d^2$. 

Our definition of quasi-quantum states builds on the notion of a
symmetric, informationally complete POVMs, which we now define:
\begin{definition}[SIC-POVM]
\label{def:SIC-POVM} A Symmetric, Informationally-Complete {\rm POVM}
  ({\rm SIC-POVM}) in a $d$-dimensional Hilbert space is an
  informationally complete {\rm POVM} $\{F_\alpha\}_{\alpha=1}^{d^2}$,
  which is specified by $d^2$ unit vectors $\ket{\psi_\alpha}$ such
  that:
  \begin{align}
  \label{eq:F_alpha}
    F_\alpha &\EqDef \frac{1}{d}\ketbra{\psi_\alpha}{\psi_\alpha}, \\
    |\braket{\psi_\alpha}{\psi_\beta}|^2 &= \frac{1}{d+1} \qquad
    \text{for $\alpha\neq \beta$}.
    \label{eq:F-overlap}
  \end{align}
\end{definition}
In other words, a SIC-POVM is an informationally complete
measurement with a minimal number of rank-1 elements, which are also
symmetric with respect to each other. It is straightforward to show
that the $1/d$ normalization in the definition of $F_\alpha$
in~\eqref{eq:F_alpha} and the $1/(d+1)$ normalization in the overlap
in \Eq{eq:F-overlap} follow from the demand that the $F_\alpha$ are
rank-1 operators with the same overlap with each other.

Throughout this work we will mostly be concerned with the case of
qubits, where $d=2$. In that case, we use the SIC-POVM that is given
by the $4$ vectors
\begin{align}
\label{eq:POVM-d2}
  \begin{split}
    \ket{\psi_0} &= \ket{0}, \\
    \ket{\psi_1} &= \frac{1}{\sqrt{3}}\ket{0}
      + \sqrt{\frac{2}{3}}\ket{1},  \\
    \ket{\psi_2} &= \frac{1}{\sqrt{3}}\ket{0} 
      + e^{i2\pi/3}\sqrt{\frac{2}{3}}\ket{1}, \\
    \ket{\psi_3} &= \frac{1}{\sqrt{3}}\ket{0} 
      + e^{i4\pi/3}\sqrt{\frac{2}{3}}\ket{1}. 
  \end{split}
\end{align}
Geometrically, these vectors form a tetrahedral on the Bloch sphere,
as shown in \Fig{fig:SIC-qubit}.
\begin{figure}
  \centering
  \includegraphics{SIC-qubit.pdf}  
  \caption{A common choice for the $4$ elements of a SIC-POVM of
    qubit (see \Eq{eq:POVM-d2}) form a tetrahedral on the Bloch
    sphere.} \label{fig:SIC-qubit}
\end{figure}

Given single-qudit SIC-POVM, we can extend it to multiple qudits
using tensor products. For $n$ qudits, we shall use the notation
$\ualpha=(\alpha_1, \ldots, \alpha_n)$ to denote a string of
possible POVM results for all $n$ qudits, and define
\begin{align}
\label{def:multiE}
  F_\ualpha \EqDef F_{\alpha_1}\otimes F_{\alpha_2}\otimes\ldots
    \otimes F_{\alpha_n}, 
    \qquad \alpha_i \in \{0, 1, \ldots, d^2-1\} .
\end{align}
It is easy to verify that the set $\{F_\ualpha\}$ defines a 
SIC-POVM for the space of $n$ qudits $\BBH = (\BBC^d)^{\otimes n}$.

The multiqudit SIC-POVM basis can be used to map a quantum $\rho$
state to a multivariate probability distribution:
\begin{align}
\label{eq:rho-to-mu}
  \mu(\ualpha) \EqDef \Tr(\rho F_\ualpha) .
\end{align}
Since this is an informationally complete measurement, the map can
be inverted, and we can fully reconstruct $\rho$, given its 
probability distribution $\mu$. An elegant way to do that
is by using the \emph{dual basis} for the SIC-POVM, which is
co-orthogonal to the POVM operators:
\begin{definition}[Dual basis of a SIC-POVM]
\label{def:dual-basis}
  Given a {\rm SIC-POVM} $\{F_\alpha\}$ in a $d$-dimensional space, its
  dual basis is a set of operators $\{D_\alpha\}$ over $\alpha=0,
  \ldots, d^2-1$, given by
  \begin{align}
  \label{eq:Dalpha}
    D_\alpha \EqDef d(d+1)F_\alpha - \Id
      = (d+1)\ketbra{\psi_\alpha}{\psi_\alpha} - \Id .
  \end{align}
\end{definition}
The following properties of the dual basis can be proved by
inspection:
\begin{enumerate}
  \item $\{D_\alpha\}$ is a basis for $L(\BBH)$
  \item $D_\alpha$ is hermitian and $\Tr(D_\alpha)=1$,
  \item $D_\alpha$ is not a PSD operator. It has one positive
    eigenvalue $d$ and $d-1$ eigenvalues of $-1$.
\end{enumerate}
Crucially, the the dual basis is co-orthonormal to the
SIC-POVM basis with respect to the Hilbert-Schmidt inner product:
\begin{align}
\label{eq:co-orthogonality}
  \Tr(F_\alpha D_\beta) = \delta_{\alpha\beta} .
\end{align}
In the case of qubits, for the SIC-POVM in \Eq{eq:POVM-d2}, the
corresponding dual basis is given by
\begin{align}
\label{eq:D2}
  D_\alpha =  3\ketbra{\psi_\alpha}{\psi_\alpha} - \Id
    = 2\ketbra{\psi_\alpha}{\psi_\alpha} 
      -\ketbra{\psi^\perp_\alpha}{\psi^\perp_\alpha}
    \sim \begin{pmatrix}
      2 & \\
      & -1 
    \end{pmatrix} ,
\end{align}
The following lemma lists two useful properties of the dual basis.
\begin{lemma}
\label{lem:D-props}
  The dual basis $\{D_\alpha\}$ defined in \Def{def:dual-basis}
  satisfies the following properties:
  \begin{align}
    \frac{1}{d^2} \sum_\alpha D_\alpha &= \frac{1}{d}\Id , \\
  \text{for every $\beta$}, \quad      
    \frac{1}{d^2-1} \sum_{\alpha\neq \beta} D_\alpha
        &= \frac{1}{d-1}(\Id - \ketbra{\psi_\beta}{\psi_\beta}).
  \end{align}
\end{lemma}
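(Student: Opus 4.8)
The plan is to obtain both identities directly from the completeness relation of the SIC-POVM together with the explicit rank-1 form of the dual basis recorded in \Def{def:dual-basis}. The single fact I need to feed in is that $\{F_\alpha\}$ is a POVM, so $\sum_{\alpha} F_\alpha = \Id$; since $F_\alpha = \frac{1}{d}\ketbra{\psi_\alpha}{\psi_\alpha}$, this is the same as $\sum_\alpha \ketbra{\psi_\alpha}{\psi_\alpha} = d\,\Id$. Everything else is bookkeeping with the $d^2$ indices and the normalization constants appearing in \Eq{eq:Dalpha}.

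For the first identity I would substitute $D_\alpha = d(d+1)F_\alpha - \Id$ and sum over all $d^2$ values of $\alpha$. The first term contributes $d(d+1)\sum_\alpha F_\alpha = d(d+1)\Id$ by completeness, and the second term contributes $-d^2\,\Id$ because the index set has exactly $d^2$ elements. Hence $\sum_\alpha D_\alpha = \bigl(d(d+1)-d^2\bigr)\Id = d\,\Id$, and dividing by $d^2$ gives $\frac{1}{d}\Id$, as claimed.

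For the second identity I would write $\sum_{\alpha\neq\beta} D_\alpha = \sum_\alpha D_\alpha - D_\beta$, insert the value $d\,\Id$ just computed for the full sum, and use $D_\beta = (d+1)\ketbra{\psi_\beta}{\psi_\beta} - \Id$ from \Eq{eq:Dalpha}. This yields $\sum_{\alpha\neq\beta} D_\alpha = d\,\Id - (d+1)\ketbra{\psi_\beta}{\psi_\beta} + \Id = (d+1)\bigl(\Id - \ketbra{\psi_\beta}{\psi_\beta}\bigr)$. Dividing by $d^2-1 = (d-1)(d+1)$ then gives $\frac{1}{d-1}\bigl(\Id - \ketbra{\psi_\beta}{\psi_\beta}\bigr)$.

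I do not expect a genuine obstacle here: the lemma is a short computation and the only things worth double-checking are that the sums really range over exactly $d^2$ indices and that the two normalization constants in the definition of $D_\alpha$ are carried through consistently. If one wants the argument to be fully self-contained, the one extra line to include is the observation that $\sum_\alpha \ketbra{\psi_\alpha}{\psi_\alpha} = d\,\Id$ is just $\sum_\alpha F_\alpha = \Id$ rewritten using $F_\alpha = \frac{1}{d}\ketbra{\psi_\alpha}{\psi_\alpha}$; both stated identities then also serve as consistency checks against \Eq{eq:co-orthogonality} and against the qubit case of \Eq{eq:D2}.
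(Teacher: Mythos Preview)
Your proposal is correct and is exactly the direct computation from the definitions that the paper has in mind; indeed, the paper omits the proof with the remark that it ``follows almost immediately from definitions.'' Both identities are derived correctly from the completeness relation $\sum_\alpha F_\alpha=\Id$ and the formula $D_\alpha=d(d+1)F_\alpha-\Id$.
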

The first equality tells us that, as expected, taking a uniform
distribution over the $D_\alpha$ produces the maximally-mixed sate,
and the second equality tells us that taking the uniform
superposition over all $D_\alpha$ for $\alpha\neq \beta$ gives us
the maximally mixed state in the subspace that is orthogonal to
$\ket{\psi_\beta}$. Since the proof follows almost immediately from
definitions, we omit it.

Just as in the case of a single-qudit POVM $\{F_\alpha\}$, we can
extend the $\{D_\alpha\}$ basis to the multi-qudit setup using
tensor products. For a string $\ualpha=(\alpha_1, \ldots,
\alpha_n)$, we define
\begin{align}
  \label{eq:D_ualpha}
  D_\ualpha \EqDef D_{\alpha_1}\otimes D_{\alpha_2}\otimes\ldots
    \otimes D_{\alpha_n}, \qquad \alpha_i \in \{0,1,2,3\} ,
\end{align}
and notice the co-orthogonality property
\begin{align}
  \Tr(F_\ualpha D_\ubeta) = \delta_{\ualpha, \ubeta} .
\end{align}
With this property, it is easy to reverse the mapping $\rho \mapsto
\mu$. Indeed, expanding $\rho$ in terms of the dual basis and taking
a trace with $F_\ualpha$, we obtain
\begin{align*}
  \mu(\alpha) = \Tr(\rho F_\ualpha) \quad\Leftrightarrow\quad
  \rho = \sum_\ualpha \mu(\ualpha) D_\ualpha .
\end{align*}

Note that the reduced density operator of a multiqudit state $\rho =
\sum_\ualpha \mu(\ualpha) D_\ualpha$ is directly related to the
corresponding marginal of $\mu$. Specifically, for a set of $k$
qudit labels $I=(i_1, i_2, \ldots, i_k)$, denote the reduced density
operator of $\rho$ on $I$ by $\rho_I \EqDef \Tr_{\setminus I} \rho$, and
the corresponding marginal of $\mu$ by $\mu_I$. Then from the fact
that $\Tr(D_\alpha)=1$ for every $\alpha$, we conclude that
\begin{align}
\label{eq:RDM-I}
  \rho_I &= \sum_{\ualpha_I} \mu_I(\ualpha_I) 
    D_{\alpha_1}\otimes\ldots \otimes D_{\alpha_n} , &
    \ualpha_I &\EqDef (\alpha_{i_1}, \ldots, \alpha_{i_k}).
\end{align}

At this point it is worthwhile to stress that although every quantum
state gives rise to a probability distribution $\mu(\ualpha)$ via
the SIC-POVM measurement, the converse is not true. If
$\mu(\ualpha)$ is an arbitrary probability distribution, then
generally $\sum_\ualpha \mu(\ualpha)D_\ualpha$ will \emph{not} be a
PSD operator, and hence not a quantum state. For example, in the
single qubit case, if we take $\mu(\alpha) = \delta_{\alpha,0}$,
then $\sum_\alpha \mu(\alpha) D_\alpha = D_0$, which, as we have
observed has a negative eigenvalue $-1$, and cannot be a quantum
state.

\section{\texorpdfstring{$k$}{k}-local quasi-quantum states 
 (\texorpdfstring{$k$}{k}-qq states)}
\label{sec:qq-states}

We are now ready to define the main objects of this work:
$k$-local quasi quantum states, or in short $k$-qq states. We start
with a formal definition of these states, and then we study some
their properties.

\subsection{Definition of \texorpdfstring{$k$}{k}-local 
  quasi-quantum states}
\label{sec:qq-def}

$k$-local quasi-quantum states over $n$ qudits are a convex set of
$n$ qudits operators, which is a superset of the quantum states. It
is defined with respect to a fixed locality parameter $k=1,2,\ldots$
and a SIC-POVM $\{F_\alpha\}$ as follows. 
\begin{definition}[$k$-local quasi-quantum states]
\label{def:qq-state}
  Let $\BBH = \big(\BBC^d\big)^{\otimes n}$ be the Hilbert space of
  $n$ qudits. An operator $X\in L(\BBH)$ is called a
  $k$-local quasi-quantum state ($k$-local qq for short) if it
  satisfies the following $3$ properties:
  \begin{description}
      
    \item [1. Normalization:] $\Tr(X) = 1$
    
    \item [2. Hermiticity:]  $X^\dagger = X$ 
    
    \item [3. Positivity:] {~}
      \begin{description}
        \item [3a. Global positivity with respect to the SIC-POVM:]\ \\
          For every $\ualpha=(\alpha_1, \ldots, \alpha_n)$, 
          \begin{align}
            \Tr(F_\ualpha X) \ge 0 .
          \end{align}
      
        \item [3b. Local positivity:] \ \\
          for every set of indices $I=\{i_1, i_2, \ldots, \}$ with
          $|I| = k$, define $X_I \EqDef \Tr_{\setminus I} X$ to be
          the ``marginal'' of $X$ on $I$. Then
          \begin{align*}
            X_I \succeq 0 .
          \end{align*}
      \end{description}
  \end{description}  
  We denote the set of all $k$-local qq states by $\Sqq$.
\end{definition}

From the above definition, it is clear that $\Sqq$ is a convex set
of operators that includes the set of quantum states. The global
positivity condition~(3a), as well as the Hermiticity and trace $1$
conditions, imply that, just like ordinary quantum states, we
can use the SIC-POVM $\{F_\ualpha\}$ to map $X$ to a probability
distribution $\mu$ by:
\begin{align}
  \mu(\ualpha) \EqDef \Tr(XF_\ualpha) .
\end{align}
An equally good way of viewing qq states is therefore as probability
distributions over the set $[d^2]^n$, which are subject to a set of
convex constraints on their $k$-local marginals. We can make this
point of view more explicit using the dual basis $\{D_\ualpha\}$
(see \Def{def:dual-basis}). As we have seen, given a probability
distribution $\mu$,  
\begin{align*}
  \mu(\ualpha) = \Tr(XD_\ualpha) \quad \Leftrightarrow\quad
  X = \sum_\ualpha \mu(\alpha)D_\ualpha.
\end{align*}
We shall therefore often
use the notation
\begin{align}
\label{def:X-mu}
  X(\mu) \EqDef \sum_\ualpha \mu(\alpha)D_\ualpha 
\end{align}
to denote an operator whose underlying distribution is $\mu$. When
$\mu$ is a point distribution $\mu(\ualpha^*) = 1$ for some
$\ualpha^*$, we write $X(\ualpha^*)$ for $X(\mu)$. With this
notation, we arrive to the following equivalent definition of a
$k$-local qq state:
\begin{definition}[An equivalent definition of $k$-local
  quasi-quantum states]
\label{def:alt-def}
  An operator $X$ is a $k$-local quasi-quantum state over $n$ qudits
  if and only if there exists a probability distribution
  $\mu(\ualpha)$ such that
  \begin{align}
    X = X(\mu)
  \end{align}
  and $X_I\succeq 0$ for any set of $k$ indices $I=(i_1, \ldots,
  i_k)$.
\end{definition}

As in the case of regular quantum states (see \Eq{eq:RDM-I}), the
$k$-local marginal of a qq-state is directly related to the
corresponding marginal of $\mu$: if $I=(i_1, \ldots, i_k)$ is a set
of $k$ qudit labels, then 
\begin{align}
\label{eq:X-I}
  X_I(\mu) = \sum_{\ualpha_I} \mu_I(\ualpha_I) 
    D_{\alpha_1}\otimes\ldots\otimes D_{\alpha_n} .
\end{align}

By definition, $\Sqq$ is closed with respect to convex combinations.
Other operations that preserve it are tracing out and tensor
products: if $X$ is a $k$-local qq state over $n$ qudits, and $I$ is
a subset of $\ell<n$ qudits, then also $X_I = \Tr_{\setminus I} X$
is a $k$-local qq state. In addition, if $X_1, X_2$ are two
$k$-local qq states then so it $X_1\otimes X_2$.  Other than that, 
there are not many other natural operations that preserve it.

\subsection{The support of a \texorpdfstring{$k$}{k}-qq state}

An important characteristic  of a $k$-qq state is its \emph{support}:
\begin{definition}[Support of a qq state]
  Given a $k$-qq state $X(\mu)$, its support is the support of the
  probability distribution $\mu$, i.e.,
  \begin{align*}
    \supp(X) \EqDef \{\ualpha | \mu(\ualpha)>0\} 
      = \{\ualpha | \Tr(F_\ualpha X)>0\}.
  \end{align*}
\end{definition}
qq states with a polynomial support are interesting from a
computational complexity point of view since they posses an
efficient classical description. It might therefore come as a little
surprise that genuine quantum states, even pure ones, should have an
exponential support. For example, by \Lem{lem:D-props}, we know that
in the case of qubits, 
\begin{align*}
  \ketbra{1}{1} = \frac{1}{3} \big(D_1 + D_2 + D_3\big) .
\end{align*}
and so the pure quantum state $\ketbra{1}{1}^{\otimes n}$ has
exactly $3^n$ strings in its support (all the strings that do not
contain $\alpha_i=0$). The following lemma, whose proof is given in 
\App{app:q-support}, shows that the above example is optimal and that 
\emph{any} quantum state has an exponential support.
\begin{lemma}
\label{lem:q-support} If $X\in \Sqq$ is a quantum state (i.e.,
$X\succeq 0$) on $n$ qubits then $|\supp(X)| \ge 3^n$. 
\end{lemma}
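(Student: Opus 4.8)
The plan is to reduce the $n$-qubit claim to the single-qubit case by an inductive/tensor argument, using the fact that $\supp(X)$ is controlled by the rank of $X$ under a suitable partial-trace. The key single-qubit observation is: if $\rho$ is a single-qubit state of rank $r$, then the number of $\alpha\in\{0,1,2,3\}$ with $\Tr(F_\alpha\rho)=0$ is at most $4-r$; equivalently, at most $4-r$ of the tetrahedral SIC vectors $\ket{\psi_\alpha}$ can be orthogonal to the range of $\rho$. Indeed $\Tr(F_\alpha\rho) = \tfrac12\bra{\psi_\alpha}\rho\ket{\psi_\alpha}=0$ iff $\ket{\psi_\alpha}\in\ker\rho$, and $\ker\rho$ has dimension $2-r$; since $d=2$ and the SIC vectors are pairwise non-parallel (their pairwise overlap is $1/(d+1)=1/3\neq 0,1$), a subspace of dimension $2-r$ contains at most $2-r$ of them when $2-r\le 1$, and when $r=0$ all four may vanish. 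In particular: a rank-$1$ single-qubit state has at least $3$ nonzero SIC outcomes, and a rank-$2$ state has all $4$ nonzero.

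First I would set up the inductive step. Write $X=\sum_\ualpha\mu(\ualpha)D_\ualpha$ and fix the last qubit. For a string $\uy=(\alpha_1,\dots,\alpha_{n-1})$ on the first $n-1$ qubits, consider the (unnormalized) operator $X_\uy \EqDef (D_{\alpha_1}\otimes\cdots\otimes D_{\alpha_{n-1}})$-"slice", more precisely the operator on qubit $n$ obtained by pairing $X$ against $F_{\alpha_1}\otimes\cdots\otimes F_{\alpha_{n-1}}\otimes\Id$ on the first $n-1$ qubits; by co-orthogonality \eqref{eq:co-orthogonality} this equals $\sum_{\alpha_n}\mu(\uy,\alpha_n)D_{\alpha_n}$, a single-qubit operator whose SIC-distribution is exactly the conditional slice $\alpha_n\mapsto\mu(\uy,\alpha_n)$. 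Because $X\succeq 0$ and $F_{\alpha_1}\otimes\cdots\otimes F_{\alpha_{n-1}}\otimes\Id\succeq 0$, this single-qubit operator $X_\uy$ is PSD. Now I would argue that the set of $\uy$ for which $X_\uy\neq 0$ has size at least $3^{n-1}$: the map $\uy\mapsto X_\uy$ is, up to the invertible SIC-correspondence, the reduced state of $X$ on the first $n-1$ qubits, which is itself a genuine quantum state on $n-1$ qubits, so by the induction hypothesis its support has size $\ge 3^{n-1}$. For each such $\uy$, $X_\uy$ is a nonzero PSD single-qubit operator, hence has rank $\ge 1$, hence by the single-qubit observation contributes at least $3$ strings $(\uy,\alpha_n)$ to $\supp(X)$. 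Summing over the $\ge 3^{n-1}$ values of $\uy$ gives $|\supp(X)|\ge 3^{n-1}\cdot 3 = 3^n$, and the base case $n=1$ is precisely "a nonzero PSD qubit operator has $\ge 3$ nonzero SIC outcomes."

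The main obstacle I anticipate is making the partial-trace/conditioning bookkeeping fully rigorous: one must check that the reduced operator $\Tr_{\{n\}}X$ really is a legitimate quantum state on the remaining $n-1$ qubits (it is — partial trace preserves PSD and trace), that its SIC-distribution is exactly the marginal $\mu$ restricted to the first $n-1$ coordinates (this is \eqref{eq:RDM-I}), and crucially that $X_\uy\neq 0$ holds for \emph{every} $\uy$ in that $(n-1)$-qubit support, not merely for a positive fraction — this is where one needs that the slices are PSD (not just Hermitian), so nonzero trace forces nonzero operator with positive rank, rather than cancellation. A secondary subtlety is the clean statement of the single-qubit lemma for the degenerate case $r=2$ versus $r=1$, but since we only ever use "rank $\ge 1\Rightarrow$ at least $3$ nonzero outcomes," the weaker uniform bound suffices and the induction goes through cleanly.
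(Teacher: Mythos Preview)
Your proposal is correct and follows essentially the same architecture as the paper's proof: both define, for a fixed string on all-but-one coordinate, the ``slice'' single-qubit operator obtained by contracting $X$ against the corresponding tensor product of $F$'s (your $X_\uy$, the paper's $Y^{(i)}_\ualpha$), observe it is PSD because it arises from $X$ via a completely positive map, and then invoke the single-qubit fact that a nonzero PSD qubit operator has at least three nonzero SIC outcomes; the $3^n$ bound then follows by building up coordinate-by-coordinate (the paper) or, equivalently, by induction on $n$ via the reduced state $\Tr_{\{n\}}X$ (you).

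The one genuine difference is in how the single-qubit fact is established. The paper argues algebraically: it shows that a two-term convex combination $\lambda D_0+(1-\lambda)D_1$ is never PSD by computing $\det\big(\lambda D_0+(1-\lambda)D_1\big)=4(\lambda^2-\lambda-\tfrac12)$ and checking it has no root in $[0,1]$. Your argument is geometric and cleaner: $\Tr(F_\alpha\rho)=0$ forces $\ket{\psi_\alpha}\in\ker\rho$, and since the SIC vectors are pairwise non-parallel (their mutual overlaps being $1/3$), a kernel of dimension $\le 1$ can contain at most one of them. This avoids any computation and would generalize more readily. (One cosmetic slip: your stated bound ``at most $4-r$ zeros'' should be ``at most $2-r$'', which is indeed what your justification establishes and what you actually use.)
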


{~}

However, if we consider qq states, and all we are interested
in are local expectation values, then as the following construction
shows, efficient qq states are sufficient: for every $X\in \Sqq$
there exists $X'\in \Sqq$ with a \emph{polynomial} support (and
therefore an efficient classical representation) that has the same
$k$-local marginals as $X$. 
\begin{theorem}[Sister states]
\label{thm:sister}
  For every $k$-local qq state $X$ there exists a \emph{sister
  state} $X'$ such that:
  \begin{enumerate}
    \item For every $k$-local subset of qudits $I$,
      we have $X_I = X'_I$.
    \item $|\supp(X')| = O(n^k\cdot d^{2k})$.
  \end{enumerate}
\end{theorem}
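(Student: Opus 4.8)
The plan is to reduce the statement to a classical fact about probability distributions and their $k$-local marginals, and then invoke Carathéodory's theorem. The key observation is the one already flagged in the overview: by Definition~\ref{def:alt-def} a $k$-local qq state is nothing but a probability distribution $\mu$ on $[d^2]^n$ subject to the local positivity constraints, and by \eqref{eq:X-I} the reduced operator $X_I(\mu)$ depends only on the $k$-local marginal $\mu_I$. So if I can produce a second distribution $\mu'$ with the same $k$-local marginals as $\mu$ and with support of size $O(n^k d^{2k})$, then $X' \EqDef X(\mu')$ automatically satisfies item~1, and — since local positivity of $X_I$ is a statement about $\mu_I$ alone, which is unchanged — $X'$ is again a legal $k$-local qq state, giving a sister state as required.

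First I would set up the linear-algebraic picture. Consider the affine map $\Phi$ that sends a distribution $\nu$ on the finite set $\gs \EqDef [d^2]^n$ to the tuple of all its $k$-local marginals $\big(\nu_I\big)_{|I|=k}$. The target lives in a real vector space of dimension at most $\binom{n}{k} d^{2k} = O(n^k d^{2k})$; call this dimension $m$. The set of distributions $\{\nu : \Phi(\nu) = \Phi(\mu)\}$ is the intersection of the simplex $\Delta(\gs)$ with an affine subspace cut out by these $m$ linear equations (one of which is redundant, since marginals automatically sum to $1$). This intersection is a nonempty compact convex polytope $P$ containing $\mu$.

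Next I apply Carathéodory's theorem in the form used by Pitowsky~\cc{ref:Pitowsky1991-CorrPolytope}: $\mu$, being a point of the polytope $P$ that lives in an affine space of dimension $m$, is a convex combination of at most $m+1$ extreme points of $P$. The crucial point is that every extreme point $e$ of $P$ is supported on at most $m+1$ atoms of $\gs$: an extreme point of $\{\nu \in \Delta(\gs) : A\nu = b\}$ (with $A$ having $m$ rows) is a basic feasible solution, hence has at most $m$ nonzero coordinates (or $m+1$, counting the normalization row). Writing $\mu = \sum_{j=1}^{m+1} \lambda_j e_j$ and taking $\mu' \EqDef \mu$ — wait, that is circular — instead I take $\mu'$ to be one single such convex decomposition but then note that already each $e_j$ has small support, and more directly: $\mu$ itself, viewed as a point of $P$, is a convex combination of at most $m+1$ of its \emph{vertices}, so $\mu' \EqDef \sum_j \lambda_j e_j$ is supported on at most $(m+1)^2 = O(n^{2k} d^{4k})$ atoms. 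To get the claimed $O(n^k d^{2k})$ bound one applies Carathéodory once more, or more cleanly: the point $\mu$ lies in the polytope $P$ of affine dimension $\le m-1$, so by Carathéodory $\mu$ is a convex combination of $\le m$ vertices of $P$; each vertex has support $\le m$; but in fact one should decompose not $\mu$ but rather argue that $P$ contains a point $\mu'$ with $\Phi(\mu') = \Phi(\mu)$ and $|\supp \mu'| \le m$ directly — namely any vertex-like point obtained by the following pruning: while $|\supp\mu'| > m$, the support vectors are affinely dependent after imposing the $m$ constraints, so there is a signed perturbation supported on $\supp\mu'$ preserving $\Phi$ and normalization; push it until a coordinate vanishes. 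This reduces the support by one without changing the $k$-local marginals, and terminates once $|\supp\mu'| \le m = O(n^k d^{2k})$.

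The main obstacle — really the only subtle point — is bookkeeping the dimension count to land the bound at $O(n^k d^{2k})$ rather than its square: one must decompose the \emph{single} distribution $\mu$ against the constraint polytope $P$ (equivalently, run the support-pruning argument on $\mu$ directly), not first split $\mu$ into vertices and then bound each vertex, since the latter composes two Carathéodory applications and squares the count. Everything else — that $X' = X(\mu')$ is Hermitian and trace-$1$ (immediate from $\mu'$ being a probability distribution and $\Tr D_\alpha = 1$), that it satisfies global positivity (immediate: $\Tr(F_\ualpha X') = \mu'(\ualpha) \ge 0$ by co-orthogonality), and that it satisfies local positivity (because $X'_I = X_I \succeq 0$ by item~1 and the hypothesis that $X$ is a qq state) — is routine and follows directly from \eqref{eq:X-I}, \eqref{eq:co-orthogonality}, and Definition~\ref{def:alt-def}.
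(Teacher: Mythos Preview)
Your proposal is correct, and the final support-pruning argument lands the right bound. The paper's route is slightly cleaner and avoids the detour that momentarily gave you the squared count: rather than working in the preimage polytope $P = \{\nu : \Phi(\nu)=\Phi(\mu)\}$, the paper works directly in the \emph{image}. The point $\Phi(\mu)$ lies in $\BBR^M$ (with $M=\binom{n}{k}d^{2k}$) and is manifestly in the convex hull of $\{\Phi(\delta_\ubeta)\}_\ubeta$, so a single application of Carath\'eodory writes $\Phi(\mu) = \sum_{i=1}^{M+1} p_i \Phi(\delta_{\ubeta_i})$, and then $\mu' \EqDef \sum_i p_i \delta_{\ubeta_i}$ is the desired sister distribution with $|\supp \mu'|\le M+1$. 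This is a one-step argument with no need to analyze vertices of $P$ or to iterate. Your pruning argument is, of course, essentially the standard proof of Carath\'eodory (or equivalently the basic-feasible-solution bound from LP), so in a sense you have unpacked the same idea; but applying Carath\'eodory on the image side sidesteps the whole issue of composing two bounds that tripped you up midway.
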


\begin{proof}
  The proof of \Thm{thm:sister} follows the ideas of
  \cc{ref:Pitowsky1991-CorrPolytope}, which used
  Carath{\'e}odory's theorem\cc{ref:Caratheodory1911} to show that
  the marginal problem is in NP.
  
  Let $X(\mu)=\sum_\ualpha \mu(\ualpha) D_\ualpha$ be a $k$-local
  qq-state. By \Eq{eq:X-I}, the $k$-local marginal $X_I(\mu)$ is
  given by the corresponding $k$-local marginal of $\mu$, and
  therefore it is sufficient to show that for every multivariate
  distribution $\mu$, there exists a sister distribution $\mu'$ that
  has the same $k$-local marginals but with a $O(n^k\cdot d^{2k})$
  support. 
  
  To do that, we first map $\mu$ to the space of marginals. The
  distribution $\mu$ can be represented as a vector in $\BBR^L$ with
  non-negative entries, where $L=(d^2)^n$. To stress the fact that
  we now view it as a vector, we shall slightly abuse the Dirac
  notation and denote it by $\ket{\mu}$. Similarly, there
  are $\binom{n}{k}$ different $k$-local marginals $\mu_I$, which
  can all be arranged in a vector $(\mu_{I_1}, \mu_{I_2}, \ldots)$.
  Each of these marginals can be represented as a vector in
  $\BBR^{(d^2)^k}$, and therefore the set of all $k$-local marginals
  can be represented as a vector in $\BBR^M$, with
  $M=\binom{n}{k}\cdot (d^2)^k = O(n^k\cdot d^{2k})$.  We denote this mapping by
  $\mcM:\BBR^L \to \BBR^M$, and note that it is a linear map. Using
  the Dirac notation once more, the vector of marginals is denoted
  by $\ket{\mcM(\mu)}$. 
    
  Next, for every string $\ubeta=(\beta_1, \ldots, \beta_n)$, 
  denote by $\delta_\beta$ the sharp distribution
  \begin{align*}
    \delta_\beta(\ualpha) \EqDef 
    \begin{cases}
      1, & \ualpha=\ubeta, \\
      0, & \text{otherwise} 
    \end{cases} .
  \end{align*}
  Using our Dirac notation, we can write $\ket{\mu} = \sum_\ubeta
  \mu(\ubeta) \ket{\delta_\ubeta}$, and by the linearity of $\mcM$,
  it follows that $\ket{\mcM(\mu)} = \sum_\ubeta \mu(\ubeta)
  \ket{\mcM(\delta_\ubeta)}$. By definition, $\ket{\mcM(\mu)}$ is a
  vector in the convex hull of the vectors
  $\{\ket{\mcM(\delta_\ubeta)}\}_\ubeta$ in $\BBR^M$. Therefore, by
  Carath{\'e}odory's theorem\cc{ref:Caratheodory1911}, it can be
  written as a convex combination of at most $M+1$ of these vectors:
  \begin{align*}
    \ket{\mcM(\mu)} 
      &= \sum_{i=1}^{M+1} p_i \ket{\mcM(\delta_{\ubeta_i})} .
  \end{align*}
  It follows that the distribution $\mu'$, whose vectorial
  representation is
  \begin{align*}
    \ket{\mu'} \EqDef \sum_{i=1}^{M+1} p_i\ket{\delta_{\ubeta_i}}
  \end{align*}
  has the same marginals as $\mu$ and $|\supp(\mu')|\le M+1 =
  O(n^k\cdot d^{2k})$.
\end{proof}

\section{The \texorpdfstring{$k$}{k}-local Hamiltonian problem 
  over quasi-quantum states}
\label{sec:k-LH}

In this section we present the $k$-local Hamiltonian problem, and
define it in the context of quantum states and quasi-quantum states.
Using the sister states (\Thm{thm:sister}), we show that this
problem is inside NP. In the next section we shall show that it is
also NP-hard, thus making it an NP-complete problem.

Given a system of $n$ qubits $\BBH = (\BBC^2)^{\otimes n}$, a
$k$-local Hamiltonian is a Hermitian operator $H\in L(\BBH)$ given
as $H=\sum_{i=1}^m h_i$, where each $h_i$ is a Hermitian operator
with $\norm{h_i}=O(\poly(n))$ that acts non-trivially on at most $k$
qubits and $m\le \binom{n}{k}\le n^k$. In other words, each $h_i$
can be written as $h_i=\hat{h}_i\otimes\Id_{rest}$, where
$\hat{h}_i$ is a Hermitian operator on the space of $k$ qubits, and
$\Id_{rest}$ is the trivial operation on the rest of the qubits in
the system.

For any local Hamiltonian $H$, we define its ground energy and
ground states as follows:
\begin{align*}
  \eps_0 &\EqDef \min_{\rho\in \SQ} \Tr(H\rho) , \\
  \rho_0 &\EqDef \minarg_{\rho\in \SQ} \Tr(H\rho) ,
\end{align*}
{where $\SQ$ is the set of all quantum states.} As $\SQ$ is a convex
set and $\Tr(H\rho)$ is a linear function, there always exist a
minimum at the boundary of $\SQ$, which is a pure state $\rho_0 =
\ketbra{\gs}{\gs}$. Therefore, $\eps_0 = \min_{\ket{\psi}}
\bra{\psi}H\ket{\psi}$ and $\ket{\gs} =\minarg_{\ket{\psi}}
\bra{\psi}H\ket{\psi}$. This means that $\eps_0$ is the minimal
eigenvalue of $H$ and $\ket{\gs}$ is an eigenstate of $H$ with a
minimal eigenvalue. When considering approximations to the ground
energy, it is natural to express them in terms of a scale, which we
define below
\begin{definition}[The approximation scale $L_H$]
\label{def:L-H}
  Given a local Hamiltonian $H=\sum_i h_i$, we define its
  approximation scale $L_H$ by
  \begin{align}
    \label{def:L-H-formula}
    L_H \EqDef \sum_i \norm{h_i} .
  \end{align}
\end{definition}
We note that $L_H$ can be viewed as an efficiently calculable
proxy to the Hamiltonian norm $\norm{H}$, and that $\norm{H}\le
L_H\le \poly(n)$ and therefore $-L_H \le \eps_0 \le L_H$.

The $k$-local Hamiltonian  problem is a promise problem that
formalizes the complexity of approximating $\eps_0$:
\begin{problem}[The $k$-local Hamiltonian problem]
\label{prob:qLH} 
  Given a $k$-local Hamiltonian $H=\sum_i h_i$ over $n$ qubits and
  two real numbers $a<b$, the $k$-$\LH(H,a,b)$ problem is the
  following promise problem. We are promised that the ground energy
  $\eps_0$ of $H$ is either $\eps_0\le a$ (the {\rm YES} case) or
  $\eps_0\ge b$ (the {\rm NO} case). We have to decide which case
  holds.
\end{problem}

A seminal result by Kitaev is that the $k$-local Hamiltonian problem
is QMA-complete\cc{ref:Kitaev1999-LH, ref:Kitaev2002-QI} for $k=5$
and $b-a=\Theta(L_H/\poly(n))$. This result is often considered to
be the quantum analog of the famous Cook-Levin
theorem\cc{ref:Cook71, ref:Levin73}. Kitaev's result has been
considerably strengthened in several ways.  For example, it has been
shown to remain QMA-hard when $k=2$\cc{ref:Kempe2006-LH}, also
when the interactions are described by a planar 
graph\cc{ref:Oliveira2008-2D}. It was even proved to be QMA-hard for
1D systems with nearest-neighbor $h_i$ that act on qudits of
dimension $d=8$\cc{ref:Aharonov2009-1D, ref:Nagaj2008-PhD,
ref:Hallgren2013-1D}. All of these results hold for a promise gap
$b-a = \Theta(L_H/\poly(n))$.

What happens to the complexity of the problem when the promise gap
increases? By the classical PCP theorem, there exists a constant
$\xi >0$ such that the $k$-$\LH(H,a,b)$ problem for a classical
Hamiltonian $H$ becomes NP-hard for $b-a=\xi\cdot L_H$. Indeed,
taking $H$ to be a classical Hamiltonian, where each $h_i$ is a
projector diagonal in the computational basis, the $k$-$\LH(H,a,b)$
problem with $a=0$ becomes the promise version of the MAX-$k$-SAT
problem, for which there is a hardness of approximation result by
the PCP theorem\cc{ref:Arora1998-PCP1, ref:Arora1998-PCP2,
ref:Dinur2007-PCP}.

But what is the complexity of the $k$-$\LH(H,a,b)$ problem for
\emph{general} $k$-local Hamiltonians when $b-a=\Theta(L_H)$? Does
it remain QMA-hard, or does it lose its hardness and becomes, for
example, an NP problem? This is the question of the quantum PCP
conjecture (in its Hamiltonian formulation), which speculates that
the problem remains QMA-hard in that limit. Unfortunately, despite
almost two decades of research, this question remains wide open.
There are several results that support this conjecture, like a (very
restricted) type of gap amplification
routine\cc{ref:Aharonov2009-DL}, or the proof of the NLTS
conjecture\cc{ref:Anurag2023-NLTS}, but also results that suggest
its incorrectness such as the existence of product-state
approximations for a wide range of low-energy
states\cc{ref:Brandao2013-prod-states}, or examples where the
complexity of a quantum optimization problem changes its hardness as
the promise gap changes from $1/\poly(n)$ to a
constant\cc{ref:Aharonov2019-stoqPCP,
ref:Gharibian2022-dequant-QSVT}. See \cRef{ref:Buherman2024:qPCP}
for a recent results.

The aim of the current work is to make progress on this question by
studying the $k$-$\LH(H, a,b)$ problem over the \emph{$k$-local
quasi-quantum states} as a toy-model for the actual quantum problem.
As we shall show, over these states the complexity of the $k$-$\LH$
problem does \emph{not} change; it remains NP-complete for promise
gaps ranging from $b-a=L_H\cdot e^{-O(n)}$ to $b-a=\Theta(L_H)$.

We begin by formalizing the $k$-local Hamiltonian problem over the
set of quasi-quantum states. Going from $\SQ$ to $\Sqq$, we define
\begin{definition}[The quasi-quantum ground energy and ground state]
  Given a $k$-local Hamiltonian $H=\sum_i h_i$, its  quasi-quantum
  ground energy $\eps_0^{qq}$ and ground state $X_0$ are defined by
  \begin{align}
  \label{def:eps0-qq}
    \eps^{qq}_0 &\EqDef \min_{X\in \Sqq} \Tr(H X) , \\
    X_0 &\EqDef \minarg_{X\in \Sqq} \Tr(HX) .
  \label{def:X0-qq}
  \end{align}
\end{definition}
 Note that we use the \emph{same} parameter $k$ for the locality of
  $H$ and the locality of the qq sates.
  
As a quick remark, note that $\Tr(HX) = \sum_i \Tr(h_iX) =
\sum_i\Tr(h_i X^{(i)})$, where we slightly abuse the notation and
let $X^{(i)}$ be the marginal of $X$ on the non-trivial qubits of
$h_i$. Since $X$ is a $k$-local qq state, it follows that
$X^{(i)}\succeq 0$ and $\Tr(X^{(i)})=1$, and therefore $|\Tr(h_i
X^{(i)})|\le \norm{h_i}$, from which we deduce that $|\eps^{qq}_0| =
|\sum_i \Tr(X^{(i)}_0 h_i)| \le \sum_i\norm{h_i} = L_H$ is
well-defined. Moreover, as $\SQ\subset \Sqq$, it follows that
$\eps_0^{qq} \le \eps_0$.

Having defined the quasi-quantum ground energy, we now define the
$k$-local Hamiltonian problem over the quasi-quantum states in
parallel with Problem~\ref{prob:qLH}:
\begin{problem}[The quasi-quantum $k$-local Hamiltonian problem]
\label{prob:qqLH} 
  Given a local Hamiltonian $H=\sum_{i=1}^m h_i$
  over $n$ qubits, and two real numbers $a<b$, the
  $k$-$\qqLH(H,a,b)$ problem is the following promise problem. We
  are promised that the qq ground energy $\eps^{qq}_0$ of $H$ is
  either $\eps^{qq}_0\le a$ (the {\rm YES} case) or
  $\eps^{qq}_0\ge b$ (the {\rm NO} case). We have to decide which
  case holds.  
\end{problem}

An easy corollary of \Thm{thm:sister} is that every qq state $X$ 
has a sister state $X'$ with a polynomial support and the same
marginals. Since the quasi-quantum energy $\Tr(XH)$ of a $k$-local
Hamiltonian only depends on the $k$-local marginals, it follows that
$X'$ and $X$ have the same energy. This leads us to the following
theorem, which places the $k$-local Hamiltonian problem over
quasi-quantum states inside NP:
\begin{theorem}
\label{thm:inside-NP}
  The quasi-quantum $k$-local Hamiltonian problem 
  $k$-$\qqLH(H,a,b)$ (Problem~\ref{prob:qqLH}) is inside {\rm NP}
  for $b-a = \Omega(L_H\cdot 2^{-O(n)})$.
\end{theorem}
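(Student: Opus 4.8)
The plan is to exhibit an $\mathrm{NP}$ verification protocol for the {\rm YES} case, using the sister-state construction of \Thm{thm:sister} to produce a polynomial-size classical witness. First I would observe that the quasi-quantum energy $\Tr(HX) = \sum_i \Tr(h_i X^{(i)})$ depends on $X$ only through its $k$-local marginals $\{X_I\}_{|I|=k}$, equivalently through the $k$-local marginals $\{\mu_I\}$ of the underlying distribution $\mu$. Hence, if $\eps_0^{qq}\le a$, let $X_0=X(\mu_0)$ be an optimal qq state; by \Thm{thm:sister} it has a sister $X_0'=X(\mu_0')$ with the same $k$-local marginals and with $|\supp(\mu_0')|\le M+1 = O(n^k d^{2k})$. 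Since $X_0'$ has the same $k$-local marginals as $X_0$, it is itself a legal qq state (local positivity depends only on the $k$-local marginals, and global positivity is automatic for any $X(\mu')$ coming from an honest distribution $\mu'$), and $\Tr(HX_0')=\Tr(HX_0)\le a$. The witness is a description of $\mu_0'$: the list of $N := M+1 = O(n^k)$ assignments $\ubeta_1,\dots,\ubeta_N \in \{0,1,2,3\}^n$ in its support together with rational approximations $\tilde p_1,\dots,\tilde p_N$ of the weights, each specified to $\poly(n)$ bits of precision. This is a string of length $\poly(n)$.

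The verifier then does the following, all in classical polynomial time. (i) \textbf{Marginal computation:} for each of the $m=O(n^k)$ local terms $h_i$, acting on qubit set $I_i$, compute the marginal $\tilde\mu_{I_i}(\ualpha_{I_i}) = \sum_{j:\, (\ubeta_j)_{I_i}=\ualpha_{I_i}} \tilde p_j$ — a sum of $N$ terms over a domain of size $d^{2k}=O(1)$ — and then assemble the reduced operator $\tilde X_{I_i} = \sum_{\ualpha_{I_i}} \tilde\mu_{I_i}(\ualpha_{I_i})\, D_{\alpha_{i_1}}\otimes\cdots\otimes D_{\alpha_{i_k}}$ via \Eq{eq:X-I}; this is a fixed-size Hermitian matrix with entries that are $\poly(n)$-bit rationals. (ii) \textbf{Local positivity check:} verify $\tilde X_{I_i}\succeq -\eta\,\Id$ for a tiny threshold $\eta = 2^{-\poly(n)}$, e.g. by checking the signs of the leading principal minors, or the characteristic polynomial, to the same precision. (iii) \textbf{Energy check:} compute $\widetilde{\Tr(HX)} = \sum_i \Tr(h_i \tilde X_{I_i})$ and accept iff it is $\le a + \tfrac12(b-a)$. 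Completeness: an honest prover sending the exact $\mu_0'$ (and then the verifier rounding) passes with room to spare, since $b-a = \Omega(L_H 2^{-O(n)})$ is still exponentially larger than the rounding errors we incur. Soundness: in the {\rm NO} case $\eps_0^{qq}\ge b$; I need to argue that no purported witness can fool the verifier. The point is that any accepted witness, after the $\eta$-slack is accounted for, yields $k$-local marginals that are $\eta$-close to those of a genuine qq state (by perturbing the $D_\ualpha$-expansion slightly to restore exact positivity), whose energy is therefore $\ge b - O(L_H\cdot\mathrm{poly}(d^k)\,\eta) > a + \tfrac12(b-a)$ once $\eta$ is chosen small enough relative to the $2^{-O(n)}L_H$ gap; this contradicts acceptance.

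The main obstacle is purely numerical bookkeeping: one must check that all quantities — the marginals $\tilde\mu_I$, the matrix entries of $\tilde X_I$ (which involve the fixed matrices $D_\alpha$ with entries in $\mathbb{Q}[\omega]$, $\omega = e^{2\pi i/3}$), the eigenvalue/minor tests, and the energy sum $\sum_i \Tr(h_i\tilde X_{I_i})$ with $\norm{h_i}=\poly(n)$ and $m=\poly(n)$ terms — can all be carried out with $\poly(n)$ bits so that the accumulated error stays below the allotted fraction of $b-a$. Since $b-a$ may be as small as $e^{-O(n)}L_H$, it suffices to carry $O(n) + O(\log L_H) = \poly(n)$ bits of precision throughout; the positivity threshold $\eta$ and the energy-check slack are then both set to, say, $2^{-cn}$ for a suitable constant $c$, and standard perturbation bounds (a Hermitian matrix within $\eta$ in operator norm of the PSD cone differs in energy against a bounded observable by at most $\norm{h_i}\eta$) close the argument. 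No genuine mathematical difficulty arises beyond this; the content is entirely in \Thm{thm:sister}, and the remainder is a routine — if slightly tedious — verification that a $\poly(n)$-bit certificate suffices.
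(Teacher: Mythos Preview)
Your approach is essentially the paper's: use \Thm{thm:sister} to obtain a polynomially supported sister distribution as the witness, have the verifier recompute the $k$-local marginals, check local positivity, and check the energy; the paper dismisses the precision analysis as ``standard techniques,'' and your more careful discussion of the $\eta$-slack and $\poly(n)$-bit bookkeeping is a welcome elaboration of that.

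There is, however, one slip that breaks soundness as written. In step (ii) you only verify $\tilde X_{I_i}\succeq -\eta\Id$ for the subsets $I_i$ that are \emph{supports of Hamiltonian terms}. But the definition of $\Sqq$ requires $X_I\succeq 0$ for \emph{every} $k$-subset $I$, and there may be many more $k$-subsets than Hamiltonian terms (e.g.\ a nearest-neighbor $2$-local Hamiltonian on a line has $n-1$ terms but $\binom{n}{2}$ pairs). If you only check the Hamiltonian supports, a dishonest prover in the {\rm NO} case could send a distribution $\mu$ whose marginals are PSD on those subsets but not on the others; such an $X(\mu)$ is not in $\Sqq$, so the bound $\Tr(HX(\mu))\ge \eps_0^{qq}\ge b$ does not apply, and your soundness perturbation argument (``$\eta$-close to those of a genuine qq state'') fails. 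The fix is immediate: in step (ii) loop over all $\binom{n}{k}=O(n^k)$ $k$-subsets instead of just the $m$ Hamiltonian supports---this is still polynomial work. The paper's proof is careful to say the verifier checks ``all the marginals'' for exactly this reason.
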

\begin{proof}
  Given a $k$-local Hamiltonian $H=\sum_i h_i$ and two constants
  $a<b$, the verification protocol is as follows: the prover sends
  the verifier an efficient description of the quasi-quantum ground
  state in the form of the probability distribution $\mu(\alpha)$
  with a polynomial support\footnote{In fact, it is sufficient for
  the prover just to send $\supp(\mu)$, since the minimal energy
  over that set can then be efficiently computed by the verifier
  using a semi-definite program.}.  The verifier calculates all the
  marginals of this operator and verifies that they are all PSD so
  that the witness is indeed a $k$-local quasi-quantum state. He
  uses these marginals to calculate the energy and accept if it is
  below $a$. By definition the protocol is complete. It is also
  sound, because if the ground energy is $\ge b$, and the marginals
  of the prover are PSD, then its witness is a qq state, and hence
  the energy calculated by the verifier will be $\ge b$. Using
  standard techniques, all of the above calculations can be done
  efficiently to within exponential precision using a description of
  the qq state with $\poly(n)$ bits.
\end{proof}

\section{The quasi-quantum PCP theorem}
\label{sec:NP-hardness}

In this section we show that $k$-$\qqLH(H,a,b)$ problem
(Problem~\ref{prob:qqLH}) remains NP-hard for $b-a = \Theta(L_H)$,
thereby proving a PCP hardness-of-approximation of the local
Hamiltonian problem over quasi-quantum states. Our proof uses a
reduction from 4-coloring. Specifically, we shall prove the
following theorem:
\begin{theorem}[The quasi-quantum PCP theorem]
\label{thm:main-NP-hard}
  There exists constants $0<\xi_a<\xi_b<1$ such that the problem
  $3$-$\qqLH(H, \ \xi_a\cdot L_H, \ \xi_b\cdot L_H)$ is
  {\rm NP}-hard. 
\end{theorem}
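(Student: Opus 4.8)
The plan is to reduce from a gap-amplified 4-coloring problem, which is NP-hard by the classical PCP theorem: there are constants $0<\gamma_1<\gamma_2<1$ such that it is NP-hard to distinguish graphs $G$ that are 4-colorable (indeed, where many colorings violate few edges) from graphs where every 4-coloring violates at least a $\gamma_2$-fraction of edges. For the NO case the argument is easy, as sketched in the overview: if $X=X(\mu)$ is any $3$-local qq state for the Hamiltonian $H$ we build, then $\Tr(HX)$ is an \emph{expected} violation count over the distribution $\mu$, and since every assignment in the support violates many edges, $\eps_0^{qq}\ge b$ for a suitable $b=\Theta(L_H)$. The entire difficulty is the YES case: I must exhibit a low-energy $3$-local qq state when $G$ is colorable, and a naive point distribution on one coloring fails local positivity (by \Lem{lem:q-support}, a PSD operator on two qubits needs support size $\ge 9$, and analogously on qudits a point marginal cannot be PSD).

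First I would set up the \textbf{scapegoat construction} of \Sec{sec:bird-eye}: starting from the $2$-local Heisenberg-type coloring Hamiltonian $H_G=\sum_{e}h_e$ with $h_e=\sum_{\alpha}F_\alpha^{(i)}\otimes F_\alpha^{(j)}$, I add three auxiliary ``scapegoat'' qubits and partition the edge set into three pieces, conditioning each local term in piece $r$ on the $r$-th scapegoat qubit being in a designated ``inactive'' state; this converts each $2$-local term into a $3$-local term and is why the theorem is stated for $3$-local Hamiltonians. The point of the scapegoat is that, by placing a single compensating penalty on the scapegoat qubits, one can build a qq state whose energy on the \emph{many} original constraints is forced onto one cheap constraint, drastically lowering $\eps_0^{qq}$ while keeping local positivity intact. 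Second, I would construct what the overview calls a $\lambda$-solution: given that the partial constraint system associated with one scapegoat has many satisfying (or low-violation) colorings $\ualpha^{(1)},\ldots,\ualpha^{(N)}$, take a suitably weighted mixture $\mu=\sum_j p_j\delta_{\ualpha^{(j)}}$ so that every $2$-local (and $3$-local, including scapegoat qubits) marginal of $X(\mu)$ is not merely PSD but \emph{strictly} positive, with least eigenvalue $\ge\lambda>0$; strictness is what gives room to glue the three scapegoat pieces together and to absorb the energy shift. The fact that enough low-violation colorings exist with enough ``diversity'' to smear the marginals over at least $9$ configurations per pair is exactly where I would invoke Dinur-style expander/coloring machinery together with standard facts about random proper colorings of bounded-degree graphs.

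Third, I would \textbf{assemble the global qq state}: combine the three $\lambda$-solutions (one per scapegoat piece) with an appropriate state of the scapegoat qubits themselves so that the composite $X$ satisfies Definition~\ref{def:qq-state} --- global SIC-POVM positivity (which for a convex mixture of point distributions is automatic, since $\Tr(F_\ualpha X(\mu))=\mu(\ualpha)\ge0$) and local positivity on every $3$-subset. Then I would \textbf{bound the energy}: on each scapegoat-$r$ block the conditioned terms contribute nothing when the scapegoat is inactive and the coloring is (nearly) proper, so the only cost is the $O(1)$ compensating terms on the scapegoat qubits, giving $\Tr(HX)\le a$ with $a=\Theta(L_H)$ and, crucially, $a<b$ with both $\Theta(L_H)$; this yields the constants $0<\xi_a<\xi_b<1$. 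Finally I would check the reduction is polynomial-time and that the promise gap $\xi_b-\xi_a$ is an absolute constant inherited from the classical PCP gap $\gamma_2-\gamma_1$ (possibly shrunk by a constant from the scapegoat overhead and the partition into three pieces).

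The main obstacle I expect is the YES-case $\lambda$-solution: one must guarantee that the low-violation colorings of each scapegoat sub-system are numerous \emph{and} mutually ``spread'' enough that, for every pair of qubits, the induced distribution over the $16$ color-pairs has support $\ge 9$ with a quantitatively bounded-below probability on a PSD-certifying set, \emph{and} that the corresponding $3$-local operator (two color qubits plus one scapegoat qubit) is PSD --- since $D_{\alpha}$ has a negative eigenvalue, strict positivity is a genuine constraint, not a formality. Managing this while keeping the energy penalty $a$ strictly below $b$ (so the scapegoat's compensating term is cheap relative to $L_H$ but still dominates the smeared original terms) is the delicate balancing act; everything else (the NO-case averaging bound, placement inside NP via \Thm{thm:sister}, polynomial-time bookkeeping) is comparatively routine. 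I would therefore structure the proof so that the $\lambda$-solution lemma is isolated and proved first, with the scapegoat energy accounting and the final gap computation as corollaries.
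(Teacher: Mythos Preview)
Your high-level plan is on the right track (three scapegoats, partition into three sub-Hamiltonians, $\lambda$-solutions, reduction from a gap-amplified coloring problem), but there is a genuine structural gap in your YES-case assembly and energy accounting.

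You propose to ``combine the three $\lambda$-solutions (one per scapegoat piece)'' and claim that ``the only cost is the $O(1)$ compensating terms on the scapegoat qubits''. Neither is correct. In the paper's Hamiltonian the scapegoat penalty is $\xi|E|\cdot(\Id - F_0\otimes F_0\otimes F_0)$, which is $\Theta(L_H)$, not $O(1)$; and each $\lambda$-solution $X_\ell$ only legally colors the subgraph $G_\ell$, so in the branch where it is paired with \emph{its} scapegoat inactive ($A_0$) and the other two active ($A_1$), the coloring constraints of $E_\ell$ vanish but the scapegoat penalty still contributes the full $\xi|E|$. A convex combination of the three $\lambda$-solutions alone therefore has qq-energy exactly $\xi|E|$ --- the same as the NO threshold --- and produces no gap. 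The component you are missing is $X_0 \EqDef X_\sym(\ualpha^*)$, the symmetrized \emph{global} legal coloring, tensored with $A_0A_0A_0$: this is the \emph{only} branch that pays neither coloring nor scapegoat penalty and hence is the sole source of the energy drop $\xi|E|(1-\eps\delta)$.

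The role of the three $\lambda$-solutions is not to lower the energy; it is to serve as \emph{cushions} that repair the $3$-local positivity failure of $X_0$. Concretely, when one scapegoat qubit is traced in (case ``two bulk qubits $i,j$ plus one scapegoat''), the coefficient of the inactive state $A_0$ collapses to $\eps\big(\delta\, X_0^{(ij)} + \tfrac{1-\delta}{3} X_\ell^{(ij)}\big)$; since $X_0^{(ij)}$ can have eigenvalue $-2$ on non-edges of $G$, you need $X_\ell^{(ij)}\succeq \lambda\Id$ \emph{for every non-edge of $G_\ell$} (not of $G$) to make this PSD. This is precisely the content of \Def{def:lam-sol}. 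A fourth ingredient you omit, $X_{\max}\otimes A_1A_1A_1$ with large weight $1-\eps$, is what guarantees PSD-ness of the three-bulk-qubit marginals. Finally, the existence of a $(3,\lambda)$-decomposition is not automatic for a generic gap-amplified $4$-coloring instance; the paper manufactures it via a specific reduction chain (degree-reduced $3$-coloring $\to$ add a hub vertex $\to$ expanderize $\to$ equality gadgets) that forces each of the three subgraphs to consist only of cycles and ``third-kind'' graphs, for which $\lambda$-solutions are constructed by hand. Your appeal to ``random proper colorings of bounded-degree graphs'' would not suffice without this structural control.
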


In the next subsections we prove this theorem. We begin by an
introductory section that discusses the 4-coloring problem in the
context of the qq states and local Hamiltonians. As we shall see, on
one hand, 4-coloring is a natural problem for qq states over qubits,
but on the other hand, it cannot be straightforwardly applied to
prove NP-hardness. After that, we give a bird-eye view of our
solution, which uses the so-called scapegoat mechanism and
$\lambda$-solutions. In the rest of the subsections, we give our
full proof.

\subsection{The 4-coloring problem over qq states}
\label{sec:4-coloring}

Consider the vertex 4-coloring problem over a graph $G=(V,E)$ with
$n=|V|$ vertices, asking whether $G$ has a legal 4-coloring. By
definition, a $4$-\emph{coloring} is a mapping of the vertices into
$\{0,1,2,3\}$, called the colors, and it is called \emph{legal} if
the two endpoints of every edge have different colors.  This problem
can be naturally mapped to a 2-local Hamiltonian over 2-local qq
states as follows. We first put a qubit on every vertex. Then any
string $\ualpha=(\alpha_1, \ldots, \alpha_n) \in \{0,1,2,3\}^n$ of
the POVM basis can be regarded as a possible coloring of the
vertices. To force the coloring to be legal, we penalize neighboring
qubits that share the same color. We do this using the Hamiltonian
\begin{align}
\label{def:H_G}
  H_G &\EqDef \sum_{e\in E} h_e, \\
    h_e &\EqDef \sum_{\alpha=0}^3 F_\alpha^{(i)} \otimes F_\alpha^{(j)} 
  \quad \text{for every $e=\{i,j\}\in E$} .
\label{eq:coloring-h}
\end{align}
This way, 
\begin{align*}
  \Tr(h_e D_{\alpha}\otimes D_\beta) 
  = \begin{cases}
      0 ,& \alpha\neq \beta \\
      1 ,& \alpha =\beta 
    \end{cases}
\end{align*}
and therefore for every string $\ualpha$,
\begin{align*}
  \Tr(H_G D_{\ualpha}) = \text{number of coloring violations of
  $\ualpha$ in $G$} .
\end{align*}
If $G$ is not 4-colorable, $\Tr(H_GD_\ualpha)\ge 1$ for each
$\ualpha$.  In such case, for every distribution $\mu$ over
4-colorings, we have $\Tr(H_GX(\mu)) \ge 1$, where we recall that
$X(\mu)=\sum_\ualpha \mu(\ualpha) D_\ualpha$ (see \Eq{def:X-mu}).

Interestingly, an explicit calculation shows that $h_e =
\frac{1}{3}(\Id-\ketbra{\psi_-}{\psi_-})$, where $\ket{\psi_-} =
\frac{1}{\sqrt{2}}(\ket{01}-\ket{10})$, which can also be written as 
\begin{align*}
  h_e = \frac{1}{12}(\sigma_X\otimes \sigma_X 
    + \sigma_Y\otimes\sigma_Y + \sigma_Z\otimes \sigma_Z) 
    + \frac{1}{4}\Id, 
\end{align*}
where $\sigma_X, \sigma_Y, \sigma_Z$ are the Pauli matrices. This
is, up to an overall shifting by unity and scaling of the
Hamiltonian, the anti-ferromagnetic Heisenberg Hamiltonian (AFH).

The above construction shows that the mapping of the 4-coloring
problem to the $2$-local Hamiltonian problem is sound. However,
completeness is broken. Indeed, assume that $G$ is $4$-colorable,
and let $\ualpha^*= (\alpha_1, \alpha_2, \ldots, \alpha_n)$ be a
legal 4-coloring. Then $\Tr(H_G D_{\ualpha^*})=0$, however
$D_{\ualpha^*}$ \emph{is not a $2$-local qq state} --- every
marginal of it has some negative eigenvalues.

We can partially tackle this problem by \emph{symmetrizing} the
solution. The negativeness of our solution stems from it being made
of a single $D_\ualpha$ operator. To avoid negativity, we want to
``cushion'' the negativity in $D_\ualpha$ by taking a convex
combination of many such solutions. This can be done
by considering all the $4!=24$ color permutations of the set of 4
colors $\{0,1,2,3\}$ for a specific coloring.  Observe that if
$\ualpha^*=(\alpha_1, \alpha_2, \ldots, \alpha_n)$ is a legal
coloring of some graph $G=(V,E)$, then so is $\pi(\ualpha^*) \EqDef
\big(\pi(\alpha_1), \pi(\alpha_2), \ldots, \pi(\alpha_n)\big)$,
where $\pi(\cdot)$ denotes any one of the 24 color permutations. We
therefore consider the the \emph{symmetric coloring operator} for
specific colorings:

\begin{definition}[Symmetric coloring operator]
\label{def:symm-coloring operator} 
  Let $\ualpha^*$ be a coloring of $G$. The symmetric-coloring
  operator of $\ualpha^*$ is
  \begin{align}
    X_\sym(\ualpha^*) 
      \EqDef \frac{1}{24}\sum_\pi D_{\pi(\ualpha^*)} ,
  \end{align}
  where $\sum_{\pi}$ denotes the sum over all $24$ possible color
  permutations. 
\end{definition}
Unfortunately, as we show below, also $X_\sym(\ualpha^*)$ is not a
proper 2-local qq state, and this is why finding a sound reduction
from 4-coloring to the qq LH problem is a not an easy task. To see
this, we first define a large family of operators to which the
symmetric coloring operators belong:
\begin{definition}[Color-invariant distribution and operator]
\label{def:color-inv} 
  A distribution $\mu(\ualpha)$ is color invariant if for every
  $\ualpha$ and a color permutation $\pi$, we have $\mu(\ualpha) =
  \mu\big(\pi(\ualpha)\big)$. For a color-invariant distribution
  $\mu$, we call the operator $X(\mu) = \sum_\ualpha \mu(\ualpha)
  D_\ualpha$ also color invariant.
\end{definition}

To describe the 2-local marginals of color-invariant operators, we
define the following two-qubits operators:
\begin{align}
  \label{def:SA}
    S &\EqDef \frac{1}{4}\sum_{\alpha=0}^3 D_\alpha\otimes D_\alpha, &
    A & \EqDef \frac{1}{12}\sum_{\alpha\neq \beta} 
      D_\alpha\otimes D_\beta.
\end{align}
Using \Lem{lem:D-props}, we deduce that $\sum_\alpha D_\alpha =
2\Id$, and therefore $4S + 12A = \sum_{\alpha,\beta} D_\alpha\otimes
D_\beta = 4\Id$. This gives us $S = \Id - 3A$. Then a direct calculation
shows that $A = \ketbra{\psi_-}{\psi_-}$, where $\ket{\psi_-}\EqDef
\frac{1}{\sqrt{2}} (\ket{01}-\ket{10})$ is the singlet state, and
therefore,
\begin{align}
\label{eq:A}
  A &= \ketbra{\psi_-}{\psi_-}, \\
  S &= \Id - 3\ketbra{\psi_-}{\psi_-} .
\label{eq:S}
\end{align}
We see that while $A$ is a rank-1 projector with eigenvalues
$(1,0,0,0)$, $S$ has eigenvalues $(1,1,1,-2)$ and is therefore not a
PSD operator. The following lemma uses this to relate the the form
of the local marginal in color invariant operators to the 
\emph{collision probability} $\tau_\mu^{(ij)} \EqDef
\Prob_\mu(\alpha_i=\alpha_j)$, which is the probability that the
colors at vertices $i,j$ are the same, when the colors are
distributed according to $\mu$.
\begin{lemma}
\label{lem:2marginals} 
  Let $X(\mu)$ be a color-invariant operator. Then for every $i,j
  \in \{1, \ldots, n\}$, its $1$-local and $2$-local marginals are
  given by
  \begin{align}
  \label{eq:1-marginal}
    X^{(i)}&= \frac{1}{2} \Id , \\
    X^{(ij)} &= \tau_{\mu}^{(ij)} S + (1-\tau_{\mu}^{(ij)}) A
    = \tau_{\mu}^{(ij)} \Id 
      + (1-4 \tau_{\mu}^{(ij)} )\ketbra{\psi_-}{\psi_-} ,
    \label{eq:SA-marginal}
  \end{align}
  where $ \ket{\psi_-}\EqDef \frac{1}{\sqrt{2}}(\ket{01}-\ket{10})$
  is the singlet state.
\end{lemma}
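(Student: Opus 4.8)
The plan is to push everything down to the one- and two-variable marginals of $\mu$ using \Eq{eq:X-I}, exploit color invariance to pin down the structure of those marginals, and then substitute the explicit forms of $S$ and $A$ from \eqref{eq:A}--\eqref{eq:S} together with \Lem{lem:D-props}.

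First I would treat the one-local case. By \Eq{eq:X-I} applied to $I=\{i\}$ we have $X^{(i)} = \sum_{\alpha} \mu^{(i)}(\alpha)\, D_\alpha$, where $\mu^{(i)}$ is the marginal of $\mu$ on coordinate $i$. Since $\mu$ is color invariant, so is every one of its marginals, hence $\mu^{(i)}(\alpha) = \mu^{(i)}(\pi(\alpha))$ for every color permutation $\pi$; as the permutations act transitively on $\{0,1,2,3\}$, $\mu^{(i)}$ must be the uniform distribution $\mu^{(i)}(\alpha)=\tfrac14$. Then $X^{(i)} = \tfrac14\sum_\alpha D_\alpha = \tfrac12\Id$ by the first identity of \Lem{lem:D-props}, which gives \eqref{eq:1-marginal}.

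Next I would do the two-local case. Again by \Eq{eq:X-I}, $X^{(ij)} = \sum_{\alpha,\beta}\mu^{(ij)}(\alpha,\beta)\, D_\alpha\otimes D_\beta$. The color permutations act on ordered pairs $(\alpha,\beta)\in\{0,1,2,3\}^2$ with exactly two orbits: the $4$ diagonal pairs with $\alpha=\beta$ and the $12$ off-diagonal pairs with $\alpha\ne\beta$. Color invariance of $\mu^{(ij)}$ therefore forces it to be constant on each orbit, say $\mu^{(ij)}(\alpha,\alpha)=p$ and $\mu^{(ij)}(\alpha,\beta)=q$ for $\alpha\ne\beta$. By definition $\tau_\mu^{(ij)}=\sum_\alpha\mu^{(ij)}(\alpha,\alpha)=4p$, and normalization gives $4p+12q=1$, so $p=\tau_\mu^{(ij)}/4$ and $q=(1-\tau_\mu^{(ij)})/12$. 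Recognizing the definitions \eqref{def:SA} of $S$ and $A$, this yields $X^{(ij)} = 4p\,S + 12q\,A = \tau_\mu^{(ij)}\,S + (1-\tau_\mu^{(ij)})\,A$.

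Finally I would substitute $S = \Id - 3\ketbra{\psi_-}{\psi_-}$ and $A = \ketbra{\psi_-}{\psi_-}$ from \eqref{eq:A}--\eqref{eq:S}, which collapses the previous line into $X^{(ij)} = \tau_\mu^{(ij)}\Id + (1-4\tau_\mu^{(ij)})\ketbra{\psi_-}{\psi_-}$, establishing \eqref{eq:SA-marginal}. I do not expect any genuine obstacle here; the only point requiring a moment's care is the orbit count of the $S_4$-action on ordered pairs (two orbits, of sizes $4$ and $12$), which is elementary, and the bookkeeping of the normalization constants.
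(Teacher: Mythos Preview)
Your argument is correct and is precisely the unpacking of what the paper means by ``follows directly from definition'': you reduce to the marginals via \Eq{eq:X-I}, use color invariance to force the uniform and two-orbit structure, and then invoke \Lem{lem:D-props} and Eqs.~\eqref{eq:A}--\eqref{eq:S}. The paper's own proof says exactly this in two sentences, so your approach coincides with it, just with the bookkeeping spelled out.
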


\begin{proof}
  Equation~\eqref{eq:1-marginal} and the first equality in
  \Eq{eq:SA-marginal} follow directly from definition. The
  second equality in \Eq{eq:SA-marginal} follows from
  Eqs.~(\ref{eq:A},~\ref{eq:S}).
\end{proof}
From \Eq{eq:SA-marginal} we get the following corollary.
\begin{corollary}
\label{corol:lambda-min}
  Let $X(\mu)$ 
  be a color invariant operator, and for any $i,j$, 
  let $ \lambda_{\mu}^{(ij)}$ denote 
  the minimal eigenvalue  of $X_{ij}$.  Then
  \begin{align*}
    \lambda_{\mu}^{(ij)} 
        = \min(\tau_{\mu}^{(ij)}, 1-3\tau_{\mu}^{(ij)}),
  \end{align*}
  and therefore $\lambda_{\mu}^{(ij)} > 0$ if and only if
  $0<\tau_\mu^{(ij)}<1/3$. 
 \end{corollary}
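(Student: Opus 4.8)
The plan is to simply diagonalize the $2$-local marginal using the closed form already established in \Lem{lem:2marginals}. Write $\tau\EqDef\tau_\mu^{(ij)}$ for brevity. That lemma gives
\begin{align*}
  X_{ij} = \tau\,\Id + (1-4\tau)\,\ketbra{\psi_-}{\psi_-},
\end{align*}
which is a multiple of the identity plus a multiple of the rank-$1$ projector onto the singlet. Hence $X_{ij}$ is automatically diagonal in any orthonormal two-qubit basis whose first element is $\ket{\psi_-}$ (for instance $\ket{\psi_-}$ together with the three triplet states), and no further work is needed to compute its spectrum.

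The eigenvalue on $\ket{\psi_-}$ is $\tau + (1-4\tau) = 1-3\tau$, and on each of the three triplet vectors it equals $\tau$. So the spectrum of $X_{ij}$ is $\{\,1-3\tau,\ \tau,\ \tau,\ \tau\,\}$ and its minimal eigenvalue is $\lambda_\mu^{(ij)} = \min(\tau,\,1-3\tau)$, which is the first claim. For the positivity statement, observe that $\min(\tau,\,1-3\tau) > 0$ precisely when both summands are positive, i.e.\ $\tau > 0$ and $1-3\tau > 0$; equivalently $0 < \tau < 1/3$.

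I do not anticipate any real obstacle here: the entire content sits in \Lem{lem:2marginals} and in the observation (already recorded in \Eq{eq:A}) that $A = \ketbra{\psi_-}{\psi_-}$ has eigenvalues $(1,0,0,0)$. The only points worth a quick sanity check are the boundary values of $\tau$: at $\tau = 1/3$ the formula correctly yields $\lambda_\mu^{(ij)} = 0$ (so the marginal is PSD but not strictly positive), and at $\tau = 1/4$ all four eigenvalues collapse to $1/4$, again consistent with $\min(\tfrac14,\tfrac14)$.
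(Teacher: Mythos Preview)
Your argument is correct and is precisely the intended one: the paper derives the corollary directly from \Eq{eq:SA-marginal} in \Lem{lem:2marginals}, and your diagonalization of $\tau\,\Id + (1-4\tau)\ketbra{\psi_-}{\psi_-}$ is exactly how that deduction goes.
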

Therefore, showing the positiveness of local marginals in
color-invariant operators is equivalent to showing that the
collision probability $\tau_\mu^{(ij)}$ in the underlying
distribution $\mu$ is in $(0,1/3)$.

We are now in position to explain why a symmetric-coloring operator
$X(\ualpha^*)$ is not always a 2-local qq state. Indeed if we have
more than $4$ vertices in the system, then necessarily there are two
vertices $i,j$ that share the same color with probability $1$. The
collision probability between these two vertices is
$\tau_\mu^{(ij)}=1$, hence their marginal is $X^{(ij)}(\ualpha^*) =
\frac{1}{4}\sum_{\alpha=0}^3 D_\alpha\otimes D_\alpha = S$ ---
which, as we saw, is not a PSD operator.

To summarize, there is a direct and natural mapping of the
4-coloring problem to a 2-local qq LH problem. This reduction is
trivially sound, but is not complete. If $G$ has a valid 4-coloring,
it is not clear how to turn it into a valid 2-local qq state with
vanishing energy.  Using symmetrization, we can elevate this problem
by creating a color-invariant operator, which is a convex
combination of the 24 permutations of a given coloring. But while
this construction gives PSD 1-local marginals, it fails to produce
2-local marginals that are PSD.  Intuitively, to solve this problem
our qq state must contain many more coloring strings $D_\ualpha$,
which together lead to local positivity. In the next subsection, we
give a bird's-eye view of how we solve this problem by moving to a
$k=3$ local Hamiltonian problem, in conjunction with the
\emph{scapegoat mechanism}. In the rest of the subsections, we prove
this formally.

\subsection{The scapegoat mechanism}
\label{sec:bird-eye}

As we have seen in the previous subsection, given a legal
$4$-coloring $\ualpha^*$, the symmetric operator $X_\sym(\ualpha^*)$
is not locally positive and is therefore not a legal qq state. An
easy way to fix that, would be to add it as a small perturbation to
a legal qq state that is \emph{strictly} positive in every marginal,
for example, the maximally mixed state $X_\mathrm{max}\EqDef
\frac{1}{2^n}\Id$. For every pair of vertices,
$X_\mathrm{max}^{(ij)} = \frac{1}{4}\Id$, and therefore it is clear
that for a sufficiently small (yet constant) $\eps$, the state
$\hat{X}\EqDef (1-\eps)X_\mathrm{max} + \eps X_\sym(\ualpha^*)$ is
locally positive. The problem with this suggestion is, of course,
that the energy of $\hat{X}$ will no longer be zero. In fact, as
$\Tr(X_\mathrm{max} h_e) = \frac{1}{4}$ for every coloring
constraint $h_e$, we expect the energy of $\hat{X}$ to be
$\Omega(|E|)$, which does not yield a promise gap. 

To solve this, we use an idea that we call a ``scapegoat''. It is an
ancillary qubit that we add to the system on which we
\emph{condition} the coloring constraints. We shall refer to the
original $n$ qubits of the problem as \emph{bulk qubits}. Just as in
biblical times, when a single poor goat was released to the
wilderness, carrying the sins of an entire nation\cc{ref:scapegoat},
the scapegoat qubit allows us to shift the energy penalties of the
many coloring constraints to a single qubit, thereby drastically
decreasing the overall energy penalty. Specifically, we define two
possible states of the scapegoat qubit:
\begin{definition}[The $A_0, A_1$ operators]
\label{def:Ai} 
  The single-qubit operators $A_0, A_1$ are defined by
  \begin{align}
    A_0 &\EqDef D_0 &&= 2\ketbra{0}{0} - \ketbra{1}{1} 
    &\text{(inactive scapegoat)}, \\
    A_1 &\EqDef \frac{1}{3}(D_1+D_2+D_3) &&= \ketbra{1}{1} 
    &\text{(active scapegoat)} .
  \end{align}
\end{definition}
When the scapegoat is in state $A_1$, we say that it is active, and
when it is in state $A_0$, it is inactive. Accordingly, we modify
the coloring constraint $h_e$ by conditioning it on the scapegoat
qubit being inactive. Defining the index of the scapegoat qubit to
be $n+1$, we extend $h_e = \sum_{\alpha=0}^3 F_\alpha^{(i)}\otimes
F_\alpha^{(j)}$ into a 3-qubit constraint including also the
scapegoat qubit 
\begin{align}
  h_e \otimes F^{(n+1)}_0.
\end{align}
This way, the ``scapegoated'' coloring constraint will give an
energy penalty only when $\alpha_i=\alpha_j$ \emph{and} the
scapegoat is inactive (in state $A_0=D_0$). To preserve soundness,
we add to the Hamiltonian a term $\Id-F_0^{(n+1)}$, which penalizes
configurations in which the scapegoat qubit is active. Overall, our
modified Hamiltonian is now $3$-local and is defined on $n+1$ qubits
by
\begin{align*}
  H' = \sum_{e\in E} h_e \otimes F_0^{(n+1)} + (\Id-F_0^{(n+1)}).
\end{align*}
It is easy to see that if the graph is non colorable, the energy of
any $n+1$ string $D_\ualpha$ (and therefore of any $3$-local qq
state) will be $\ge 1$. On the other hand, when the graph is
4-colorable and $\ualpha^*$ is a legal coloring, we can take our
original state $\hat{X}\EqDef (1-\eps)X_\mathrm{max} + \eps
X_\sym(\ualpha^*)$ and extend it to the scapegoat qubit by
\begin{align}
\label{eq:X-trial}
  \Xlow \EqDef (1-\eps) X_\mathrm{max}\otimes A_1
    + \eps X_\sym(\ualpha^*)\otimes A_0 .
\end{align}

By the scapegoat definition, the energy contribution of the
$X_\mathrm{max}\otimes A_1$ is $1$, while the contribution of 
$X_\sym(\ualpha^*)\otimes A_0$ is zero. Therefore, the energy of
$\Xlow$ is at most $1-\eps$, creating a promise gap.

The above construction is a step in the right direction, but it does
not solve the problem completely. The reason is that not \emph{all}
$3$-local marginals of $\Xlow$ are PSD, and therefore it is not a
$3$-local qq state. While $\Xlow^{(ijk)}\succeq 0$ for every $i,j,k$
in the bulk, the problem is with marginals where $k=n+1$ is the
scapegoat, and $i,j$ are in the bulk. An intuitive way to see this
is to imagine that we ``measure'' $\Xlow$ at the $i,j,k$ qubits. The
$3$-local positivity condition means that this is a well-defined
procedure, and that we shall always have a positive probability for
each possible outcome. In other words, the overall negativity of the
operator should be \emph{hidden} from us, as long as the measurement
is $3$-local. But then assume that we first measure the scapegoat
qubit in the POVM basis $\{E_0, E_1, E_2, E_3\}$, and obtain the $0$
result. In such case, the collapsed state would be the branch that
is tensored with $A_0=D_0$, i.e., it would be $X_\sym(\ualpha^*)$,
which is not 2-local positive. This negativity might be revealed in
the two subsequent measurements, thereby violating the $3$-local
positivity condition.

We overcome this problem by using $3$ scapegoats instead of one. As
we shall see in the next subsection, we can start our reduction from
a graph $G=(V,E)$ that is decomposable into $3$ subgraphs, $G_1,
G_2, G_3$ with non-overlapping edges. With each subgraph, we
associate a scapegoat qubit, so that in total our Hamiltonian is
defined on $n+3$ qubits. The idea is that on each subgraph the
4-coloring problem has many solutions, and so we can find a
$2$-local, color invariant qq state of legal colorings of that
subgraph with \emph{strictly positive} 2-local marginals. We call
these states $\lambda$-solutions when the minimal eigenvalue of
their 2-local marginals is at least $\lambda >0$. We shall then use
a convex combination of the maximally mixed state, the three
$\lambda$-solutions of every scapegoat, and the $X_\sym(\ualpha^*)$
operator. Unlike our previous construction, now the negativity of
$X_\sym(\ualpha^*)$ is well-hidden: if we measure one of the
scapegoats, the collapsed state will contain a combination of the
$\lambda$-solution and $X_\sym(\ualpha^*)$, which will always be
$2$-local positive. The full details of our construction are given
in the next subsection.

\subsection{\texorpdfstring{$\lambda$}{lambda}-solutions and 
  \texorpdfstring{$3$}{3}-decomposable graphs}

A central ingredient in our solution is a $\lambda$-solution, which
we now define.
\begin{definition}[$\lambda$-solution]
\label{def:lam-sol} 
  Given a graph $G=(V,E)$ and a constant $\lambda \in (0,1/4)$, we
  say that $X=\sum_\ualpha \mu(\ualpha) D_\ualpha$ is a
  $\lambda$-solution of $G$ if 
  \begin{enumerate}
    \item $X$ is a $2$-local qq state,
    
    \item Each $\ualpha$ in the support of $X$ is a legal $4$-coloring
      of $G$,
      
    \item $X$ is color invariant (see \Def{def:color-inv}),
      
    \item For every $\{i,j\}\notin E$, we have  
        $X^{(ij)} \succeq \lambda\Id$.
    \label{cond:pos}
  \end{enumerate}
\end{definition}
We add two important remarks for the definition.
\begin{remark}
  As $X$ is color invariant, according to \Cor{corol:lambda-min},
  property (4) is equivalent to requiring that $\lambda \le
  \tau_{\mu}^{(ij)} \le \frac{1}{3}(1-\lambda)$, whenever
  $(i,j)\notin E$, where we recall that $\tau_{\mu}^{(ij)} =
  \Prob_\mu(\alpha_i = \alpha_j)$.
\end{remark}
\begin{remark}
\label{rmk:product}
  Let $G$ have connected components $G_1, \ldots , G_k$, and let
  $X=X_1\otimes \cdots \otimes X_k$ be a corresponding product
  operator.  If $X_a$ is a $\lambda$-solution for $G_a$, for $1 \leq
  a \leq k$, then $X$ is a $\lambda$-solution for $G$. Indeed,
  properties (1)--(3) follow from definition.  Property (4) follows
  from the color invariance of the individual components: if $i,j$
  are in different connected components of $G$ then
  $X^{(i)}=\frac{1}{2}\Id, X^{(j)}=\frac{1}{2}\Id$ and consequently,
  $X^{(ij)} = \frac{1}{4}\Id$, which satisfies property (4). 
\end{remark}

Next, we define the notion of a $(3,\lambda)$-decomposable graph,
with which we define our $3$-local Hamiltonian and the $3$
scapegoats.
\begin{definition}[$(3,\lambda)$-decomposable graph]
\label{def:3decomp-G} 
  Given a constant $\lambda\in (0,1/4)$, a 
  $(3,\lambda)$-decomposable graph is a graph $G=(V,E)$, which is
  given together with a decomposition of its edges into $3$
  non-intersecting subsets $E=E_1 \cup E_2 \cup E_3$ such that each
  of the $3$ subgraphs $G_{\ell} = (V, E_{\ell})$, $\ell = 1,2,3$,
  has a $\lambda$-solution. 
\end{definition}

Finally, we shall use the classical PCP theorem to prove that there
exist a family of $(3,\lambda)$-decomposable graphs with a large
promise gap which are NP-hard to $4$-color. This fact is expressed
in the following theorem, whose proof is given in
\Sec{sec:proof-of-3decomp}.
\begin{theorem}
\label{thm:3decomp-hardness} 
  There exist constants $\xi\in (0,1)$ and $\lambda\in(0,1/4)$ such
  that given a $(3,\lambda)$-decomposable graph $G$ that is either
  4-colorable or at least a fraction $\xi$ of its edges cannot be
  legally colored, it is {\rm NP}-hard to decide which of these
  two cases holds.
\end{theorem}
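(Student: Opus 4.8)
The plan is a gap-preserving reduction from the classical PCP theorem to $4$-colouring of $(3,\lambda)$-decomposable graphs. By the PCP theorem \cc{ref:Arora1998-PCP1, ref:Dinur2007-PCP} together with standard gadget reductions, there is a constant $\rho\in(0,1)$ for which it is {\rm NP}-hard to distinguish a bounded-degree graph that is $4$-colourable from one in which every $4$-colouring violates at least a $\rho$-fraction of its edges; moreover, using the expander-based degree-reduction gadget from Dinur's proof \cc{ref:Dinur2007-PCP} (whose soundness is precisely the expander-mixing estimate there), one may assume the maximum degree is a small absolute constant --- Brooks' theorem fixing the smallest degree at which $4$-colouring stays nontrivial. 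I would turn such an instance into a $(3,\lambda)$-decomposable one in two further stages, the first of which I expect to be the real difficulty.

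\emph{Restructuring into three flexible pieces.} The structural goal is to rewrite the instance as a graph $G$ whose edge set splits as $E=E_1\cup E_2\cup E_3$ so that each $G_\ell=(V,E_\ell)$ is a disjoint union of constant-size pieces drawn from a fixed finite list (triangles, and a few other small graphs), while $G$ itself still carries the $4$-colouring gap. I would do this by replacing each vertex/edge of the bounded-degree instance by a constant-size gadget assembled from triangles (a triangle on fresh vertices simulates a ``$\neq$'' constraint), and then distributing the gadget triangles among the three classes by properly $3$-colouring the ``conflict graph'' on them --- two gadget triangles conflicting iff they share a vertex --- which forces the gadgets to be designed so that this conflict graph is $3$-colourable (via Vizing's/Brooks' theorems and a bounded-degree argument) and so that no single class $G_\ell$ contains enough of the structure to be frustrated on its own.

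\emph{Producing the $\lambda$-solutions and concluding.} For each of the finitely many gadget pieces $\Gamma$ one exhibits, by a finite computation, a colour-invariant distribution $\mu_\Gamma$ on the proper $4$-colourings of $\Gamma$ with $\Pr_{\mu_\Gamma}(\alpha_i=\alpha_j)\in[\lambda_0,\tfrac13(1-\lambda_0)]$ for every non-edge $\{i,j\}$ of $\Gamma$, for an absolute constant $\lambda_0\in(0,1/4)$; by \Cor{corol:lambda-min} this is exactly condition~(4) of \Def{def:lam-sol}. For a triangle this is immediate (its $24$ rainbow colourings have no non-edges); for the others one tunes the weights of the $O(1)$ colour-pattern classes, exactly as in the $P_3$ example where putting weight $3/4$ on the single-coincidence pattern yields collision probability $1/4$. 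Taking products over components (Remark~\ref{rmk:product}) and invoking \Lem{lem:2marginals} --- two vertices in different components collide with probability $\tfrac14$, which lies in the admissible window of \Cor{corol:lambda-min} --- each $G_\ell$ acquires a $\lambda_0$-solution, so $G$ is $(3,\lambda_0)$-decomposable in the sense of \Def{def:3decomp-G}; together with the gap surviving the earlier stages this yields the constants $\xi$ and $\lambda$ claimed.

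\emph{The main obstacle.} The restructuring step is the heart of the matter: the three classes must each be $4$-colourable \emph{and} flexible --- and since even a large star already fails flexibility, this forces each class to consist of bounded-degree, essentially frustration-free pieces --- while their union must remain a constant fraction away from $4$-colourable. This is the graph-theoretic counterpart of the decomposition $H=H_{\rm in}+H_{\rm prop}+H_{\rm out}$ used in the circuit-to-Hamiltonian construction: three individually frustration-free parts whose overlap is unavoidably frustrated. Making the gadget replacement, the $3$-colouring of the triangle-conflict graph, and the preservation of the PCP gap coexist is the delicate part of the argument.
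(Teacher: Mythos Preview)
Your high-level architecture matches the paper's --- start from a PCP-hard bounded-degree colouring instance, restructure so the edge set splits into three ``flexible'' parts, then build $\lambda$-solutions for each part --- but the execution diverges at exactly the point you flag as the main obstacle, and the paper resolves it by a rather different mechanism than the triangle-gadget/conflict-graph colouring you sketch.

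The paper does not start from gap-$4$-colouring. It first obtains gap-$3$-colouring on graphs of maximum degree~$4$ (\Lem{lem:max-deg-4}, via \cc{ref:Austrin2014-3colorPCP} plus Dinur-style degree reduction and $3$-colouring equality gadgets), and only then passes to $4$-colouring by adding a single apex vertex, replacing that apex by an expander cloud with equality constraints, and swapping each equality for a fixed $9$-edge $4$-colouring gadget (\Lem{lem:3-to-4}). The resulting graph naturally splits into the original degree-$4$ piece $G'$, an expander-like piece $\Gamma$, and cross edges. The $3$-decomposition is then: (i) decompose the degree-$4$ graph $G'$ into three spanning subgraphs of paths, cycles and isolated vertices (a separate graph-theoretic lemma using Thomassen/Tutte path-factor results, \Lem{lem:4decomp}); (ii) split each $9$-edge equality gadget in $\Gamma$ into three fixed $3$-edge pieces; (iii) route the cross edges so as not to close cycles. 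Crucially, the pieces are \emph{not} of bounded size: one gets arbitrarily long cycles and unbounded ``graphs of the third kind'' (\Def{def:third-kind}). The $\lambda$-solution step therefore cannot be done by finite computation; the paper instead gives explicit random-colouring constructions (colour an independent set uniformly, extend greedily, then symmetrise and average over cyclic shifts or a second auxiliary distribution) and checks case by case that all non-edge collision probabilities land strictly inside $(0,\tfrac13)$ --- this is Lemmas~\ref{lem:cycle} and~\ref{lem:third-kind}.

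Your plan of forcing each $G_\ell$ to be a disjoint union of constant-size pieces would, if it worked, make the $\lambda$-solution step a finite check --- a genuine simplification over the paper's appendix computations. But you have not supplied the gadget design or shown the conflict graph is $3$-chromatic while the union keeps the gap; that is the entire combinatorial content, and as written the proposal is a strategy outline rather than a proof. The paper sidesteps your obstacle by accepting unbounded pieces of controlled \emph{shape} and paying for that with the explicit probabilistic constructions.
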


\subsection{Constructing the 3-local scapegoat Hamiltonian}

We are now ready to define our Hamiltonian. Given a graph
3-decomposable graph $G=(V,E)$ and the constant $\xi\in(0,1)$ from
\Thm{thm:3decomp-hardness}, we define a $3$-local Hamiltonian over
$n+3$ qubits: $n$ \emph{bulk qubits} that are associated with the
$n=|V|$ vertices of $G$ and $3$ scapegoat qubits that are labeled by
$n+1, n+2, n+3$.  Recalling the definition of the edge coloring
operator $h_e$ from \Eq{eq:coloring-h}, we define our Hamiltonian by
\begin{align}
  \label{def:H}
  H_G \EqDef \overbrace{\sum_{\ell=1}^3 \Big(\sum_{e\in E_{\ell}} 
    h_e\Big)\otimes F_0^{(n + \ell)}}^{\text{coloring constraints}} 
    + \overbrace{\xi|E|\cdot (\Id-F_0^{(n + 1)}
      \otimes F_0^{(n + 2)}\otimes 
      F_0^{(n + 3)})}^{\text{scapegoat constraint}} .
\end{align}
An illustration of the terms of this Hamiltonian is given in
\Fig{fig:scapegoats}. We call the $h_e\otimes F_0^{(n+\ell)}$ terms
in the Hamiltonian the coloring constraints and the second term the
scapegoat constraint. Note that by definition $\Id-F_0\otimes
F_0\otimes F_0 = \sum_{(\alpha_1,\alpha_2, \alpha_3)\neq (0,0,0)}
F_{\alpha_1}\otimes F_{\alpha_2}\otimes F_{\alpha_3}$, and therefore
the scapegoat constraint penalizes every configuration in which the
scapegoats are different than $D_0\otimes D_0\otimes D_0 =
A_0\otimes A_0\otimes A_0$.
\begin{figure}
  \begin{center}
    \includegraphics[scale=1.0]{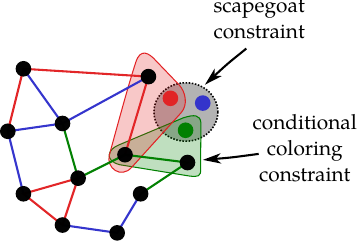}
  \end{center}
  \caption{Illustration of the construction of the $H_G$
    Hamiltonian, given in \eqref{def:H}. (a) We start with a
    3-decomposable graph, where the edges have been partitioned into
    $3$ subsets, red, greed, blue. (b) We place qubits on the
    vertices of the graph and add $3$ scapegoat qubits to the
    system, one for each subset. The terms in the local Hamiltonian
    are now $3$-local: there are coloring terms, which involve the
    qubits of an edge, together with its associated scapegoat, and
    there is the scapegoat constraint, which involves all three
    scapegoats.} \label{fig:scapegoats}
\end{figure}

To prove the quasi-quantum PCP theorem, \Thm{thm:main-NP-hard}, we
will prove the following theorem.
\begin{theorem}  
\label{thm:H-G-reduction} There exist constants $0<\xi_a<\xi_b<1$
  such that for every $G$ from the family of
  $(3,\lambda)$-decomposable graphs from \Thm{thm:3decomp-hardness},
  the corresponding Hamiltonian $H_G$ from \eqref{def:H} has the
  following properties:
  \begin{description}
    \item [Completeness:] If $G$ is $4$-colorable then the qq ground 
      energy of $H_G$ satisfies $\eps_0^{qq}\le \xi_a\cdot L_H$.
      
    \item [Soundness:] If $G$ is not $4$-colorable, then
      $\eps_0^{qq}\ge \xi_b\cdot L_H$.
  \end{description}
  Above,  $L_H$ is the approximation
  scale of $H_G$, given in \eqref{def:L-H-formula}.
\end{theorem}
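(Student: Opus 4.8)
The plan is to prove \Thm{thm:H-G-reduction} directly: completeness via an explicit trial $3$-qq state, soundness via a short argument at the level of the underlying distributions. First I would record $L_H=(\tfrac16+\xi)|E|=\Theta(|E|)$, since each of the $|E|$ coloring terms $h_e\otimes F_0^{(n+\ell)}$ has norm $\norm{h_e}\cdot\norm{F_0}=\tfrac13\cdot\tfrac12$ and the single scapegoat term has norm $\xi|E|\cdot\norm{\Id-F_0^{\otimes3}}=\xi|E|$. The engine of the proof is the following identity: for a $3$-qq state $X$ with distribution $\mu(\ualpha)=\Tr(F_\ualpha X)$, expanding $h_e=\sum_\alpha F_\alpha^{(i)}\otimes F_\alpha^{(j)}$ and using SIC-POVM completeness on the remaining qubits,
\begin{align}
\Tr(H_G X) &=\sum_{\ell=1}^{3}\sum_{\{i,j\}\in E_\ell}\mu(\alpha_i=\alpha_j,\ \alpha_{n+\ell}=0)\notag\\
&\quad+\xi|E|\bigl(1-\mu(\alpha_{n+1}=\alpha_{n+2}=\alpha_{n+3}=0)\bigr).\label{eq:energy-id}
\end{align}
Thus the energy depends on $\mu$ alone, in a transparent ``expected number of violations'' form.

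\textbf{Soundness.} Let $G$ come from the family of \Thm{thm:3decomp-hardness} with at least a $\xi$-fraction of edges uncolorable in every $4$-coloring, let $X\in\Sqq$ be arbitrary with distribution $\mu$, and put $p=\mu(\alpha_{n+1}=\alpha_{n+2}=\alpha_{n+3}=0)$. Keeping only the all-scapegoats-read-$0$ event in each coloring term, $\mu(\alpha_i=\alpha_j,\alpha_{n+\ell}=0)\ge p\cdot\nu(\alpha_i=\alpha_j)$, where $\nu$ is the conditioning of $\mu$ on that event; summing over all edges, $\sum_e\nu(\alpha_i=\alpha_j)$ is the $\nu$-expected number of monochromatic edges, which is $\ge\xi|E|$ because every bulk coloring violates at least that many. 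Plugging into \eqref{eq:energy-id}, $\Tr(H_G X)\ge p\,\xi|E|+\xi|E|(1-p)=\xi|E|$, so $\eps_0^{qq}\ge\xi|E|=\tfrac{6\xi}{1+6\xi}\,L_H=:\xi_b L_H$.

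\textbf{Completeness.} Fix a legal $4$-coloring $\ualpha^*$ of $G$ and, for $\ell=1,2,3$, a $\lambda$-solution $Y_\ell$ of $G_\ell$ (these exist by $(3,\lambda)$-decomposability). I would take the trial state
\[
\widehat X = w_0\,X_\sym(\ualpha^*)\otimes\ketbra{000}{000} + w\sum_{\ell=1}^{3}Y_\ell\otimes\sigma_\ell + w_\ast\,X_{\mathrm{max}}\otimes\ketbra{111}{111},
\]
where $X_{\mathrm{max}}=2^{-n}\Id$ on the $n$ bulk qubits, $\sigma_\ell$ is the rank-one projector putting $\ket0$ on scapegoat $n+\ell$ and $\ket1$ on the other two scapegoats, and $w_0,w,w_\ast>0$ with $w_0+3w+w_\ast=1$. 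A coloring term $h_e\otimes F_0^{(n+\ell)}$ evaluates to $0$ on every branch: either the relevant scapegoat $n+\ell$ is in the ``off'' state $\ket1$ with $\Tr(F_0\ketbra{1}{1})=0$, or the bulk factor legally colors the relevant subgraph ($\ualpha^*$ on $G\supseteq G_\ell$, or $Y_\ell$ on $G_\ell$) so that $\Tr(h_e X^{(ij)}_{\mathrm{bulk}})=0$ by \Lem{lem:2marginals}. Meanwhile $\Tr(F_0^{\otimes3}\cdot)$ vanishes on $\ketbra{111}{111}$ and on every $\sigma_\ell$ but equals $\tfrac18$ on $\ketbra{000}{000}$, so by \eqref{eq:energy-id} we get $\Tr(H_G\widehat X)=\xi|E|\bigl(1-\tfrac{w_0}{8}\bigr)=:\xi_a L_H$, with $\xi_a=(1-\tfrac{w_0}{8})\xi_b<\xi_b$ for every $w_0>0$. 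The remaining task is to show $\widehat X\in\Sqq$, i.e.\ that every $3$-local marginal is PSD, by cases on how the three-qubit set meets the scapegoats. Since each scapegoat is always in a computational-basis state, a marginal meeting a scapegoat is block-diagonal in that scapegoat's basis, and the only non-trivial block is the ``scapegoat reads $0$'' block on a pair of bulk qubits $\{i,j\}$, which equals $w_0\,X_\sym(\ualpha^*)^{(ij)}+w\,Y_\ell^{(ij)}$; by \Lem{lem:2marginals}, \Cor{corol:lambda-min} and property~(4) of \Def{def:lam-sol} this is $(w_0+w)\ketbra{\psi_-}{\psi_-}\succeq0$ when $\{i,j\}\in E_\ell$ and $\succeq(-2w_0+w\lambda)\Id\succeq0$ otherwise, provided $w_0\le\tfrac12 w\lambda$. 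The purely-bulk three-qubit marginals are not touched by $H_G$ but are still required PSD by the qq-state definition; since the negative parts of $X_\sym(\ualpha^*)^{(ijk)}$ and of the $Y_\ell^{(ijk)}$ are bounded below by an absolute constant, the term $w_\ast X_{\mathrm{max}}^{(ijk)}=\tfrac{w_\ast}{8}\Id$ dominates them once $w_\ast$ is a large enough constant. Finally the marginals meeting at most one bulk qubit reduce, by color-invariance of $X_\sym$ and $Y_\ell$ (\Lem{lem:2marginals}), to a convex combination of mutually orthogonal computational-basis projectors, optionally tensored with $\tfrac12\Id$ on the bulk qubit, hence are PSD. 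A single choice of the three constants satisfying this finite system of linear inequalities finishes the argument, and \Thm{thm:main-NP-hard} follows by composing \Thm{thm:H-G-reduction} with \Thm{thm:3decomp-hardness} (the map $G\mapsto H_G$ being polynomial-time).

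\textbf{The main obstacle.} The subtle part is producing a trial state that is simultaneously locally PSD and has energy safely below the soundness value $\xi|E|$. The unavoidable negativity $X_\sym(\ualpha^*)^{(ij)}\succeq-2\Id$ — genuinely present, since for some $4$-colorable $G$ two non-adjacent vertices are forced to share a color in every legal coloring — caps the weight $w_0$ on the ``solution'' branch at a small constant; combined with the need for $w_\ast$ to be a large constant to cushion the purely-bulk triples, this squeezes the completeness value $\xi|E|(1-w_0/8)$ very close to $\xi|E|$, and the promise gap $b-a=\Theta(w_0\,\xi|E|)=\Theta(L_H)$ survives only because $w_0$ stays a fixed positive constant. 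Carrying the four families of $3$-local marginals (three bulk; two bulk plus one scapegoat; $\le1$ bulk plus scapegoats; three scapegoats) to PSD-ness under one choice of weights is the fiddly but routine core, and it is exactly here that the $\lambda$-solutions are indispensable: they are the only available object with uniformly strictly positive two-qubit marginals capable of cushioning $X_\sym(\ualpha^*)$.
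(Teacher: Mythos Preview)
Your proof is correct and follows the same overall strategy as the paper: soundness via the distribution picture, completeness via a convex combination of $X_\sym(\ualpha^*)$, the three $\lambda$-solutions, and the maximally mixed state, with scapegoats toggling which branch is ``active.'' The soundness argument is essentially identical (the paper phrases it on individual basis elements $D_\ualpha\otimes D_\ubeta$, you condition on the all-scapegoats-read-$0$ event; these are the same computation).

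The one genuine difference is your choice of scapegoat states. The paper tensors the branches with $A_0=D_0=2\ketbra{0}{0}-\ketbra{1}{1}$ (inactive) and $A_1=\ketbra{1}{1}$ (active); you use the computational-basis projectors $\ketbra{0}{0}$ and $\ketbra{1}{1}$ instead. This buys you something real: since your scapegoat factors are already PSD, any marginal touching a scapegoat is automatically block-diagonal in that qubit's computational basis, and the blocks decouple cleanly --- exactly the ``only non-trivial block is $w_0\,X_\sym(\ualpha^*)^{(ij)}+w\,Y_\ell^{(ij)}$'' step you wrote. In the paper's version $A_0$ is not PSD, so one must expand $A_0,A_1$ in the computational basis and check that the negative $-\ketbra{1}{1}$ piece of $A_0$ is dominated by the $A_1$ coefficient; this is the content of the paper's Case~2 and is a bit more involved than your argument. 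The price you pay is that your ``solution'' branch $X_\sym(\ualpha^*)\otimes\ketbra{000}{000}$ does not have exactly zero energy (since $\Tr(F_0\ketbra{0}{0})=\tfrac12$, the scapegoat penalty contributes $\tfrac78\xi|E|$ there), whereas the paper's $X_0\otimes A_0A_0A_0$ has exactly zero energy because $\Tr(F_0 D_0)=1$. This only affects the final constant: you get $\Tr(H_G\widehat X)=\xi|E|(1-w_0/8)$ versus the paper's $\xi|E|(1-\eps\delta)$, both of order $\Theta(\lambda)$ below the soundness value.

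Two small presentational points: your phrase ``$w_\ast$ large enough'' really means ``$w_0+3w$ small enough'' (since $w_\ast<1$), and in the $\le 1$-bulk case the scapegoat marginal is a convex combination of \emph{possibly repeated} computational-basis projectors, not necessarily mutually orthogonal --- but of course still PSD. Neither affects correctness.
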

Combining \Thm{thm:H-G-reduction} with \Thm{thm:3decomp-hardness}
proves the quasi-quantum PCP theorem~\ref{thm:main-NP-hard}.

In the next subsections we show that this reduction is sound and
complete. Note that, in an opposite manner to Kitaev's QMA-hardness
proof of the local-Hamiltonian problem\cc{ref:Kitaev2002-QI}, here
soundness is trivial, while completeness is quite involved.

\subsection{Proof of \Thm{thm:H-G-reduction}: soundness of the 
  reduction}

Assume that $G$ is not $4$-colorable, and let $\ualpha \in
\{0,1,2,3\}^n, \ubeta \in \{0,1,2,3\}^3$ be configurations of bulk
and scapegoat qubits respectively. Consider the operator
$D_\ualpha\otimes D_\ubeta$. If $\ubeta\neq(0,0,0)$, then due to the
scapegoat constraint in definition~\eqref{def:H} of $H_G$, we obtain
$\Tr(D_\ualpha\otimes D_\ubeta H_G) \ge \xi |E|$. On the other hand,
when $\ubeta = (0,0,0)$, all the coloring constraints in $H_G$ are
active. By assumption, at least $\xi|E|$ edges of $G$ are not
legally colored by $\ualpha$, and therefore also here the energy
will be at least $\xi|E|$. All in all, $\Tr(D_\ualpha\otimes
D_\ubeta H_G) \ge \xi |E|$ for every $D_\ualpha\otimes D_\ubeta$ and
so $\eps_0^{(qq)} \ge \xi|E|$.

Next, we use the explicit expressions $h_e =
\frac{1}{3}(\Id-\ketbra{\psi_-}{\psi_-})$, and $F_0 =
\frac{1}{2}\ketbra{0}{0}$, to conclude that
\begin{align*}
  L_H = |E|\cdot \frac{1}{3}\cdot\frac{1}{2} + \xi|E| = 
    |E|\cdot(\xi + \frac{1}{6}) .
\end{align*}
Setting $\xi_b \EqDef \frac{\xi}{\xi+
1/6}$, we get $\eps_0^{(qq)} \ge \xi_b L_H$.

\subsection{Proof of \Thm{thm:H-G-reduction}: completeness of 
  the reduction}

To show completeness, we will show that there exist constants $\eps,
\delta \in (0,1)$ such that when the 3-decomposable $G$ is
4-colorable, there exists a 3-local qq state $\Xlow$ with $\Tr(\Xlow
H_G)\le \xi|E|(1-\eps\cdot\delta)$. Taking $\xi_a =
\frac{\xi}{\xi+1/6} (1-\eps\cdot\delta) = \xi_b(1-\eps\cdot\delta)$
then proves completeness. 

We start by defining the low energy state. It is essentially a
generalization of the state defined in \Sec{sec:bird-eye} to the
case of $3$ scapegoat qubits.  
\begin{definition}[The low energy state]
  Let $\delta, \eps > 0$ be constants to be determined later. Let
  $\ualpha^*$ be a legal coloring of the $(3, \lambda)$-decomposable
  graph $G$, and let $X_1 X_2, X_3$ be $\lambda$-solutions
  respectively for the subgraphs $G_1, G_2, G_3$. Then we define
  \begin{align}
  \label{def:Xlow}
  \begin{split}
    \Xlow &\EqDef (1-\eps)X_\mathrm{max}\otimes A_1A_1A_1\\
       &+ \eps\Big[\delta X_0\otimes A_0A_0A_0 +
      \frac{1-\delta}{3}(X_1\otimes A_0A_1A_1 
        + X_2\otimes A_1A_0A_1 
        + X_3\otimes A_1A_1A_0)\Big] ,
  \end{split}     
  \end{align}
  where $X_\mathrm{max} \EqDef \frac{1}{2^n}\Id$ is the
  maximally-mixed state, and $X_0=X_\sym(\ualpha^*)$ is the
  symmetric coloring operator of $\ualpha^*$. In addition, the
  notation $A_iA_jA_k$ for the state of the scapegoat qubits means
  $A_i\otimes A_j\otimes A_k$.
\end{definition}
To finish the completeness proof, we will show:
\begin{enumerate}
  \item $\Tr(\Xlow H_G) \le \xi|E|(1-\eps\delta)$.
  \item $\Xlow$ is a 3-local qq state.
\end{enumerate}

\subsubsection{Upper bounding 
  \texorpdfstring{$\Tr(\Xlow H_G)$}{Tr(Xlow HG)}}

By construction, all the parts in $\Xlow$ in which at least one
scapegoat is active (i.e., parts coming from $X_{max}, X_1, X_2,
X_3$), have no energy penalty from the coloring constraints, and get
a $\xi|E|$ energy from the scapegoats constraint. On the other hand,
the term $X_0\otimes A_0A_0A_0$ has zero energy, and therefore
\begin{align*}
  \Tr(\Xlow H_G) = \xi|E|(1-\eps\delta) .
\end{align*}

\subsubsection{\texorpdfstring{$\Xlow$}{Xlow} is a 
  \texorpdfstring{$3$}{3}-local qq state}

To show that $\Xlow$ given in \eqref{def:Xlow} is a $3$-local qq
state, we need to show that for every $3$ qubits $i,j,k$, the
marginal $\Xlow^{(ijk)}$ is a PSD. Recalling that we refer to the
qubits associated with the vertices of $V$ as bulk qubits, and to
$n+1,n+2,n+3$ as scapegoat qubits, we consider 4 cases:
\begin{enumerate}

  \item \textbf{$i,j,k$ are all in the bulk} 
  
    In such case $X^{(ijk)}_\mathrm{low}$ is given by
    \begin{align*}                                                     
      \Xlow^{(ijk)} = (1-\eps) \frac{1}{8}\Id + \eps \big[
        \delta X_{0}^{(ijk)} + \frac{1-\delta}{3}(
          X_{1}^{(ijk)} + X_{2}^{(ijk)} + X_{3}^{(ijk)})\big].
    \end{align*}
    Note that as $-\Id\preceq D_\alpha\preceq 2\Id$ for every
    $\alpha$ (see \Eq{eq:D2}), it follows for any $k$-qubits
    operator of the form $X=\sum_\ualpha \sigma(\ualpha)
    D_{\alpha_1}\otimes \ldots \otimes D_{\alpha_k}$, where
    $\sigma(\ualpha)$ is a distribution, we have $-2^{k-1}\Id
    \preceq X\preceq 2^k \Id$, and $\norm{X}\le 2^k$. Therefore,
    $\norm{X_\ell^{(ijk)}}\le 8$ for $\ell=0,1,2,3$, and so it is
    clear that for sufficiently small $\eps$, $\Xlow^{(ijk)}$ is
    PSD.
    
  \item \textbf{$i,j$ are in the bulk, $k$ is a scapegoat}
  
    Assume without loss of generality that $k=n+1$. Following the
    discussion in \Sec{sec:bird-eye}, now we should also take into
    account the effect of the first scapegoat qubit on the
    positiveness of $\Xlow^{(ijk)}$. The scapegoat can be in two
    states: either $A_0=D_0=2\ketbra{0}{0}-\ketbra{1}{1}$ or
    $A_1=\frac{1}{3}(D_1+D_2+D_3) = \ketbra{1}{1}$. We will show
    that: (i) the (operator) coefficient of $A_1$ in the expansion
    of $\Xlow^{(ijk)}$ is a PSD, and (ii) the the coefficient in
    front of $A_0$ is also PSD \emph{and} is much smaller than the
    first coefficient. This would guarantee that the
    $-\ketbra{1}{1}$ element in $A_0$ will be compensated by the
    positiveness of the $A_1=\ketbra{1}{1}$ coefficient. 
            
    Formally, from \Eq{def:Xlow}, we get
    \begin{align}       
    \label{eq:case-ii}
      \Xlow^{(ijk)} &= (1-\eps) \frac{1}{4}\Id\otimes A_1
        + \eps\delta X_0^{(ij)}\otimes A_0 + 
          \eps(1-\delta)\frac{1}{3} X_1^{(ij)}\otimes A_0\\
        &+ \eps(1-\delta)\frac{1}{3} 
            \big( X_2^{(ij)}+ X_3^{(ij)}\big)\otimes A_1 .
         \nonumber
    \end{align}
    Gathering the $A_0,A_1$ contributions yields
    \begin{align*}       
      \Xlow^{(ijk)} &= K_1\otimes A_1 + K_0\otimes A_0
        = (K_1 -K_0)\otimes \ketbra{1}{1} + 2K_0\ketbra{0}{0}
    \end{align*}
    where $K_0 \EqDef \eps\Big(\delta X_0^{(ij)} +
    (1-\delta)\frac{1}{3} X_1^{(ij)}\Big)$ and $K_1 \EqDef
    (1-\eps)\frac{1}{4}\Id + \eps(1-\delta)\frac{1}{3}\big(
    X_2^{(ij)} + X_3^{(ij)}\big)$. We are then left to show that
    $K_1-K_0\succeq 0, K_0\succeq 0$.
    
    To show that $K_0\succeq 0$, first assume that $\{i,j\}$ is
    \emph{not} an edge in $G_1$. By assumption $X_1$ is a
    $\lambda$-solution of $G_1$ and therefore in such case,
    $X_1^{(ij)} \succeq \lambda\Id$ (see \Def{def:lam-sol}). Taking
    $\delta=\lambda/10$ and using the fact that the smallest
    eigenvalue of $X_0^{(ij)}$ is $-2$, we find that the minimal
    eigenvalue of $K_0$ is at least $\frac{1}{200}\lambda$.  Next,
    assume that $\{i,j\}$ is an edge in $G_1$. By
    \Lem{lem:2marginals}, in such case both $X_0^{(ij)}$ and
    $X_1^{(ij)}$ are equal to $A = \ketbra{\psi_-}{\psi_-}\succeq 0$
    (see \Eq{def:SA}), and therefore also $K_0\propto A\succeq 0$.
    
    Finally, the positiveness of $K_1-K_0$ follows immediately from
    the $(1-\eps)\frac{1}{4}\Id$ term with $\eps=1/40$.

  \item \textbf{$i$ is in the bulk while $j,k$ are scapegoats}
  
    In such case, from the color invariance of $\Xlow$, it follows
    that the marginal on the single qubit $i$ in the bulk is equal
    to $\frac{1}{2}\Id$ and therefore $\Xlow^{(ijk)} =
    \frac{1}{2}\Id\otimes \Xlow^{(jk)}$. Consequently, we only need to
    show that $\Xlow^{(jk)}\succeq 0$. Without loss of generality, take
    $j,k=1,2$. Following the outline of the previous case,
    $\Xlow^{(jk)}$ is given by
    \begin{align*}
      \Xlow^{(jk)} &= (1-\eps)A_1\otimes A_1 + \eps \Big( 
      \delta A_0\otimes A_0 + \frac{1-\delta}{3} \big(A_0\otimes A_1
        + A_1\otimes A_0 + A_1\otimes A_1\big) \Big)\\ 
     &\EqDef K_{00}A_0\otimes A_0 + K_{01}A_0\otimes A_1
      + K_{10}A_1\otimes A_0 + K_{11}A_1\otimes A_1,
    \end{align*}
    To show that $\Xlow^{(ij)}$ is PSD, we follow the similar
    arguments as in the previous case. We rewrite the $A_0,A_1$ in
    the computational basis, and verify that we get non-negative
    coefficients. For example, the coefficient of $\ketbra{11}{11}$
    will be
    \begin{align*}
      K_{11} - K_{10} - K_{01} + K_{00}.
    \end{align*}
    As $K_{11} = 1-\eps$ while the rest of the terms are either
    proportional to $\eps$, we deduce that the overall coefficient
    is positive. The rest of the coefficients follow a similar
    behavior.

  \item \textbf{$i,j,k$ are scapegoats}
  
    This case follows the same reasoning as the previous cases. By
    definition, 
    \begin{align*}
      \Xlow^{(ijk)} &= (1-\eps)A_1\otimes A_1\otimes A_1 + \eps \Big( 
        \delta A_0\otimes A_0\otimes A_0 \\
        &+ \frac{1-\delta}{3}
        \big(A_0\otimes A_1\otimes A_1 
          + A_1\otimes A_0\otimes A_1 
          + A_1\otimes A_1\otimes A_0\big) 
          \Big).
    \end{align*}
    As in the previous case, positivity can proven by rewriting
    $\Xlow^{(ijk)}$ in the computational basis and showing the
    positivity of each coefficient.
\end{enumerate}

This concludes the soundness proof.

\subsection{Proof of \Thm{thm:3decomp-hardness}}
\label{sec:proof-of-3decomp}

The theorem follows directly from the next two
lemmas:
\begin{lemma}
\label{lem:max-deg-4}
  There is a constant $\xi\in (0,1)$ such that given a graph
  $G=(V,E)$ with a maximal degree $4$, which is either $3$-colorable,
  or at least a fraction of $\xi$ of its edges cannot be
  $3$-colored, it is {\rm NP}-hard to decide between these two cases.
\end{lemma}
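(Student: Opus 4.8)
The plan is to compose the classical PCP theorem with a gap-preserving degree reduction tailored to $3$-coloring. First I would record the standard consequence of the PCP theorem \cite{ref:Arora1998-PCP1, ref:Dinur2007-PCP} that there is a constant $\xi_0\in(0,1)$ for which it is {\rm NP}-hard to distinguish a $3$-colorable graph from one in which every $3$-coloring miscolors at least a $\xi_0$-fraction of the edges: this follows from gap-$3${\rm SAT} by the textbook clause/variable gadget reduction, which is gap-preserving because each clause is captured by an $O(1)$-edge gadget and each variable's value is readable from $O(1)$ local edges, so a coloring miscoloring a $\delta$-fraction of the edges decodes to an assignment violating only an $O(\delta)$-fraction of the clauses. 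The graph produced this way has unbounded degree, so the real work is to drive the maximum degree down to $4$ while keeping the gap a positive constant.

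The main tool for this is the expander-replacement idea from Dinur's proof \cite{ref:Dinur2007-PCP}. I would replace every vertex $v$ of degree $d_v$ by a ``cloud'' of $d_v$ copies $v_1,\dots,v_{d_v}$, reroute the $i$-th edge incident to $v$ to $v_i$, and link the copies along the edges of a constant-degree expander $\mathcal{E}_v$ on $[d_v]$ using an \emph{equality gadget}. Since a coloring edge encodes $\ne$ rather than $=$, the equality gadget is taken to be $K_4$ with one edge deleted: its two non-adjacent vertices (the terminals) have internal degree $2$, are forced to the same color in any proper $3$-coloring, and any equal coloring of the terminals extends to a proper coloring of the gadget. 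Completeness is immediate: color each cloud monochromatically. For soundness, a proper $3$-coloring of the new graph is decoded to a coloring of the old graph by the plurality color within each cloud, and the edge-expansion of $\mathcal{E}_v$ (equivalently, its spectral gap, via the expander mixing lemma) guarantees that if $k$ copies of $v$ disagree with the plurality then $\Omega(k)$ of the equality gadgets inside the cloud carry a miscolored edge. Hence the minimum miscoloring fraction changes by at most a constant factor, yielding a constant-degree instance with a constant gap $\xi_1>0$.

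Finally I would push the maximum degree from that constant $D$ down to exactly $4$ with one more layer of $O(1)$-size local gadgets --- essentially the classical reduction showing $3$-coloring is {\rm NP}-hard for graphs of maximum degree $4$ \cite{ref:Garey1976-NP}: each remaining vertex of degree $>4$ is replaced by a bounded-depth binary tree of ``fan-out'' gadgets, a fan-out gadget being two new vertices $a,b$ joined by an edge together with an ``input'' vertex and two ``output'' vertices all joined to both $a$ and $b$ (this forces the input and both outputs to the third color, the one avoided by $a$ and $b$, and every vertex of the tree then has degree at most $4$). Since each fan-out gadget again has $O(1)$ edges and any inconsistency among its terminals costs $\Omega(1)$ miscolored edges, the gap survives as a final positive constant $\xi\in(0,1)$. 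The hard part is precisely this degree reduction: one must (i) simulate equality in the $\ne$-only language of $3$-coloring via gadgets, (ii) run the expander-mixing / plurality-decoding soundness argument so that the minimum miscoloring fraction degrades by at most a multiplicative constant, and (iii) carefully track the constants so that the final graph has maximum degree exactly $4$ and a strictly positive constant gap; completeness is routine throughout.
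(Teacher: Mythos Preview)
Your proposal is correct and follows essentially the same three-stage architecture as the paper's proof: start from a gap version of $3$-coloring, do an expander-based degree reduction with equality gadgets and plurality decoding, then use the Garey--Johnson--Stockmeyer fan-out gadgets to bring the maximum degree down to exactly $4$. The only minor differences are that the paper cites the gap-$3$-coloring hardness directly from \cite{ref:Austrin2014-3colorPCP} rather than deriving it from gap-$3$SAT, and the paper separates the expander step into ``first replace vertices by clouds with abstract equality constraints, invoke Dinur's Lemma~4.1, then substitute the $3$-coloring equality gadget'' whereas you fold the gadget substitution into the cloud construction; both routes yield an intermediate graph of bounded constant degree (the paper gets $7$) before the final reduction to $4$.
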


\begin{lemma}
\label{lem:3-to-4}
  There is a constant $\lambda\in(0,1/4)$ and an efficient map that
  takes any graph $G'=(V',E')$ with a maximal degree $4$ to a new
  graph $G=(V,E)$ such that:
  \begin{itemize}  
    \item $G$ is $(3,\lambda)$-decomposable 
      
    \item If $G'$ is $3$-colorable then $G$ is $4$-colorable.
    
    \item If at least a fraction $\xi'$ of the edges in $G'$ cannot
      be $3$-colored, then at least a fraction $\xi$ of the edges in
      $G$ cannot be $4$-colored, where $\xi=\Omega(\xi')$.
  \end{itemize}
\end{lemma}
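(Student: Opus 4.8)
\emph{Proof plan.}
The reduction sends $G'=(V',E')$ to the graph $G$ obtained from $G'$ by attaching a bounded-degree \emph{palette layer}. Concretely, $G$ consists of the vertices and edges of $G'$; a set of $\Theta(|V'|)$ auxiliary \emph{deputy} vertices $z_1,\dots,z_N$ wired to one another by the standard $O(1)$-size \emph{equality gadget} (a triangle $\{p,q,r\}$ together with the six edges joining two terminals $x,y$ to $p,q,r$, which in any proper $4$-colouring forces $c(x)=c(y)$, since each of $x,y$ is pushed onto the one colour missing from $\{c(p),c(q),c(r)\}$) placed along a bounded-degree expander on $z_1,\dots,z_N$; and, for each original vertex $v$, one ``pin'' edge joining $v$ to a deputy, with each deputy serving $O(1)$ originals. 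In a proper $4$-colouring that satisfies every equality gadget, all deputies share one colour $c^*$ and every original vertex avoids $c^*$, so the restriction to $V'$ is --- after an $S_4$-relabelling sending $c^*$ to $3$ --- a proper $3$-colouring of $G'$; conversely any $3$-colouring of $G'$ using $\{0,1,2\}$ extends to $G$ by putting every deputy at colour $3$ (colouring each gadget triangle with $\{0,1,2\}$). Since the maximal degree of $G'$ is $\le 4$ and every gadget has bounded size, $G$ has bounded degree and $|E(G)|=\Theta(|E'|)$. It remains to (a) exhibit the $(3,\lambda)$-decomposition, (b) check completeness, and (c) check the gap statement.

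\emph{(a) The $(3,\lambda)$-decomposition.}
We partition $E(G)=E_1\sqcup E_2\sqcup E_3$ so that each $G_\ell=(V(G),E_\ell)$ is a disjoint union of \emph{light} shapes --- single edges, triangles, bounded-degree stars, paths and cycles --- none of which \emph{forces} a colour collision on any non-edge. Two points need care. First, no $G_\ell$ may contain a whole equality gadget, since that forces $c(x)=c(y)$ on the non-edge $\{x,y\}$ and makes Property~(4) of Definition~\ref{def:lam-sol} unattainable; we therefore split the three species of edges of each equality gadget --- its triangle edges, its $x$-pendants and its $y$-pendants --- into three different parts, so that each $G_\ell$ sees only isolated triangles and bounded-degree deputy-stars from the gadget layer. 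Second, the $E'$-part of $G$ has maximal degree $4$; we split $E'$ into two subgraphs of maximal degree $\le 2$ (a standard fact for graphs of maximal degree $\le 4$) and assign them to $E_1,E_2$ as disjoint unions of paths and cycles. Because the palette edges live on vertices (deputies and gadget interiors) disjoint from $V'$, they fold into $E_1,E_2,E_3$ without disturbing the $E'$-pieces (the pin edges becoming small deputy-stars in $E_3$). For each light shape $H$ we then exhibit a colour-invariant distribution $\mu_H$ on the proper $4$-colourings of $H$ with $\lambda\le\tau^{(ij)}_{\mu_H}\le\tfrac{1}{3}(1-\lambda)$ for every non-edge $\{i,j\}$: for stars one symmetrises the ``as balanced as possible'' leaf-colouring (slightly mixed with the uniform proper colouring), and for paths and cycles one takes a fixed-weight mixture of the uniform proper colouring with a colouring in which every three consecutive vertices receive distinct colours --- finite explicit computations yielding a universal $\lambda\in(0,1/4)$. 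By Corollary~\ref{corol:lambda-min}, $X(\mu_H)$ is a $\lambda$-solution of $H$, and by Remark~\ref{rmk:product} these assemble into a $\lambda$-solution of each $G_\ell$; hence $G$ is $(3,\lambda)$-decomposable in the sense of Definition~\ref{def:3decomp-G}.

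\emph{(b)--(c) Completeness and gap.}
Completeness is the colouring-extension above. For the gap, let $c$ be an arbitrary proper $4$-colouring of $G$. If at least a $\xi$-fraction of the palette-layer edges are violated we are done; otherwise, since each violated equality gadget carries $\Omega(1)$ violated (edge-disjoint) edges and each expander edge whose endpoints get different colours forces its gadget to be violated, edge-expansion of the deputy backbone forces all but a $\beta$-fraction of the deputies to share one colour $c^*$, with $\beta=O(\xi)$. Then all but $O(\beta|V'|)=O(\xi|E(G)|)$ original vertices avoid $c^*$; recolouring the exceptions arbitrarily into $\{0,1,2\}$ (after relabelling $c^*\mapsto 3$) yields a genuine $3$-colouring of $G'$, which by hypothesis violates $\ge\xi'|E'|$ edges of $G'$. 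At most $O(\xi|E(G)|)$ of these touch an exceptional vertex, so $c$ itself violates $\ge\xi'|E'|-O(\xi|E(G)|)$ edges of $G$. Taking the gap threshold $\xi$ to be a small enough constant multiple of $\xi'$ --- legitimate since $|E'|=\Theta(|E(G)|)$ and the expander's expansion constant may be chosen larger than the constant deputy fan-out --- makes this $\Omega(\xi')\,|E(G)|$, which is the third bullet with $\xi=\Omega(\xi')$. Together with Lemma~\ref{lem:max-deg-4} this yields Theorem~\ref{thm:3decomp-hardness}.

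\emph{Main obstacle.}
The real difficulty is item (a): one must co-design the palette layer and its three-way edge partition so that \emph{simultaneously} every part is a disjoint union of forced-collision-free light shapes, $G$ stays bounded-degree with $|E(G)|=\Theta(|E'|)$, and the layer is robust (a violated gadget costs $\Omega(1)$ while a satisfied one genuinely pins a colour). Once the layout is fixed, the explicit $\lambda$-solutions of the light shapes, the $S_4$-symmetrisation, and the expander soundness argument are all routine.
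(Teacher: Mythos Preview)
Your overall architecture matches the paper's: pass from $3$- to $4$-colouring by adjoining an expander of palette vertices pinned to a common colour via local equality gadgets (your triangle-plus-six-pendants gadget is the paper's, Fig.~\ref{fig:4-color-ID}), split $E(G)$ into three parts each built from simple shapes admitting explicit $\lambda$-solutions, and recover soundness from edge expansion. Where you differ is in the combinatorics of the split, and your route is arguably cleaner. The paper partitions $E'$ into \emph{three} pieces via a path-cover theorem of Thomassen (Lemma~\ref{lem:4decomp}), lets the three gadget pieces chain together across the expander through the shared deputy terminals, and then must carefully thread the cross edges among the three parts to avoid creating cycles; its resulting light shapes are cycles and the rather intricate ``graphs of the third kind'' (Definition~\ref{def:third-kind}, Lemma~\ref{lem:third-kind}). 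You instead split $E'$ into only \emph{two} degree-$\le 2$ pieces (elementary, via a $2$-factorisation of a $4$-regular supergraph), split each gadget as triangle / $x$-pendants / $y$-pendants so that the shared terminals become star \emph{centres} rather than connective tissue, and dump all pins into the third part; this yields only paths, cycles, isolated triangles and bounded stars, and sidesteps the cross-edge threading entirely. The only new ingredient you need is a $\lambda$-solution for bounded stars, which is indeed easy along the lines you indicate.

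The one place your sketch is genuinely thin is the $\lambda$-solution for cycles. In the uniform proper $4$-colouring of a path or cycle the collision probability at distance~$2$ is exactly $1/3$, the forbidden boundary, and your second ingredient (``every three consecutive vertices distinct'') does not close up on a cycle of length $n\not\equiv 0\pmod 3$; once you patch it, the mixing may push some long-range collision probabilities above $1/3$. The paper devotes an entire appendix section to cycles (colour even-indexed vertices i.i.d., then odd-indexed vertices conditionally, then average over all cyclic shifts, with separate ad~hoc constructions for $n=3,4,5$), and you will need comparable care. Minor: in part~(c), drop the word ``proper'' --- you mean an arbitrary colouring.
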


\begin{proof}[\ of \Lem{lem:max-deg-4}]
 
  Our starting point is Theorem~1.1 from
  \cc{ref:Austrin2014-3colorPCP}, which uses the PCP theorem to show
  that given a graph $G'$ which is either $3$-colorable or at least
  a $\xi' = \frac{1}{18}$ fraction of its edges cannot be legally
  $3$-colored, it is NP-hard to decide between these cases.
  
  To prove the lemma, we need to reduce the degree of each vertex to
  be at most $4$. Following the degree reduction procedure in
  \cite{ref:Dinur2007-PCP}, we replace every vertex in $G'$ of degree
  $D>4$ with a ``cloud'' of $D$ vertices that sit on a degree $3$
  expander graph --- see \Fig{fig:local-expander}. We define
  equality constraints between the edges of that expander and denote
  the resultant constraint graph by $G''$. By Lemma~4.1 of
  \cite{ref:Dinur2007-PCP}, if $G'$ is $3$-colorable, then
  $G''$ is satisfiable. On the other hand, if a fraction $\xi'$
  of edges of $G'$ could not be colored, then a fraction
  $\xi''=\Omega(\xi')$ of the edges in $G''$ cannot be
  satisfied.

  \begin{figure}
    \begin{center}
      \includegraphics[scale=0.2]{local-expander.png}
    \end{center}
    \caption{Reducing the local degree of a constraint graph. Given
      a vertex with degree $D$, we replace it with $D$ vertices that
      sit on an a degree $3$ expander graph. The edges of the
      expander graph (colored in blue) are associated with equality
      constraints.} \label{fig:local-expander}
  \end{figure}

  \begin{figure}
    \begin{center}
      \includegraphics[scale=0.2]{3-color-ID.png}
    \end{center}
    \caption{A $3$-coloring gadget for an equality constraint
    between vertices $i,j$}
    \label{fig:3-color-ID}
  \end{figure}

  Next, we turn the constraint graph $G''$ back to a $3$-coloring
  problem of a new graph $G'''$. We do that by locally replacing
  each equality constraint by the $3$-coloring gadget that is
  depicted in \Fig{fig:3-color-ID}. It is clear that if $G''$ is
  satisfiable then $G'''$ is $3$-colorable, and if at least a
  fraction ${\xi''}$ edges of $G''$ cannot be satisfied, then a
  fraction $\xi''' \ge \xi''/5$ of the edges in $G'''$ cannot be
  legally 3-colored.

  \begin{figure}
    \begin{center}
      \includegraphics[scale=0.2]{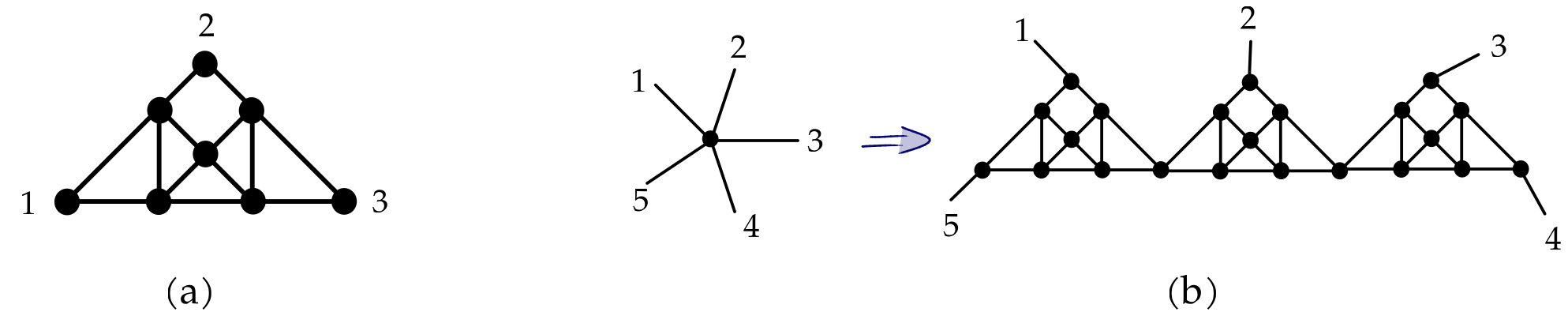}
    \end{center}
    \caption{Reducing the local degree of a $3$-coloring graph
    problem to $4$. (a) The basic gadget, which forces vertices
    $1,2,3$ to be colored identically. (b) Applying the basic gadget
    several times allows us to reduce any degree to $4$.}
    \label{fig:deg4-reduction}
  \end{figure}  

  At this point, the local degree of $G'''$ is at most $7$: there
  are $3\times 2 = 6$ edges come from the identity constraints of
  the degree $3$ expander, and they are connected to a single
  external edge. To finish the proof, we reduce the degree further
  to $4$ and call the resulting graph $G=(V,E)$. To this aim, we use
  the coloring gadgets of \cite{ref:Garey1976-NP}, illustrated in
  \Fig{fig:deg4-reduction}(a), which forces vertices $1,2,3$ to be
  colored identically. Therefore, if we wish, for example, to reduce
  a degree $5$ vertex to a degree $4$, we apply this gadget
  iteratively $3$ times, as illustrated in
  \Fig{fig:deg4-reduction}(b). As in the previous steps, also this
  map reduces the promise gap by at most a constant factor, as each
  vertex with degree $5,6,7$ is replaced by a gadget with a $O(1)$
  edges. This concludes the proof.
\end{proof}

\begin{proof}[\ of \Lem{lem:3-to-4}]
  Given a graph $G'=(V',E')$ of a $3$-coloring problem, we create
  the graph $G=(V,E)$ in three steps that are illustrated in
  \Fig{fig:kafka}. Below, we describe these steps, and show that the
  promise gap is preserved in all of these steps up to a constant
  factor.

  \begin{figure}
    \begin{center}
      \includegraphics[scale=0.2]{kafka.png}
    \end{center}
    \caption{Illustration of the $3$ reduction steps in the proof of
    \Lem{lem:3-to-4}: \textbf{Step I:} The $3$-coloring problem of
    $G'$ is turned to a $4$-coloring problem of $G''$ by adding a
    single vertex $v^*$ and connecting it to all the vertices of
    $G'$. \textbf{Step II:} to reduce the degree of $v^*$, it is
    replaced by a ``cloud'' of vertices $W$, which are all forced to
    have the same color using identity constraints (blue edges) that
    are placed on the edges of a degree-$3$ expander. The brown
    edges connecting the vertices of $W$ to those of $G'$ are called
    \emph{cross} edges.  \textbf{Step III:} The identity constraints
    among the vertices $W$ are replaced by a local gadget, described
    in \Fig{fig:4-color-ID}. This turns every equality constraint in
    $G'''$ into $9$ edges, defined the original two vertices and
    additional $3$ vertices (painted in green). The set of all these
    additional vertices is denoted by $T$.}  \label{fig:kafka}
  \end{figure}

  \paragraph{Step I:} 
  In the first step, we pass to a 4-coloring problem by adding a new
  vertex $v^*$ to the system and connecting all $v'\in V'$ to that
  vertex. We denote this new graph by ${G}'' = ({V}'', {E}'')$.
  Clearly, the addition of $v^*$ forces all the original vertices to
  be colored in one of the $3$ colors different than the color of
  $v^*$, implying that ${G}''$ is $4$-colorable if $G'$ is
  3-colorable. Assume that at least a fraction $\xi'$ of the
  edges of $G'$ is not $3$-colorable. The next claim shows that in
  such at least a fraction ${\xi}''\ge \xi'/5$ of the edges of
  ${G}''$ cannot be legally $4$-colored.  
  \begin{claim}
    If at least a fraction $\xi'$ of the edges in $G'$ cannot be
    legally $3$-colored, then at least a fraction of ${\xi}'' \ge
    \xi'/8$ edges in $G''$ cannot be legally $4$-colored.
  \end{claim}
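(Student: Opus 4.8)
The plan is the standard ``fold the extra colour away'' argument familiar from PCP reductions. Fix an arbitrary $4$-colouring $c$ of $G''$; after relabelling colours (which does not change the number of violated edges) we may assume $c(v^*)=3$. Put $B\EqDef\{v\in V':c(v)=3\}$; then the cross edges $\{v^*,v\}$ violated by $c$ are exactly those with $v\in B$, so $c$ already violates $|B|$ of the edges added in Step~I. Restricting $c$ to $V'$ and recolouring every vertex of $B$ by an arbitrary colour in $\{0,1,2\}$ yields a $3$-colouring $c'$ of $G'$. Since $c'$ agrees with $c$ outside $B$, any edge of $E'$ that is monochromatic under $c'$ but not under $c$ must be incident to $B$; because $G'$ has maximum degree at most $4$, there are at most $4|B|$ such edges. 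Using the hypothesis that \emph{every} $3$-colouring of $G'$ leaves at least $\xi'|E'|$ edges monochromatic, $c$ therefore leaves at least $\xi'|E'|-4|B|$ edges of $E'$ monochromatic.

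Next I would add the two contributions. The number of edges of $G''$ violated by $c$ is at least $|B|+\max\{0,\ \xi'|E'|-4|B|\}$. If $|B|\ge\xi'|E'|/4$ this is at least $|B|\ge\xi'|E'|/4$; and if $|B|<\xi'|E'|/4$ it is at least $|B|+\xi'|E'|-4|B|=\xi'|E'|-3|B|>\xi'|E'|/4$. In either case every $4$-colouring of $G''$ leaves at least $\xi'|E'|/4$ of its edges monochromatic.

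It remains to turn this count into a fraction of $|E''|$. Every edge added in Step~I is incident to $v^*$, so $|E''|=|E'|+|V'|$. One may assume without loss of generality that $G'$ has minimum degree at least $2$: deleting from the input a vertex of degree $\le 1$ leaves the minimum number of monochromatic edges of the instance unchanged while strictly decreasing $|E'|$, hence cannot decrease the uncolourable fraction, and this process cannot exhaust the graph since a forest is $3$-colourable (contradicting $\xi'>0$). Then $2|E'|=\sum_{v}\deg_{G'}(v)\ge 2|V'|$, so $|V'|\le|E'|$ and $|E''|\le 2|E'|$, and therefore every $4$-colouring of $G''$ leaves at least $\frac{\xi'|E'|/4}{2|E'|}=\frac{\xi'}{8}$ of its edges monochromatic; that is, at least a fraction $\xi''\ge\xi'/8$ of the edges of $G''$ cannot be legally $4$-coloured.

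The argument presents no real obstacle, but two points need care, both confined to the last steps. First, the bound ``at most $4|B|$ newly-violated constraints'' is precisely where the bounded-degree hypothesis on $G'$ enters; without it a single high-degree vertex of $B$ could spoil arbitrarily many edges of $E'$ when recoloured, and the gap would not be preserved. Second, one must rule out degenerate inputs $G'$ with many more vertices than edges (in which case $|V'|$ would dominate $|E''|$); the minimum-degree normalisation above does this, and it is anyway automatic for the degree-bounded instances produced by \Lem{lem:max-deg-4} that feed into this step. A slightly more careful (e.g.\ randomised) recolouring of $B$ replaces $4|B|$ by roughly $\tfrac{4}{3}|B|$ and improves the count to $3\xi'|E'|/4$, giving further slack; the precise constant is immaterial, only that $\xi''=\Omega(\xi')$.
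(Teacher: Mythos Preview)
Your argument is correct and essentially identical to the paper's. Both proofs fix an optimal $4$-colouring of $G''$, let $B$ (the paper calls it $U$) be the set of $G'$-vertices sharing the colour of $v^*$, recolour $B$ into $\{0,1,2\}$ to obtain a $3$-colouring of $G'$, and balance the $|B|$ violated cross edges against the at most $4|B|$ newly violated $E'$-edges to reach the bound $\xi'|E'|/4$; the paper's bookkeeping differs only in that it subtracts the cured $v^*$-edge immediately (writing $3|U|$) whereas you add the $|B|$ cross edges back separately, and both then invoke $|E''|\le 2|E'|$. The paper handles the last inequality by simply asserting ``without loss of generality $|V'|\le|E'|$''; your minimum-degree normalisation is a reasonable attempt to justify this, though note that deleting an isolated vertex does not strictly decrease $|E'|$ (it decreases $|V'|$, which is what you actually need), so that sentence should be adjusted.
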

    
  \begin{proof}
    Let $(\alpha_1, \ldots, \alpha_n, \alpha^*)$ be an
    optimal $4$-coloring of $G''$, and assume without loss of
    generality that $\alpha^*=0$. The number of violated edges in
    $G''$ is then $\xi''|E''|$. Let $U\EqDef \{i\, :\, \alpha_i =
    0\}$ denote the subset of vertices of $G'$ that are colored with
    the color $0$. Clearly, $\xi''|E''| \ge |U|$. 
    
    We create a $3$-coloring of $G'$ by changing the color of every
    vertex in $U$ to one of the remaining colors $1,2,3$. Let us now
    consider this $3$-coloring of $G'$ as a $4$-coloring of $G''$ by
    keeping the color of $v^*$ be $0$.  Let $q$ be the number of
    violations in $G''$ of this new $4$-coloring. On one hand, since
    all the edges adjacent to $v^*$ are properly colored, then all
    violations come from violations of the $3$-coloring of $G'$.
    Therefore, $q\ge \xi'|E'|$. On the other hand, by flipping the
    color of every $v'\in U$, we have introduced at most $4$ new
    violations with respect to the optimal $4$-coloring we
    started with (because $v'$ is connected to a most $4$ other
    vertices in $V'$) and ``cured'' at least one violation (of the
    edge connecting $v'$ to $v^*$). Therefore, $q\le \xi''|E''| +
    (4-1)|U|$, and we conclude that
    \begin{align*}
      \xi''|E''| + 3|U| \ge \xi'|E'| .
    \end{align*}
    Adding to it, the condition $\xi''|E''| \ge |U|$, we get
    \begin{align*}
      \xi''|E''| \ge \max\big(\xi'|E'|- 3|U|, |U|\big).
    \end{align*}
    The minimum of the above RHS is obtained when $|U| =
    \frac{1}{4}\xi'|E'|$, and therefore $\xi''|E''| \ge
    \frac{1}{4}\xi'|E'|$, from which we get
    \begin{align*}
      \xi'' \ge \frac{1}{4}\cdot\frac{|E'|}{{|E''|}}\cdot \xi' .
    \end{align*}
    By definition, $|E''| = |E'| + |V'|$.  Without loss of
    generality, we assume that $|V'|\le |E'|$ and so, $|E''|\le
    2|E'|$, which implies
    \begin{align*}
      \xi'' \ge \frac{1}{4}\cdot \frac{1}{2} \cdot \xi' 
        = \frac{\xi'}{8} .
    \end{align*}
  \end{proof}
  
  \paragraph{Step II:} In the second step, we reduce the degree of
  $v^*$ using the techniques of \cite{ref:Dinur2007-PCP}. We replace
  the vertex $v^*$ by a cloud of vertices $W$ of the same number 
  as $|V'|$. We connect every vertex $w\in W$ to exactly one
  vertex $v\in V'$, and call these
  connections \emph{cross} edges. Then
  we place the vertices $W$ on a degree-3
  expander with equality constraints on its edges. We obtained a
  constraint graph over an alphabet of size $4$, which we denote by
  $G'''$. By Lemma 4.1 of \cite{ref:Dinur2007-PCP} we conclude
  that $G'''$ is satisfiable if and only if $G''$ is
  $4$-colorable, and that if at least a fraction $\xi''$ of the
  edges in $G''$ cannot be properly $4$-colored, then at least a
  fraction $\xi''' = \Omega(\xi'')$ of edges in $G'''$
  cannot be satisfied.

  \begin{figure}
    \begin{center}
      \includegraphics[scale=0.2]{4-color-ID.png}
    \end{center}
    \caption{A $4$-coloring gadget that enforces identical coloring
      of vertices $i,j$.}
    \label{fig:4-color-ID}
  \end{figure}

  \paragraph{Step III: } In the final step, we return to a
  $4$-coloring problem over the final graph $G=(V,E)$, which is
  defined by replacing every equality constraint in the expander
  graph with the $4$-coloring gadget of \Fig{fig:4-color-ID}.  This
  gadget replaces the edge of the equality constraint by $9$ new
  edges defined on $5$ vertices: the original two vertices of the
  equality constraint, plus $3$ new vertices (colored in green in
  \Fig{fig:4-color-ID} and \Fig{fig:kafka}). We denote the set of
  the newly introduced vertices by $T$, and note that as there are
  at most $3|V'|/2$ equality constraints in $G'''$, then $|T|\le
  9|V'|/2$.
  
  Clearly, if $G'''$ was satisfiable, then $G$ is $4$-colorable. On
  the other hand, as every equality constraint in $G'''$ was
  replaced by a gadget with $9$ edges, then $|E| \le 9|E'''|$.
  Therefore, if at least a fraction of $\xi'''$ of the edges in
  $G'''$ cannot be legally colored, then at least a fraction $\xi\ge
  \xi'''/9$ of edges in $G$ cannot be $4$-colored legally.

  Let us summarize the structure of the graph $G=(V,E)$ that we have
  obtained thus far (see the last graph in \Fig{fig:kafka}). It has
  3 types of vertices
  \begin{enumerate}
    \item The original vertices of $G'$, which are denoted by $V'$
      (black vertices in \Fig{fig:kafka}).
     
    \item The original vertices of the expander that were introduced 
      in step II and are denoted by $W$ (blue vertices in
      \Fig{fig:kafka}), where each $w\in W$ is connected to a unique
      $v\in V'$.

    \item The additional (green) vertices that were introduced in 
      step III by the equality gadget of \Fig{fig:4-color-ID}, and are
      denoted by $T$, where $|T| \le 9|V'|/2$.
    
  \end{enumerate}
  We can therefore view $G$ as two graphs: the original graph $G'$,
  and an expander-like graph $\Gamma$ that is defined on $W\cup T$,
  the vertices of type 2 and 3. Additionally, there are exactly
  $|V'|$ cross edges connecting the vertices of $V'$ to the vertices
  $W$ in $\Gamma$ (brown edges in \Fig{fig:kafka}).
  
  To show that $G$ is ($3$, $\lambda$)-decomposable, we would now
  like to partition $E$ into $3$ non-overlapping subsets $E=E_1\cup
  E_2\cup E_3$.  We will do that by first partitioning the edges of
  $G'$ into $3$ subsets and then partition the edges of $\Gamma$ to
  $3$ subsets.  We will then arbitrarily pair each of the subsets of
  $G'$ with a subset of $\Gamma$ and merge the two subsets in each
  pair. Finally, we will distribute the cross edges among the $3$
  merged subsets.
  
  To partition $G$, we use the following lemma, whose proof is given
  in \App{app:4decomp-proof}.
  \begin{lemma}
  \label{lem:4decomp} 
     Let $G'=(V',E')$ be a graph where the degree of every vertex is
     at most $4$. Then its set of edges $E'$ can be efficiently
     partitioned into $3$ disjoint subsets $E' = E'_1 \cup E'_2 \cup
     E'_3$ such that each of the subgraphs $G'_{\ell} = (V',
     E'_{\ell}),$ for $\ell =1,2,3,$ consists only of simple cycles,
     paths and isolated vertices.
  \end{lemma}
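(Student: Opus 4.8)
The plan is to deduce \Lem{lem:4decomp} from a classical edge‑colouring fact. The first observation is that a graph all of whose connected components are paths, simple cycles, or isolated vertices is exactly a graph of maximum degree at most $2$; so it suffices to partition $E'$ into three blocks $E' = E'_1 \cup E'_2 \cup E'_3$ such that each subgraph $G'_\ell = (V', E'_\ell)$ has maximum degree at most $2$.

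To produce such a partition I would first apply Vizing's theorem: since $G'$ has maximum degree at most $4$, its edges can be properly coloured with $5$ colours, yielding a partition of $E'$ into (at most) five matchings $M_1, \dots, M_5$; moreover this colouring can be computed in polynomial time, e.g.\ via the Misra--Gries algorithm, which supplies the word ``efficiently'' in the statement. I would then merge the matchings in pairs, setting $E'_1 \EqDef M_1 \cup M_2$, $E'_2 \EqDef M_3 \cup M_4$, and $E'_3 \EqDef M_5$. Since each vertex is incident to at most one edge of any single matching, it is incident to at most two edges of each $E'_\ell$, so $G'_\ell$ has maximum degree at most $2$ and is therefore a vertex‑disjoint union of paths, cycles and isolated vertices; because $G'$ is simple, each such cycle has length at least $3$ and is simple. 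This is precisely the required decomposition, and the number of blocks, three, matches the three scapegoat qubits of the Hamiltonian \eqref{def:H}.

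I do not expect a genuine obstacle in this argument; the only points needing a line of justification are the elementary equivalence ``components are paths/cycles/isolated vertices $\iff$ maximum degree $\le 2$'' and the simplicity of the resulting cycles, both immediate for a simple graph. If one wished to avoid quoting Vizing, a self‑contained alternative is the Eulerian route: add a ``virtual'' matching on the odd‑degree vertices of each connected component (this keeps the maximum degree $\le 4$ and makes every component Eulerian), traverse an Euler circuit of each component, and $2$‑colour its edges alternately along the circuit. A degree‑$4$ vertex is then visited exactly twice, and each visit splits the two incident circuit‑edges one‑and‑one between the two colours, so every colour class has degree $\le 2$; the only subtlety is the parity of the circuit length at the vertex where it closes, which is handled by starting the circuit at a vertex of degree at most $2$ when one exists (and by noting the length is automatically even when every degree is $0$ or $4$). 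This route in fact yields only two blocks, but three blocks is what the construction needs, so either bound suffices.
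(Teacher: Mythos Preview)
Your argument via Vizing's theorem is correct and complete: the equivalence between ``max degree $\le 2$'' and ``disjoint union of paths, cycles and isolated vertices'' is immediate, and pairing five matchings into three blocks gives exactly the required bound. The efficiency claim is handled by Misra--Gries, and the assumption that $G'$ is simple is justified in context (it arises from the gadget constructions in the proof of \Lem{lem:max-deg-4}, none of which create parallel edges).

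The paper takes a rather different and heavier route. It first pads $G'$ to a supergraph $G''$ in which every vertex has degree $3$ or $4$, then invokes a theorem of Thomassen (extending Tutte) that any graph whose degrees lie in $\{r,r{+}1\}$ admits a non-trivial path-cover, together with a polynomial-time algorithm of Gom\'ez--Wakabayashi to find one. The edges of that path-cover (restricted back to $G'$) form $G'_1$; the remaining graph has degrees in $\{1,2,3\}$, so the same trick is repeated to peel off $G'_2$, leaving a max-degree-$2$ graph $G'_3$. Compared to your argument this uses considerably less standard machinery, though it has the minor feature that $G'_1$ and $G'_2$ are guaranteed to be cycle-free (pure path systems), which your Vizing-based blocks need not be; however the paper does not exploit that extra structure, so your simpler proof is a genuine improvement. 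Your Eulerian aside is essentially the classical ``split a $2k$-regular graph into $k$ $2$-factors'' argument and would also work, but the Vizing route is cleaner and needs no parity bookkeeping.
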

  
  To partition $\Gamma$, we partition every equality constraint into
  $3$ parts, as shown in \Fig{fig:expander-decomp}(a), and arbitrarily
  label these parts by $a,b,c$. We then merge all subsets with the
  same label together, resulting in $3$ disjoints subsets of the  
  edges in $\Gamma$, see \Fig{fig:expander-decomp}(b). 
  
  \begin{figure}
    \begin{center}
      \includegraphics[scale=0.2]{expander-decomp.png}
    \end{center}
    \caption{ Decomposing
    the edges of $\Gamma$ to $3$ subsets. (a) every identity
    constraint gadget is decomposed into $3$ parts. (b) These parts
    are joined together arbitrarily, leading to a global
    decomposition of the edges in $\Gamma$.} 
    \label{fig:expander-decomp}
  \end{figure}
  
  Next, we merge the three edge subsets of $G'$ with those of
  $\Gamma$ by \emph{arbitrarily} joining an edge subset of $G'$ with
  an edge subset of $\Gamma$. This gives us $3$ disjoint subsets of
  the combined edges in $G'$ and $\Gamma$. We denote these subsets by
  $E_1, E_2, E_3$. 

  \begin{figure}
    \begin{center}
      \includegraphics[scale=0.2]{4decomp.png}
    \end{center}
    \caption{The two kinds of vertices in $G'$, which are defined by
      the decomposition of $G'$ into $3$ subgraphs according to
      \Lem{lem:4decomp}.} 
  \label{fig:4decomp} 
  \end{figure}

  Finally, we need to associate each of the cross edges to one of
  the $E_1, E_2, E_3$ subsets. Our goal is to do that \emph{without
  creating new loops} in each of the subgraphs. For that, we use the
  following algorithm. Starting with the edge decomposition of $G'$,
  $E' = E'_1 \cup E'_2 \cup E'_3$, we are guaranteed that in the $3$
  induced subgraphs $G'_{\ell} = (V', E'_{\ell}),$ for $\ell
  =1,2,3,$ every vertex has degree at most 2. Note that with respect
  to the edge decomposition of \Lem{lem:4decomp}, every $v\in V'$
  can only be one of two kinds, which are illustrated in
  \Fig{fig:4decomp}. Then we go over all the vertices $v'\in V'$ and
  add the edge $e=(v', w)$, which connects $v'$ with the vertex
  $w\in W$ of $\Gamma$, to one of the subsets in $\{E_1, E_2, E_3\}$
  as follows. We first pick all the vertices of $v'\in V'$ of the
  first kind, and add $e = (v',w)$ to the subset where $v'$ is an
  isolated vertex.  Obviously, this can not create a loop.
  
  We are left with adding the cross edges $e=(v', w)$ that are
  adjacent to a vertex $v' \in V'$ of the second kind. We will add
  $e'$ to one of the two sets in $\{E_1, E_2, E_3\}$ where $v'$ 
  has degree one. Observe that having degree one means that $v'$ is
  one of the two leaves of a path. We avoid creating a loop if the
  cross edges corresponding to the two endpoints of a path are never
  added to the same set. To ensure that, while there is a second
  kind vertex $v'$ such that $e=(v', w)$ was not yet added to any of
  the sets in $\{E_1, E_2, E_3\}$, we do the following. We start by
  picking one of these vertices $v'$ and arbitrarily choose an edge
  set $E_\ell \in \{E_1, E_2, E_3\}$ where $v'$ has degree 1. We add
  $e$ to $E_\ell$.  Let $v''$ be the other leaf of the path where
  $v'$ belongs in $E_\ell$, it is also a vertex of the second kind.
  Necessarily, there is a second subset $E_{\ell'}\neq E_\ell$ in
  which $v''$ is a leaf in a path. We then add the cross edge of
  $v''$ to $E_{\ell'}$ and proceed as before and conclude this part
  when all the cross edges were added.
    
  To finish the proof, we need to show that for each of the
  subgraphs that we created there exists a $\lambda$-solution. We
  note that by construction, every subgraph is a disjoint union of
  graphs of the following form:
  \begin{enumerate}
    \item Isolated vertices
    \item Isolated cycles
    \item Graphs of the third kind, which are defined below in
      \Def{def:third-kind}. An illustration of such graph is given in
      \Fig{fig:third-kind}
  \end{enumerate}
  
  \begin{definition}[Graphs of the third kind]
  \label{def:third-kind} A graph of the third kind is a graph that
    can be created as follows. Start from a graph $J$ of a maximal
    degree $4$ and place a vertex in the middle of each of its
    edges. Then go over all the mid vertices, and for each such
    vertex $v$ optionally add a new vertex $v'$ and connect the two
    by an edge.
  \end{definition}
  
  \begin{figure}
    \begin{center}
      \includegraphics[scale=0.2]{third-kind.png}
    \end{center}
    \caption{Illustrating the definition of a graph of the third
      kind (\Def{def:third-kind}). Starting with a graph $J$ of a
      maximal degree, we create a graph $J'$ by placing a new vertex
      in the middle of every edge of $J$. Then we create $J''$ by
      going over the mid vertices and optionally connecting them to
      new vertices. } \label{fig:third-kind}
  \end{figure}
  
  By Remark~\ref{rmk:product}, to show that our subgraphs have a
  $\lambda$-solution, it is sufficient to show that cycles and
  graphs of the third kind have $\lambda$-solutions for some fixed
  $\lambda\in (0,1)$. (for isolated vertices we can take
  $X=\frac{1}{2}\Id$). This is shown in the following two lemmas,
  whose proofs are given in \App{sec:lambda-min}.
  \begin{lemma}
  \label{lem:cycle}
    Every cycle has a $\lambda$-solution for a universal $\lambda\in
    (0,1)$.
  \end{lemma}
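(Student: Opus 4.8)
\subsection*{Proof proposal for \Lem{lem:cycle}}

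The plan is to reduce the lemma, via \Cor{corol:lambda-min}, to a purely probabilistic statement about \emph{collision probabilities} of a random proper colouring of the cycle $C_m$. Recall that a $\lambda$-solution of $C_m$ is a colour-invariant $2$-local qq state $X(\mu)$ whose support consists of legal $4$-colourings and whose non-edge marginals satisfy $X^{(ij)}\succeq\lambda\Id$. Since every colouring in the support is proper, for every edge $\{i,j\}$ the collision probability vanishes, so by \Lem{lem:2marginals} the edge marginal is $X^{(ij)}=\ketbra{\psi_-}{\psi_-}\succeq 0$; edges therefore impose no constraint. By \Cor{corol:lambda-min}, the qq-state property together with property (4) of \Def{def:lam-sol} is equivalent to $\tau_\mu^{(ij)}\in[\lambda,\tfrac13(1-\lambda)]$ for every non-adjacent pair $\{i,j\}$, where $\tau_\mu^{(ij)}=\Prob_\mu(\alpha_i=\alpha_j)$. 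Finally, $\tau^{(ij)}$ is unchanged by relabelling the four colours, so averaging any $\mu$ supported on proper colourings over the $24$ colour permutations makes it colour-invariant without altering a single collision probability. Thus it suffices to produce, for a single universal $\lambda>0$, a distribution $\mu$ on proper $4$-colourings of $C_m$ (for every $m\ge 3$) whose \emph{every} non-edge collision probability lies in $[\lambda,\tfrac13(1-\lambda)]$.

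First I would record why the obvious candidate fails: for $\mu$ uniform over \emph{all} proper $4$-colourings of $C_m$, a transfer-matrix (circulant) computation gives that two vertices at cyclic distance $2$ collide with probability $\tfrac{3^{m-1}\pm 9}{3^{m}\pm 3}$, which exceeds $\tfrac13$ for every even $m$ and approaches $\tfrac13$ from below as $m\to\infty$ for odd $m$; no fixed $\lambda>0$ can then satisfy the upper bound $\tau\le\tfrac13(1-\lambda)$ for all $m$. The remedy is to bias the colouring away from the local ``colour-reversing'' pattern that is responsible for distance-$2$ collisions. Concretely, identify the colours with $\mathbb{Z}_4$ and sample $\alpha_1\in\mathbb{Z}_4$ uniformly, then set $\alpha_{i+1}=\alpha_i+\delta_i\pmod 4$ where the increments $\delta_i\in\{1,2,3\}$ are i.i.d.\ with $\Prob(\delta=1)=\Prob(\delta=2)=a$, $\Prob(\delta=3)=b$, $2a+b=1$, conditioned on $\sum_{i=1}^m\delta_i\equiv 0\pmod 4$ so the cycle closes up. Since $\delta_i\ne 0$ the support consists only of proper colourings; one checks this is exactly the law $\Prob[\alpha]\propto\prod_i M_{\alpha_i\alpha_{i+1}}$ of a circular Markov chain with circulant transition matrix $M$ (rows $(0,a,a,b)$), and a short computation gives $\tau_\mu^{(ij)}=(M^{d_1})_{00}(M^{d_2})_{00}/(M^m)_{00}$ for a pair at cyclic distances $d_1\le d_2$, $d_1+d_2=m$. (The wrap-around factor is there precisely because the cycle distribution is not the product of its marginals.)

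It then remains to control $p_k:=(M^k)_{00}$. The eigenvalues of $M$ are $1,\ -b,\ -a\pm(a-b)i$, hence $p_k=\tfrac14\bigl[1+(-b)^k+2\,\mathrm{Re}\bigl((-a+(a-b)i)^k\bigr)\bigr]\to\tfrac14$ geometrically, with ratio $\rho=\max\bigl(b,\sqrt{a^2+(a-b)^2}\bigr)<1$ for any admissible $a\in(\tfrac13,\tfrac12)$. Taking, for instance, $a=\tfrac25$, $b=\tfrac15$ one computes $p_2=\tfrac{8}{25}$ and $p_3=\tfrac{6}{25}$, both strictly below $\tfrac13$, and the geometric tail ($\rho=\tfrac1{\sqrt5}$) traps $p_k$ in a fixed sub-interval of $(0,\tfrac13)$ for all $k\ge 2$ (verify $k=2,3$ by hand, bound $k\ge 4$ by $|p_k-\tfrac14|\le \tfrac34\rho^k$). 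Feeding these bounds into the ratio $p_{d_1}p_{d_2}/p_m$ shows $\tau_\mu^{(ij)}$ stays inside a slightly larger but still fixed sub-interval of $(0,\tfrac13)$ for all $m$ above some constant $m_0$. The finitely many short cycles $C_3,\dots,C_{m_0}$ are then handled directly: $C_3=K_3$ has no non-edge; and for each $4\le m\le m_0$ — in particular $C_4$, whose only non-edges are the two diagonals — one writes down an explicit colour-invariant distribution (e.g.\ a weighted mixture over the $S_4$-orbits of proper colourings) with all non-edge collision probabilities comfortably inside $(0,\tfrac13)$. Setting $\lambda:=\min\bigl(\inf\tau,\ 1-3\sup\tau\bigr)$ over this finite list together with the asymptotic regime gives the universal constant, completing the proof.

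I expect the main obstacle to be exactly this uniform bookkeeping: exhibiting a \emph{single} bias $(a,b)$ — equivalently a single transition matrix $M$ — for which $(M^k)_{00}$ is bounded away from both $0$ and $\tfrac13$ for \emph{all} $k\ge 2$ simultaneously, while ensuring the wrap-around factor $p_{d_2}/p_m$ (which is not identically $1$) does not spoil the estimate for small cycles such as $C_4$. The uniform-random baseline already shows one cannot ignore the distance-$2$ case, and the delicacy is that pulling down the distance-$2$ collision probability tends to push up the distance-$3$ one, so the set of workable $(a,b)$ is an interval rather than one obvious point; checking its non-emptiness, plus the handful of small-cycle base cases, is the bulk of the work.
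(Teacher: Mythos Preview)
Your proposal is correct and takes a genuinely different route from the paper. Both arguments reduce via \Cor{corol:lambda-min} to trapping every non-edge collision probability strictly inside $(0,\tfrac13)$, and both dispatch the short cycles $C_3,C_4,C_5$ by hand. For $m\ge 6$ the paper does \emph{not} use a biased Markov chain: it colours the even-indexed vertices i.i.d.\ uniformly, then each odd vertex uniformly conditioned on its two even neighbours, and shows by a short case analysis that every non-edge collision probability lands in the finite set $\{1/4,\,9/28,\,19/84,\,1/3\}$; the isolated $1/3$ values (next-nearest odd pairs) are then pushed strictly below $1/3$ by averaging the distribution over all cyclic rotations of the even/odd labelling. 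Your approach instead identifies the four colours with $\mathbb{Z}_4$ and runs a circulant transfer-matrix computation, obtaining the closed form $\tau^{(ij)}=p_{d_1}p_{d_2}/p_m$ with $p_k=(M^k)_{00}$; the bias $(a,b)=(\tfrac25,\tfrac15)$ is chosen precisely so that $p_2=\tfrac{8}{25}<\tfrac13$. This buys a uniform spectral handle on all distances simultaneously --- no even/odd split, no special boundary pairs for odd $m$ --- at the cost of the wrap-around bookkeeping you correctly anticipate: with your parameters $p_2^{\,2}/p_4=64/153>\tfrac13$, so $C_4$ must indeed be treated separately, and the crude tail bound $|p_k-\tfrac14|\le\tfrac34\rho^k$ alone certifies $p_2\,p_{m-2}/p_m<\tfrac13$ only from about $m\ge 8$, so the intermediate values $m=5,6,7$ still require a direct check. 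The paper's construction is more elementary and avoids any linear algebra; yours is more systematic and would port more readily to larger alphabets.
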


  \begin{lemma}
  \label{lem:third-kind} 
    Every graph of the third kind has a
    $\lambda$-solution for a universal $\lambda\in (0,1)$.
  \end{lemma}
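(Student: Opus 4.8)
The plan is to translate the statement, via \Cor{corol:lambda-min}, into a combinatorial problem about collision probabilities and then to build an explicit distribution. A graph of the third kind $G$ (\Def{def:third-kind}) has three kinds of vertices: the original vertices $V_J$ of the underlying graph $J$ (of maximal degree $4$), which form an \emph{independent set} in $G$; the mid-vertices $M$, one for each edge of $J$, each adjacent to its two $V_J$-endpoints and to at most one pendant; and the pendant vertices $P$, each of degree $1$. Since every $\lambda$-solution is color invariant, \Lem{lem:2marginals} together with \Cor{corol:lambda-min} reduce \Def{def:lam-sol} to the following: find a color-invariant probability distribution $\mu$ supported on proper $4$-colorings of $G$ and a universal constant $\lambda>0$ such that $\lambda \le \tau_\mu^{(ij)} \le \tfrac13(1-\lambda)$ for every non-edge $\{i,j\}$. (For an edge $\{i,j\}$ we automatically get $X^{(ij)}=\ketbra{\psi_-}{\psi_-}\succeq 0$, since $\tau_\mu^{(ij)}=0$ on proper colorings, so condition (1) of \Def{def:lam-sol} is free.)

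I would construct $\mu$ layer by layer, with small corrections. The base distribution $\mu_0$ colors the vertices of $V_J$ i.i.d.\ uniformly in $\{0,1,2,3\}$, then colors each mid-vertex $m_{uw}$ uniformly in $\{0,1,2,3\}\setminus\{\alpha_u,\alpha_w\}$, then colors each pendant of $m_{uw}$ uniformly in $\{0,1,2,3\}\setminus\{\alpha_{m_{uw}}\}$; since each step picks uniformly among the currently allowed colors, $\mu_0$ is invariant under color permutations and is supported on proper colorings. Conditioning on the $V_J$-colors (and, where needed, on the mid-colors), one checks that $\tau_{\mu_0}^{(ij)}$ takes one of finitely many positive values (determined by the local structure around $i,j$ and by the relevant degrees, all $\le4$), that $\tau_{\mu_0}^{(ij)}\le\tfrac13$ always, and that $\tau_{\mu_0}^{(ij)}=\tfrac13$ is attained exactly on two ``dangerous'' families: (a) two mid-vertices that share an endpoint in $J$, and (b) a pendant together with one of the two $V_J$-endpoints of its mid-vertex. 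Since $\tfrac13$ is the forbidden boundary in \Cor{corol:lambda-min}, I would set $\mu=(1-\eta)\mu_0+\tfrac\eta2\mu_1+\tfrac\eta2\mu_2$ for a small universal $\eta>0$, where $\mu_1,\mu_2$ are corrective color-invariant distributions on proper colorings pulling the dangerous families strictly below $\tfrac13$: $\mu_2$ recolors each pendant uniformly in $\{0,1,2,3\}\setminus\{\alpha_{m_{uw}},\alpha_u,\alpha_w\}$ (nonempty, and makes the pendant differ from both $V_J$-endpoints), handling (b); $\mu_1$ fixes an acyclic orientation of $J$, makes each mid-vertex ``owned'' by its later endpoint, and colors each vertex's owned mid-vertices so as to be pairwise distinct whenever the two-endpoint constraints permit (also avoiding already fixed colors of mid-vertices sharing an endpoint), handling (a). The hypothesis that $J$ has maximal degree $4$ is precisely what keeps all these reduced color-lists nonempty, so every run outputs a legal coloring.

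It then remains to prove $\lambda\le\tau_\mu^{(ij)}\le\tfrac13(1-\lambda)$ for all non-edges, by a finite case analysis over the type of $(i,j)$ (which of $V_J,M,P$ each belongs to, and whether they share a vertex or an edge of $J$). For the upper bound use $\tau_\mu^{(ij)}=(1-\eta)\tau_{\mu_0}^{(ij)}+\tfrac\eta2\tau_{\mu_1}^{(ij)}+\tfrac\eta2\tau_{\mu_2}^{(ij)}$: on the dangerous pairs the dominant $(1-\eta)\tfrac13$ is strictly lowered by the corrective terms, while on all other non-edges $\tau_{\mu_0}^{(ij)}$ is already below $\tfrac13$ by an absolute margin, so a small enough $\eta$ keeps the convex combination strictly below $\tfrac13$ with a size-independent gap. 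For the lower bound, $\tau_\mu^{(ij)}\ge(1-\eta)\tau_{\mu_0}^{(ij)}\ge(1-\eta)c_0$, where $c_0>0$ is a universal lower bound on $\tau_{\mu_0}$ over non-edges (no inequality is ever forced between two non-adjacent vertices under $\mu_0$, and the finitely many attainable values are all positive). Setting $\lambda$ to a small enough universal constant then yields a $\lambda$-solution, proving the lemma.

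The main obstacle is the \emph{simultaneous} two-sided control. The naive layered distribution $\mu_0$ is exactly tight at $\tau=\tfrac13$ on the dangerous pairs, so some correlation must be injected to open a gap below $\tfrac13$; but injecting too much (e.g.\ deterministically forbidding equal colors on mid-vertices that share an endpoint) would push $\tau^{(ij)}$ all the way down to $0$ on other non-edges, which \Cor{corol:lambda-min} forbids just as much. Calibrating $\mu_1,\mu_2$ and $\eta$ so that each of the $O(1)$ pair-types lands strictly inside $(0,\tfrac13)$ with a margin independent of $|V|$ is the technical heart of the argument and absorbs essentially all the case-checking. (A uniformly random proper $4$-coloring would be the most natural candidate for $\mu$, but with only $4$ colors and maximal degree $4$ one lacks the spatial-mixing estimates needed to bound its two-point functions, which is why the argument goes through the explicit layered construction instead.)
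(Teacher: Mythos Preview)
Your overall strategy matches the paper's: build the same layered base distribution $\mu_0$ (the paper calls it $\mu_I$), identify the same two dangerous families where $\tau_{\mu_0}^{(ij)}=\tfrac13$, and mix in a small corrective. Your $\mu_2$ for the pendant--to--$V_J$-parent case is fine. The gap is in $\mu_1$.

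Your $\mu_1$ is not well-defined, and the idea behind it does not clearly work. Consider a vertex $v\in V_J$ of degree $4$. All four incident mid-vertices must avoid the color $\alpha_v$, so they live in a $3$-element set, and by pigeonhole any legal coloring has at least one collision among them. Your acyclic-orientation scheme assigns each mid-vertex to be ``owned'' by its later endpoint, so a source $v$ owns none of its four incident mid-vertices; they are colored by four different later owners, each trying greedily to avoid ``already fixed colors of mid-vertices sharing an endpoint''. By the time the fourth one is processed it must avoid $\alpha_v$, its other endpoint's color, and three already-placed mid-colors at $v$ --- five forbidden values out of four --- so avoidance fails generically. You write ``whenever the two-endpoint constraints permit'' but never specify the fallback, and the natural fallbacks do not obviously give $\tau_{\mu_1}^{(ij)}<\tfrac13$ for \emph{every} one of the six pairs at $v$. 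Since $\mu_0$ and your $\mu_2$ both sit exactly at $\tfrac13$ on these pairs (the mid-vertex marginals are unchanged in $\mu_2$), the whole mixture is stuck at $\tfrac13$ unless $\mu_1$ strictly breaks the tie on all of them.

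The paper supplies the missing idea: Vizing's theorem gives a proper $5$-edge-coloring of $J$ (since $\Delta(J)\le 4$), so mid-vertices sharing an endpoint automatically carry distinct labels in $\{a,b,c,d,e\}$. The corrective $\mu_{II}$ picks a uniformly random balanced map $m:\{a,b,c,d,e\}\to\{0,1,2\}$ (each color getting one or two labels), colors every mid-vertex by the image of its edge-label, colors all of $V_J$ with the fourth color $3$, colors each pendant uniformly in $\{0,1,2\}$ minus its mid-neighbor's color, and symmetrizes over the $24$ color permutations. A short computation gives collision probability $\tfrac15$ for any two mid-vertices with distinct labels, and $0$ for pendant versus $V_J$-parent, so a \emph{single} corrective handles both dangerous families at once; then $\mu=(1-\eps)\mu_I+\eps\mu_{II}$ with $\eps=\tfrac{1}{20}$ works. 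The edge-coloring decouples the required anti-correlation around each $V_J$-vertex from the two-endpoint constraints in a way your greedy orientation scheme cannot.
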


\end{proof}

\section{Acknowledgments}

We thank Dima Gavinsky, Thomas Vidick, Chen Zhili and Anurag Anshu
for valuable discussions.

This project is supported by the National Research Foundation,
Singapore through the National Quantum Office, hosted in A*STAR,
under its Centre for Quantum Technologies Funding Initiative
(S24Q2d0009)

{~}

\begin{center}
   {\Large \textbf{APPENDICES}}
\end{center}


\appendix

\section{Proof of \Lem{lem:q-support}}
\label{app:q-support}

Assume that $X$ is a PSD and write $X = \sum_\ualpha \mu(\ualpha)
D_\ualpha$. Let $\ualpha\in \supp(X)$. For every $i=1, \ldots, n$,
let $S^{(i)}_\ualpha\subseteq \supp(X)$ be the set of strings in the
support of $X$ that coincide with $\ualpha$ at all locations except
for the site $i$. We claim that necessarily $|S^{(i)}_\ualpha|\ge
3$.  To see this, define 
\begin{align*}
  R \EqDef \bigotimes_{j\neq i} F_{\alpha_j}
\end{align*}
and note that $R\succeq 0$, and therefore $R^{1/2}$ is
well-defined, which allows us to define the single-qubit operator
\begin{align}
\label{eq:positive-map}
  Y^{(i)}_\ualpha \EqDef \Tr_{\setminus \{i\}}
    (R^{1/2} \cdot X \cdot R^{1/2}) =
    \Tr_{\setminus \{i\}} (X \cdot R)
    = \Tr_{\setminus \{i\}} \left( X \cdot \bigotimes_{j\neq i} 
        F_{\alpha_j}\right).
\end{align}
Above, $\Tr_{\setminus i}$ denote partial trace on all system but
site $i$. Then on one hand, $Y^{(i)}_\ualpha$ is a PSD operator
because it was obtained from $X$ by a positive
map~\eqref{eq:positive-map}. On the other
hand, by the orthogonality of the dual basis $\{D_\ualpha\}$ with
the SIC-POVM $\{E_\ualpha\}$,
\begin{align*}
  Y^{(i)}_\ualpha = \sum_{\ubeta\in S^{(i)}_\ualpha} 
    w_\ubeta D_{\beta_i}.
\end{align*}
We claim that if the above sum contains less than $3$ elements,
it is not a PSD. To see this, assume without loss of
generality that the sum is given by $A(\lambda) = \lambda D_0 +
(1-\lambda)D_1$ for some $\lambda\in [0,1]$. By the symmetry of
the SIC-POVM basis, our argument applies to any other pair of dual
basis matrices. Now $A(0) = D_1$ and $A(1) = D_0$. At both ends we
get a matrix with a $(-1)$ eigenvalue and therefore at these
points $A(\lambda)$ is not a PSD. If at some $\lambda\in (0,1)$ it
was a PSD, then there should be $\lambda^*\in (0,1)$ at which the
negative eigenvalue of $A$ turns into $0$, and therefore $\det
A(\lambda^*)=0$. However, a direct calculation shows that $\det
A(\lambda) = 4(\lambda^2 - \lambda -\frac{1}{2})$, and this
function does not have a root inside $(0,1)$. 

Using this claim, it is now easy to see that if $X$ is a PSD then
its support must contain at least $3^n$ strings. Indeed, starting
from some string $\ualpha=(\alpha_1, \ldots, \alpha_n)\in
\supp(X)$, there exists at least two other strings in the support
with $\alpha_1$ replaced by $\alpha_1', \alpha_1''$. Each one of
these 3 strings then gives birth to 3 other strings by looking at
$\alpha_2$, etc. Eventually, we end up with at least $3^n$ strings
in the support.

\section{Proof of \Lem{lem:4decomp}}
\label{app:4decomp-proof}

  First we consider a super-graph $G''$ of $G'$ where every vertex
  has degree 3 or 4. Such a super-graph can be constructed by adding
  to $G'$ an at most linear number of 4-cliques on new vertices.
  Each of these new vertices has degree 3, and therefore can be
  connected to a vertex of $G'$ of degree 1 or 2.  By doing this
  sufficiently many times all vertices in $G''$ will have degree 3
  or 4.
  
  We now recall two useful results for the proof. First,
  a result of Thomasen\cc{ref:Thomassen1981-factor}, extending a
  result of Tutte\cc{ref:Tutte1978-subgraph}, states that for every
  $r \geq 1$, every graph where all vertices have degree $r$ or
  $r+1$, has a non-trivial path-cover. Second,
  in\cc{ref:Gomez2020-graph-cover} Gom\'ez and Wakabayashi give an
  efficient algorithm that finds a non-trivial path-cover in a graph
  if there is one.

  By the above, we can efficiently find a non-trivial path-cover for
  $G''$, and the edges of this path-cover that belong to $G'$
  constitute our first output graph $G'_1.$ In the remaining
  subgraph of $G'$ every vertex has degree 1 or 2 or 3. By a process
  analogous to the construction of $G''$, we can construct a
  super-graph $G'''$ of this subgraph where every vertex has degree
  2 or 3. Therefore we can again find efficiently a non-trivial
  path-cover of $G'''$, and the edges of this path-cover that belong
  to $G'$ constitute our second output graph $G'_2$.  The remaining
  subgraph of $G'$ has vertices of degree 1 or 2 and therefore it
  consists of simple cycles, paths and isolated vertices. This is
  our third output graph $G'_3$.

\section{Constructing \texorpdfstring{$\lambda$}{lambda}-solutions 
  for cycles and graphs of the third kind} 
\label{sec:lambda-min}

In this section we prove
Lemmas~\ref{lem:third-kind},~\ref{lem:cycle} by explicitly
constructing $\lambda$-solutions, with $\lambda > 0$, for cycles and
graphs of the third kind.

\subsection{Proof of \Lem{lem:cycle}}
\label{sec:cycle-proof} 

Let $G=(V,E)$ be a cycle with $n=|V|$ vertices, which we denote by
by $i=1,2,\ldots, n$. By Corollary~\ref{corol:lambda-min}, to
show that there exists a $\lambda$-solution to $G$, we need to show
that there exists a color-invariant distribution of legal
$4$-coloring $\mu$ in which the collision probability
$\Prob_\mu(\alpha_i=\alpha_j)$ for every $\{i,j\}\notin E$ is in the
region $[a,b]$ for some global $0<a<b<\frac{1}{3}$. We shall
construct this distribution explicitly for the cases $n=3,4,5$, and
then generally for every $n\ge 6$. 

We start with the edge cases $n=3,4,5$, which are proved in
the following claim.
\begin{claim}
  There exist $\lambda$-solutions for cycles of length $3,4,5$.
\end{claim}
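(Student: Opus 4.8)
The plan is to reduce the three defining conditions of a $\lambda$-solution to a single combinatorial requirement and then dispose of $n=3,4,5$ by hand. By \Def{def:lam-sol}, \Lem{lem:2marginals} and \Cor{corol:lambda-min} (together with the remark following \Def{def:lam-sol}), it suffices, for each cycle $C_n$, to exhibit a color-invariant probability distribution $\mu$ supported entirely on legal (proper) $4$-colorings such that every non-edge pair $\{i,j\}\notin E$ has collision probability $\tau_\mu^{(ij)}=\Prob_\mu(\alpha_i=\alpha_j)$ obeying $\lambda\le\tau_\mu^{(ij)}\le\frac13(1-\lambda)$ for a fixed $\lambda\in(0,1/4)$; I will take $\lambda=\frac1{10}$ throughout. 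Conditions (1)--(3) of \Def{def:lam-sol} are then automatic: the single-qubit marginals of a color-invariant operator are $\frac12\Id\succeq0$, on every edge $\{i,j\}$ the support forces $\alpha_i\ne\alpha_j$ so by \Lem{lem:2marginals} $X^{(ij)}=\ketbra{\psi_-}{\psi_-}\succeq0$, and legality and color invariance hold by the construction of $\mu$.

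For $n=3$ there is no non-adjacent pair, so condition (4) is vacuous, and I would simply take $\mu$ uniform over the $24$ proper $4$-colorings of the triangle (equivalently $X=X_\sym(0,1,2)$), which is legal and color invariant by construction. For $n=5$ the uniform distribution over all proper $4$-colorings already works: it is invariant under the dihedral group and under color permutations, so all five chord collision probabilities coincide, and a direct count ($P(C_5,4)=3^5-3=240$ proper colorings, of which $72$ satisfy $\alpha_1=\alpha_3$) gives $\tau_\mu=\frac{72}{240}=\frac3{10}$, which lies in $[\frac1{10},\frac13(1-\frac1{10})]=[\frac1{10},\frac3{10}]$, so $C_5$ has a $\frac1{10}$-solution.

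The case $n=4$ is where the work is, and is the main obstacle. The uniform distribution over the $84$ proper $4$-colorings of $C_4$ fails: by symmetry the two diagonals have equal collision probability, but it equals $\frac{36}{84}=\frac37>\frac13$; and merely deleting the $12$ proper two-colorings brings this to exactly $\frac13$, still not strictly inside the allowed interval. The fix is to reweight colorings by ``type'', determined by whether $\alpha_1=\alpha_3$ and whether $\alpha_2=\alpha_4$ — equalities preserved by every color permutation, while the rotation $1\!\to\!2\!\to\!3\!\to\!4$ interchanges the two ``one-diagonal-collision'' types and fixes the two-colorings and the rainbow colorings. Assigning per-coloring weight $0$ to the $12$ two-colorings, weight $1$ to the $48$ one-collision colorings, and weight $4$ to the $24$ rainbow colorings produces a $\mu$ that is color invariant, supported on legal colorings, and invariant under the diagonal-swapping rotation, so both diagonals get $\tau_\mu=\frac{24}{144}=\frac16\in[\frac1{10},\frac3{10}]$. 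Hence each of $C_3,C_4,C_5$ admits a $\frac1{10}$-solution, which proves the claim. The delicate point is confined to $C_4$: one must notice that the uniform measure over-weights the colorings in which diagonal vertices collide, and then choose a reweighting that is simultaneously color-invariant and symmetric in the two diagonals while keeping both collision probabilities bounded away from $0$ and from $1/3$; checking these symmetries and the type counts is then routine bookkeeping.
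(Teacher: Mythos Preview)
Your proof is correct. The reduction to controlling collision probabilities on non-edges is exactly the paper's strategy, and your case $n=3$ coincides with the paper's. For $n=4$ and $n=5$, however, you take a genuinely different route. The paper builds its distributions from a handful of explicit symmetric-coloring operators: for $C_4$ it takes $\frac34 X_\sym(0,1,2,3)+\frac14 X_\sym(0,1,0,1)$, yielding $\tau=\frac14$ on both diagonals; for $C_5$ it averages five hand-picked colorings, one per chord, each forcing exactly that chord to collide, giving $\tau=\frac15$. You instead work with the uniform measure on all proper $4$-colorings (for $C_5$) and a type-reweighted version of it (for $C_4$), and compute $\tau$ via chromatic-polynomial counts. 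Your $C_5$ argument is arguably cleaner --- one distribution, one count, dihedral symmetry does the rest --- at the price of landing exactly at the boundary $\tau=\frac{3}{10}=\frac{1-\lambda}{3}$ for $\lambda=\frac1{10}$, which is fine but tight. For $C_4$ your diagnosis that the uniform measure over-weights diagonal collisions and your type-based reweighting are correct and nicely motivated; the paper's two-term mixture is more minimal but requires guessing the right pair of colorings. Both approaches are equally valid; yours leans on counting and symmetry, the paper's on small explicit supports.
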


\begin{proof}
  We use the following constructions.
  \begin{itemize}
    \item For $n=3$ all vertices are neighbors of each other so we can 
      simply take $X_3\EqDef X_\sym(\ualpha^*)$ for $\ualpha^*
      \EqDef (0,1,2)$.
      
      \item For $n=4$, we take $\ualpha^*\EqDef (0,1,2,3)$,
      $\ubeta^* \EqDef (0,1,0,1)$ and define $X_4 \EqDef
      \frac{3}{4}X_\sym(\ualpha^*) + \frac{1}{4}X_\sym(\ubeta^*)$.
      The collision probability in $X_\sym(\ualpha^*)$ for every
      non-neighboring vertices is $0$, while in $X_\sym(\ubeta^*)$
      it is $1$. Therefore, in $X_4$ it is exactly $\frac{1}{4}$.
      
    \item For $n=5$: consider a cycle of length $5$ with vertices
    numbered by $1,2,3,4,5$. There are exactly $5$ pairs of
    non-neighboring vertices: $(1,3), (1,4), (2,4), (2,5), (3,5)$.
    For each of these pairs we define a specific solution in which
    we color this pair in the same color (say, 0), and color the
    rest of the vertices with the $3$ remaining color and 
    symmetrize. The collision probability for non-neighboring
    vertices is $1$ for the special pair, and $0$ for the rest.
    Taking the convex combination of all these $5$
    solutions, we obtain a state in which the collision probability
    is $1/5$.
  \end{itemize}
  
\end{proof}

Let us now assume that $G=(V,E)$ is a cycle with $n\ge 6$ vertices,
which we index by $1,2,3, \ldots$. We partition the vertices into
$V_{\rm even}$, $V_{\rm odd}$. When $n$ is even then $|V_{\rm
even}|=|V_{\rm odd}|=n/2$, and when $n$ is odd, then $V_{\rm
odd}=(n+1)/2$ while $V_{\rm even} = (n-1)/2$. These two cases are
depicted in \Fig{fig:loops}. Note that in the even case, no vertices
of the same type are neighbors, whereas in the odd case, the odd
vertices $\{1,n\}$ are neighbors.
\begin{figure}
  \begin{center}
    \includegraphics[scale=0.2]{loops.png}
  \end{center}
  \caption{The construction of a $\lambda$-solution for cycles. We
  index the vertices of the cycle by $\{1,2,\ldots, n\}$ and mark
  the even and odd vertices differently. When the cycle has an odd
  number of vertices, the two odd vertices $1,n$ are adjacent.}
\label{fig:loops}
\end{figure}

We will construct the coloring distribution $\mu$ by first
constructing an auxiliary distribution $\mu'$ that is defined by the
following stochastic procedure. We first color randomly and
uniformly all the even vertices, which by construction are never
adjacent to each other (the black vertices in \Fig{fig:loops}).
Next, we uniformly in random color the odd vertices so that they are
compatible with the coloring of their neighbors. We claim:
\begin{claim}
  There exist $0<a'<b'<\frac{1}{3}$ such that for every $\{i,j\}\notin
  E$, 
  \begin{align*}
    \Prob_{\mu'}(\alpha_i=\alpha_j) \in [a',b']
  \end{align*}
  except for when $i,j$ are odd vertices
  that are next-to-nearest neighbors, in which case
  $\Prob_{\mu'}(\alpha_i=\alpha_j)=\frac{1}{3}$.
\end{claim}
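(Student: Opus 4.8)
The plan is to exploit the locality of the generating process for $\mu'$ together with the symmetry of the four colours. First I would note that the whole stochastic procedure is invariant under permutations of the colour alphabet $\{0,1,2,3\}$, so every single-vertex marginal of $\mu'$ is uniform; hence it suffices to control the pairwise collision probabilities $\Prob_{\mu'}(\alpha_i=\alpha_j)$ for $\{i,j\}\notin E$. The second, crucial, observation is that in the two-stage process the colour $\alpha_i$ of a vertex is a function only of \emph{local} randomness: of its own coin if $i$ is even, and of the colours of its (even) neighbours together with a private local coin if $i$ is odd --- the sole exception being the odd-length case, in which the two odd vertices $1$ and $n$ are adjacent and are coloured by a single joint coin that reads the colours of their even neighbours $2$ and $n-1$. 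In particular, if the local randomness that $\alpha_i$ depends on is disjoint from that of $\alpha_j$, then $\alpha_i$ and $\alpha_j$ are independent and $\Prob_{\mu'}(\alpha_i=\alpha_j)=\sum_c \tfrac14\cdot\tfrac14=\tfrac14$. Since these dependency sets have bounded size along the cycle, this already settles every pair $\{i,j\}$ whose cyclic distance exceeds a fixed constant, giving value $\tfrac14$.

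It then remains to treat the $O(1)$ remaining ``local'' patterns of the relative position of $i$ and $j$, each by a direct conditional computation. The central pattern is precisely the asserted exception: if $i,j$ are odd next-to-nearest neighbours they have a unique common (even) neighbour $v$, and conditioning on the colours of the (at most three) even vertices adjacent to $i$ or $j$ renders $\alpha_i,\alpha_j$ conditionally independent; a short case split on whether these colours coincide shows that the conditional collision probability is $\tfrac13$ in \emph{every} case --- e.g.\ it is $3/9$ when all three colours agree, and $\tfrac13\cdot\tfrac12+\tfrac23\cdot\tfrac14=\tfrac13$ in the generic case --- whence $\Prob_{\mu'}(\alpha_i=\alpha_j)=\tfrac13$. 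For every other local pattern (odd--even non-adjacent, odd--odd at cyclic distance $3$ or $4$, and the finitely many patterns involving one of the special vertices $1,n$ in the odd-length case, such as $\{1,n-1\}$ or $\{1,n-2\}$) the same conditioning yields an explicit rational number, which one verifies lies strictly between $0$ and $\tfrac13$. Taking $a'$ to be the minimum and $b'$ the maximum of this finite list of non-exceptional values gives $0<a'<b'<\tfrac13$, proving the claim.

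The step I expect to be the main obstacle is the odd-length case: there the joint colouring of $\{1,n\}$ breaks the clean ``each odd vertex depends only on its own even neighbours'' picture near the wrap-around, so the independence argument does not apply verbatim to pairs touching $\{1,n\}$ or the even vertices $2,n-1$. One must compute the law of $\alpha_1$ conditioned on $c_2$ (and on $c_{n-1}$), and check in particular that the resulting boundary values --- this is where a quantity such as $\tfrac{9}{28}$ appears --- stay strictly below $\tfrac13$, which is exactly what keeps $b'<\tfrac13$. A minor additional point is to pin down the term ``next-to-nearest neighbours'' and to confirm that in the odd-length case $\{1,n\}$ is the only odd--odd edge, so that pairs like $\{1,3\}$ and $\{n-2,n\}$ are still covered by the $\tfrac13$ computation above (one checks that the extra joint coin does not alter the value there).
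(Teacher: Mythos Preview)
Your proposal is correct and follows essentially the same route as the paper: use the locality of the sampling procedure to see that vertices with disjoint ``parent'' sets in $V_{\rm even}$ are independent (collision $\tfrac14$), then handle the finitely many remaining local patterns by direct computation, with extra care at the wrap-around in the odd-$n$ case (where indeed the value $\tfrac{9}{28}$ shows up for $\{n,2\}$, and the paper also records $\tfrac{19}{84}$ for $\{n,3\}$).

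One small correction to your $\tfrac13$ computation for odd next-to-nearest neighbours: if you literally condition on the full triple $(c_u,c_v,c_w)$ of even colours, the conditional collision probability is \emph{not} $\tfrac13$ in every case --- it is $\tfrac12$ when $c_u=c_w\neq c_v$ and $\tfrac14$ when all three are distinct --- although these do average out to $\tfrac13$. Your formula $\tfrac13\cdot\tfrac12+\tfrac23\cdot\tfrac14$ is the sub-average inside the case $c_u\neq c_v,\ c_w\neq c_v$, not the fully-conditioned value. The paper avoids the case split entirely with a two-line colour-invariance argument: $\Prob(\alpha_i=\alpha_j)=\Prob(\alpha_i=0\mid\alpha_j=0,\alpha_k=1)=\Prob(\alpha_i=0\mid\alpha_k=1)=\tfrac13$, the middle equality being conditional independence of $\alpha_i,\alpha_j$ given $\alpha_k$, and the last using only that $\alpha_i\neq\alpha_k$ together with symmetry of the three remaining colours. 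This slicker argument also transparently survives the joint coin at $\{1,n\}$ in the odd case, confirming your expectation that $\{1,3\}$ and $\{n-2,n\}$ still give $\tfrac13$.
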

\begin{proof}
  Let us now analyze the collision probability in $\mu'$. For that,
  we use the following notation. With any vertex $i\in V_{\rm odd}$, we
  associate two ``parent vertices'' in $j,k \in V_{\rm even}$, on which
  the coloring of $i$ depends. We will also treat any $j\in
  V_{\rm even}$ as its own parent. Note that since $n\ge 6$, then in the
  even case, two odd vertices can have at most one common parent,
  while in the odd case only the $1,n$ vertices have two common
  parents. As colors in $V_{\rm even}$ are chosen independently, it is
  clear that whenever $i,j$ do \emph{not} share a common parent,
  $\Prob_{\mu'}(\alpha_i=\alpha_j) = \frac{1}{4}$. We therefore need
  to consider the case where $\{i,j\}\notin E$ but they share a
  common parent. We shall consider the cases of $n$ even and $n$ odd
  separately.

  When $n$ is even, this can only happen when $i,j\in V_{\rm odd}$
  and are adjacent to $k\in V_{\rm even}$. By color invariance, 
  \begin{align*}
    \Prob_{\mu'}(\alpha_i=\alpha_j)
      = \Prob_{\mu'}(\alpha_i=0|\alpha_j=0, \alpha_k=1) .
  \end{align*}
  But by construction, $\Prob_{\mu'}(\alpha_i=0|\alpha_j=0,
  \alpha_k=1) = \Prob_{\mu'}(\alpha_i=0|\alpha_k=1)$. Using color
  invariance once more, we conclude that
  $\Prob_{\mu'}(\alpha_i=0|\alpha_k=1)=1/3$, from which we conclude
  that $\Prob_{\mu'}(\alpha_i=\alpha_j) = 1/3$.

  When $n$ is odd, we also need to consider the cases where (i)
  $i=n\in V_{\rm odd}$ $j=2\in V_{\rm even}$, (ii) $i=1$, $j=3$ (iii) $i=n$,
  $j=3$ and (see the odd case in \Fig{fig:loops}):
  \begin{description}
    \item [case (i):] In this case the colors of the odd vertices
      $n,1$ are determined by the colors of the parents $2,n-1$.  If
      $\alpha_2=\alpha_{n-1}$, then $\Prob(\alpha_n=\alpha_2)=0$.
      Therefore,
      \begin{align*}
        \Prob_{\mu'}(\alpha_n=\alpha_2) = \Prob_{\mu'}(\alpha_n=\alpha_2 |
        \alpha_{n-1}\neq \alpha_2)
          \cdot \Prob_{\mu'}(\alpha_{n-1}\neq \alpha_2).
      \end{align*}
      Invoking color invariance, we have:
      \begin{align*}
        \Prob_{\mu'}(\alpha_n=\alpha_2 |\alpha_{n-1}\neq \alpha_2)
         = \Prob_{\mu'}(\alpha_n=0 |\alpha_{n-1}=1, \alpha_2=0).
      \end{align*}
      When $\alpha_{n-1}=1$ and $\alpha_2=0$, then there are only 7 
      possible colorings of the (odd) vertices
      $(\alpha_n,\alpha_1)$, which are: $(0,1), (0,2), (0,3),
      (2,1),(2,3), (3,1),(3,2)$. Three of these colorings have
      $\alpha_n=0$, and so we deduce $\Prob_{\mu'}(\alpha_n=0
      |\alpha_{n-1}=1, \alpha_2=0)=\frac{3}{7}$. Finally, as
      $n-1$,$2$ are even vertices, $\Prob_{\mu'}(\alpha_{n-1}\neq
      \alpha_2)=\frac{12}{16} = \frac{3}{4}$. All together, we get
      \begin{align*}
        \Prob_{\mu'}(\alpha_n=\alpha_2) = \frac{3}{7}\cdot\frac{3}{4}
         = \frac{9}{28} \simeq 0.321.
      \end{align*}

     \item [case (ii):] Here the analysis is identical to the case
       when the cycle is even. Invoking color invariance, 
       \begin{align*}
         \Prob_{\mu'}(\alpha_1=\alpha_3)
           = \Prob_{\mu'}(\alpha_1=0|\alpha_3=0, \alpha_2=1).
       \end{align*}
       By construction, $\Prob_{\mu'}(\alpha_1=0|\alpha_3=0,
       \alpha_2=1) = \Prob_{\mu'}(\alpha_1=0|\alpha_2=1)$.
       Invoking color invariance once more, we conclude that
       $\Prob_{\mu'}(\alpha_1=0|\alpha_2=1)=1/3$ and therefore,
       $\Prob_{\mu'}(\alpha_1=\alpha_3) = 
       \Prob_{\mu'}(\alpha_i=\alpha_j) =
       \Prob_{\mu'}(\alpha_1=0|\alpha_2=1) = \frac{1}{3}$.
   
     \item [case (iii):] Here we can use the result of case (i). Let
       $k=2$ be the parent of both $i=n$ and $j=3$. Then
       \begin{align*}
         \Prob_{\mu'}(\alpha_n=\alpha_3) 
           &= \Prob_{\mu'}(\alpha_n=0|\alpha_3=0, \alpha_2=1)
           = \Prob_{\mu'}(\alpha_n=0|\alpha_2=1)\\
           &= \frac{\Prob_{\mu'}(\alpha_n=0,\alpha_2=1)}
             {\Prob_{\mu'}(\alpha_2=1)},
       \end{align*}
       where the first equality follows from color invariance, the
       second from construction, and the third from definition of
       conditional probability. Next, using color invariance once
       again, we conclude that $\Prob_{\mu'}(\alpha_2=1) = 1/4$ and
       \begin{align*}
         \Prob_{\mu'}(\alpha_n=0,\alpha_2=1) 
           = \frac{1}{12}\Prob_{\mu'}(\alpha_n\neq\alpha_2)
           = \frac{1}{12}\big(1-\Prob_{\mu'}(\alpha_n=\alpha_2)\big).
       \end{align*}
       Using the result of case (i),
       $\Prob_{\mu'}(\alpha_n=\alpha_2)=\frac{9}{28}$ and therefore
       \begin{align*}
         \Prob_{\mu'}(\alpha_n=\alpha_3)
         = 4\cdot\frac{1}{12}\cdot(1-\frac{9}{28}) =
         \frac{19}{84}\simeq 0.226 .
       \end{align*}
  \end{description}
    
  This concludes the proof of the claim.
\end{proof}

To finish the proof of \Lem{lem:cycle}, we need to take care of the
cases where the collision probability is $\frac{1}{3}$. Here the
solution is simple: we define $\mu$ to be the uniform convex
combination of all distributions $\mu'$ that we obtain by shifting
the vertices on the cycle by $2,3,\ldots, n-1$ steps. Necessarily,
the problematic $i,j$ where the original collision probability was
$\frac{1}{3}$ will be averaged with shifts in which it is smaller
than $\frac{1}{3}$, and consequently all the collision probabilities
are in some range $[a,b]$ for $0<a<b<\frac{1}{3}$.

\subsection{Proof of \Lem{lem:third-kind}}
\label{sec:third-kind-proof}

Let $G=(V,E)$ be a graph of the third kind (\Def{def:third-kind}).
We shall build our $\lambda$-solution $X=\sum_\ualpha \mu(\ualpha)
D_\ualpha$ by defining the underlying probability distribution $\mu$
of 4-colorings over the vertices as a convex combination of two
such distributions $\mu_{I}$ and $\mu_{II}$.

By definition, we can write the vertices in $V$ as $V=V_J\cup V_{\rm
mid} \cup V_{\rm op}$, where $V_J$ are the original vertices of $J$,
$V_{\rm mid}$ are the mid vertices added to $V'$ and $V_{\rm op}$ 
are the optional vertices added to $V'''$ that are connected to the
mid vertices. We first define the distribution $\mu_I$ iteratively
as the following stochastic process:
\begin{enumerate}
  \item The colors of the vertices in $V_J$ are picked 
    independently and uniformly randomly from $\{0,1,2,3\}$.
  
  \item We uniformly pick a coloring of the mid vertices $V_{\rm mid}$ 
    that is consistent with the coloring of its $V_J$ neighbors
    
  \item We uniformly pick a coloring of $V_{\rm op}$ that is consistent
    with the coloring of its $V_{\rm mid}$ neighbor.
\end{enumerate}
By Corollary~\ref{corol:lambda-min}, the positivity of the marginals
of any distribution $\mu'$ depends on the collision probability
$\tau_{\mu'}^{(ij)}= \Prob_{\mu'}(\alpha_i=\alpha_j)$, which has to
be in $(0,1/3)$ for any $\{i,j\}\notin E$. Let us calculate it for
$\mu_I$. By construction, the colors of the $V_J$ vertices are
independent, while the color of a mid vertex $i\in V_{\rm mid}$ depends
only on the coloring of its two neighbors in $V_J$. Similarly, the
color of $i\in V_{\rm op}$ depends on the color of its neighbor in $j\in
V_{\rm mid}$, which, in turn, depends only the colors the two neighbors
of $j$ in $V_J$. We can therefore associate with each vertex in
$V_{\rm mid}\cup V_{\rm op}$ two ``parent'' vertices in $V_J$ on which they
depend, and define every vertex in $V_J$ be its own parent. With the
above notation, it is clear that the colors of vertices with no
common parent are completely independent. For such vertices $i,j$,
it is therefore clear that $\Prob_{\mu_I}(\alpha_i=\alpha_j) =
\frac{1}{4}$. Let us analyze the remaining cases.

\begin{figure}
  \begin{center}
    \includegraphics[scale=0.2]{third-kind-vertices.png}
  \end{center}
  \caption{The special cases of vertices coloring treated in the
  proof of \Lem{lem:third-kind}. (a) $i,j\in V_{\rm mid}$ share a common
  parent $k \in V_J$. (b) $i\in V_{\rm op}$ and $j\in V$ share a common
  parent.}
\label{fig:third-kind-vertices}
\end{figure}

\begin{itemize}
  \item If $i,j\in V_{\rm mid}$ and share a common parent
    $k\in V_J$ (\Fig{fig:third-kind-vertices}(a)).  In such case,
    the color invariance of $\mu_I$ implies that
    $\Prob_{\mu_I}(\alpha_i=\alpha_j) = \Prob_{\mu_I}(\alpha_i=0|
    \alpha_j=0)$. But if $\alpha_j=0$, then clearly $\alpha_k\neq
    0$. Assume without loss of generality that $\alpha_k=1$. In such
    case, $\alpha_i$ is picked uniformly by one of the remaining $3$
    colors which is different its other parent. But since the other
    parent color is picked independently, we conclude that
    $\Prob_{\mu_I}(\alpha_i=0|\alpha_j=0)=\frac{1}{3}$.
    
  \item If $i\in V_{\rm op}$ and $j\in V$ share a parent and
  $\{i,j\}\notin E$ (\Fig{fig:third-kind-vertices}(b)). In such case,
  let $k\in V_{\rm mid}$ be the neighbor of $i$. Then
  \begin{align*}
    \Prob_{\mu_I}(\alpha_i=\alpha_j) 
      &= \Prob_{\mu_I}(\alpha_i=\alpha_j|\alpha_j=\alpha_k)
        \cdot\Prob_{\mu_I}(\alpha_j=\alpha_k)\\
      &+ \Prob_{\mu_I}(\alpha_i=\alpha_j|\alpha_j\neq\alpha_k)
        \cdot\Prob_{\mu_I}(\alpha_j\neq\alpha_k)
  \end{align*}
  By construction, since $i,k$ are neighbors, and $i$ is colored
  uniformly by any color that is different than that of $k$, we have
  $\Prob_{\mu_I}(\alpha_i=\alpha_j|\alpha_j=\alpha_k) =0$ and
  $\Prob_{\mu_I}(\alpha_i=\alpha_j|\alpha_j\neq \alpha_k)
  =\frac{1}{3}$. Therefore,
  \begin{align*}
    \Prob_{\mu_I}(\alpha_i=\alpha_j)
      &= \Prob_{\mu_I}(\alpha_i=\alpha_j|\alpha_j\neq\alpha_k)
        \cdot\big(1-\Prob_{\mu_I}(\alpha_j = \alpha_k)\big)\\
      &= \frac{1}{3}\cdot\big(1
        - \Prob_{\mu_I}(\alpha_j =\alpha_k)\big) .
  \end{align*}
  The collision probability $\Prob_{\mu_I}(\alpha_i=\alpha_j)$ is
  therefore given as a function of the collision probability 
  $\Prob_{\mu_I}(\alpha_j =\alpha_k)$, where now $k\in V_{\rm mid}$ and
  $j,k$ share a common parent. We now consider the $3$ cases,
  illustrated in \Fig{fig:third-kind-vertices}(b). 
  \begin{enumerate}
    \item When $j\in V_J$, then $(j,k)\in E$ and so 
      $\Prob_{\mu_I}(\alpha_j =\alpha_k)=0$, which yields
      $\Prob_{\mu_I}(\alpha_i=\alpha_j)=\frac{1}{3}$.
    
    \item When $j\in V_{\rm mid}$, then by the previous calculation
      $\Prob_{\mu_I}(\alpha_j =\alpha_k)=\frac{1}{3}$, which yields
      $\Prob_{\mu_I}(\alpha_i=\alpha_j)=\frac{1}{3}\cdot\frac{2}{3}
      = \frac{2}{9}$.
      
    \item When $j\in V_{\rm op}$, then by the above result,
      $\Prob_{\mu_I}(\alpha_j =\alpha_k)=\frac{2}{9}$, hence
      $\Prob_{\mu_I}(\alpha_i=\alpha_j)=\frac{1}{3}\cdot\frac{7}{9}
      = \frac{7}{27}$.
  \end{enumerate}
\end{itemize}

To conclude, for all $\{i,j\}\notin E$, the collision probability is
either $\frac{1}{4}, \frac{2}{9}$, or $\frac{7}{27}$, except for the
cases (a) when $i,j\in V_{\rm mid}$ and share a common parent, and (b)
$i\in V_{\rm op}$ and $j\in V_J$ is its parent. In these cases, the
collision probability is $\frac{1}{3}$. 

Recall that in order to show strict positivity, the collision
probability must be bounded above from $0$ and below from
$\frac{1}{3}$. We therefore need to ``fix'' the two $\frac{1}{3}$
cases above. We do that by constructing a second coloring
distribution $\mu_{II}$ in which the collision probability for these
two cases is bounded away from 1/3 from below. Then taking the
convex combination $\mu = (1-\eps)\mu_I + \eps\mu_{II}$ for
sufficiently small (but constant) $\eps$ will prove the lemma. 

To motivate the definition of $\mu_{II}$, let us concentrate on the
first problematic case (case (a)). The simplest solution would be to
find a 4-coloring of the mid vertices such that for every pair of
vertices $i,j\in V_{\rm mid}$ with a common $V_J$ neighbor have
different colors. This is the same as asking for a legal 4-coloring
of the edges of the graph $J$ (meaning that adjacent edges have
different colors). Such an edge coloring does not necessarily
exists, but Vizing's theorem\cc{ref:Vizing1965} asserts that there
exists a legal edge coloring with 5 colors.  In order not to confuse
the coloring of the edges with the coloring of the vertices, we
shall refer to these $5$ colors as ``labels''. We then could map
randomly the 5 edge labels to the 4 vertex colors. In fact, it is
more useful to map the labels into only 3 vertex colors, this would
enable us to deal easily with the second problematic case. Though
the collision probability will be positive, it will still be bounded
away from below from 1/3, which is what we want.

Formally, we construct $\mu_{II}$ as follows. We fix a legal
5-coloring of the edges. We pick a uniformly random mapping $m :
\{a,b,c,d,e\} \rightarrow \{0,1,2\}$ which satisfies $1 \leq
|m^{-1}(y)| \leq 2$, for $0\leq y \leq2$, where $|m^{-1}(y)|$ is the
number of labels that are mapped to the color $y$. Then by
definition, the color of a mid vertex $i$ is $m(\ell)$, where $\ell$
is the label of the parent edge of $i$ in the fixed edge coloring.
We then color the $V_J$ vertices with the color 3.  Then for each
vertex in $V_{\rm op}$ we choose uniformly randomly between the two
colors in $\{0,1,2\}$ that are different from the color of its
neighbor in $V_{\rm mid}$. Finally, we symmetrize the resultant
distribution by summing over all possible $4!=24$ color permutations
(see \Sec{sec:4-coloring}). 

Let us analyze the collision probabilities in $\mu_{II}$ for the
problematic cases a,b above. In case b, we have $i\in V_{\rm op}$ and
$j\in V_J$. By construction, the $V_J$ vertices are colored
differently that the $V_{\rm mid}, V_{\rm op}$ vertices, and so here the
collision probability is $0$. In case a, $i,j\in V_{\rm mid}$ are
associated with adjacent edges in $J$. In such case, the collision
probability is the probability that the two different labels of the
edges were mapped by $m$ to the same color, which is given by
\begin{align*}
  \Prob_{\mu_{II}}(\alpha_i=\alpha_j) 
    &= \Prob_{\mu_{II}} (m^{-1}(\alpha_j)
   = m^{-1}(\alpha_i) ~|~ |m^{-1}(\alpha_i)| = 2)
     \cdot \Prob_{\mu_{II}}(|m^{-1}(\alpha_i)| = 2) \\
   &= \frac{4}{5}\cdot \frac{1}{4} = \frac{1}{5} .
\end{align*}

Having established that in all problematic cases,
$\Prob_{\mu_{II}}(\alpha_i=\alpha_j) \le \frac{1}{5}$, we use
$\eps=\frac{1}{20}$ and define
\begin{align}
  \mu \EqDef (1-\eps)\mu_I + \eps\mu_{II}.
\end{align}
It can then be verified explicitly that whenever $\{i,j\}\notin E$,
then $\Prob_\mu(\alpha_i=\alpha_j)\le 0.327 < \frac{1}{3}$.

\bibliographystyle{alpha}
\bibliography{REFS}

\end{document}